\def\ba{\begin{equation}\begin{array}{c}}
\def\ea{\end{array}\end{equation}}
\newtheorem{lemma}{Lemma}[section]
\def\Xint#1{\mathchoice
   {\XXint\displaystyle\textstyle{#1}}%
   {\XXint\textstyle\scriptstyle{#1}}%
   {\XXint\scriptstyle\scriptscriptstyle{#1}}%
   {\XXint\scriptscriptstyle\scriptscriptstyle{#1}}%
   \!\int}
\def\XXint#1#2#3{{\setbox0=\hbox{$#1{#2#3}{\int}$}
     \vcenter{\hbox{$#2#3$}}\kern-.5\wd0}}
\def\dashint{\Xint-}
\begin{document}

\begin{center}{\Large \textbf{
Effective free-fermionic form factors and the XY spin chain}}\end{center}


\begin{center}
	O. Gamayun\textsuperscript{1,2*}, N. Iorgov\textsuperscript{1,3}, Yu. Zhuravlev\textsuperscript{1}
\end{center}

\begin{center}
    {\bf 1} Bogolyubov Institute for Theoretical Physics, 03143 Kyiv, Ukraine\\
	{\bf 2} Institute for Theoretical Physics, University of Amsterdam, 1090 GL Amsterdam, The Netherlands\\
	{\bf 3} Kyiv Academic University, 03142 Kyiv, Ukraine\\
	*{oleksandr.gamayun@gmail.com}
\end{center}

\begin{center}
	\today
\end{center}


\section*{Abstract}
{\bf
We introduce effective form factors for one-dimensional lattice fermions with arbitrary phase shifts.
We study tau functions defined as series of these form factors. On the one hand we perform the exact summation and present tau functions as Fredholm determinants in the thermodynamic limit. On the other hand simple expressions of form factors allow us to present the corresponding series as integrals of elementary functions. Using this approach we re-derive the asymptotics of static correlation functions of the XY quantum chain at finite temperature.
}

\vspace{10pt}
\noindent\rule{\textwidth}{1pt}
\tableofcontents\thispagestyle{fancy}
\noindent\rule{\textwidth}{1pt}
\vspace{10pt}

\section{Introduction}

Exactly solvable one-dimensional quantum mechanical systems of interacting spins, bosons,
and fermions provide a unique platform for studying non-perturbative effects. 
The algebraic and coordinate Bethe ansatz allow one to find the wave functions \cite{Gaudin_2009} and analytically address the thermodynamic properties of these systems \cite{Takahashi_1999}.   
The matrix elements of physical operators can also be found analytically in many cases \cite{Slavnov_1989,Slavnov_2007,Pakuliak_2018,Slavnov_2020}, but the computation of correlation functions still remains quite challenging. 
For the vacuum correlation functions there are effective numerical methods based on integrability \cite{Caux_2009}. The asymptotic behaviour of the correlation functions can be 
investigated by means of effective field theory (Luttinger liquid) \cite{Giamarchi_2003}. 
The origin of this behavior has been linked to the finite-size scaling of the matrix elements computed by means of the Bethe Ansatz \cite{PhysRevB.84.045408,PhysRevB.85.155136,1742-5468-2011-12-P12010,1742-5468-2012-09-P09001,Kozlowski_2015}. For dynamical correlation functions based on this approach the corresponding effective field theory bears the name of \emph{non-linear} Luttinger liquid \cite{Imambekov_2009,RevModPhys.84.1253,10.21468/SciPostPhys.7.4.047}.

At finite temperature, or more generally at finite entropy (density of states), both the numerical and field theory approaches experience some difficulties. In the numerical approaches one has to scan a much larger portion of Hilbert space to saturate the sum rules, as the form-factors (matrix elements of the physical operators) decay exponentially with the systems size contrary to the power-law decays at zero temperatures (see for instance \cite{Panfil2013, Caux_2016}).  
The field theory approach is based mainly on the linear spectrum for the soft modes (low-energy excitations) which is valid only for very low temperatures \cite{TPrice,Karrasch_2015}.

A more rigorous approach was developed to evaluate finite temperature correlation function in integrable lattice models of Yang–Baxter type, based on the Quantum Transfer Matrix (QTM) \cite{10.21468/SciPostPhysLectNotes.16}. The notion of the thermal form factor was introduced \cite{Dugave_2013}, which turned out to be  useful for the asymptotic analysis of two-point functions \cite{Dugave_2013,Dugave_2014,Ghmann_2017}. 
In the scaling limit, thermal form factors also arise axiomatically in the context of Integrable Quantum Field Theory \cite{LECLAIR1999624,SALEUR2000602,Mussardo_2001,Castro_Alvaredo_2002,Doyon_2005,Altshuler2006,Doyon2007,Essler2009aa,Pozsgay2010}. 
Less rigorous but numerically accurate approaches are based on the thermodynamic limit of the form-factors and restricting summation to a finite number of particle-hole pairs \cite{DeNardis2015,SciPostPhys.1.2.015,DeNardis2018,Panfil2020}. 

However, the complete understanding has not been achieved and recently the QTM methods were revisited to address correlation function for the XX spin-chain 
\cite{Ghmann2020,Ghmann2019a1,Ghmann2020l}. Moreover, new systematic approaches for correlation functions in the Ising model for 
low density \cite{10.21468/SciPostPhys.9.3.033} and in the Lieb--Liniger model for the strong-coupling expansions \cite{Etienne2} have been proposed. 
The generalization of Smirnov's form factor axioms for the thermodynamic states has been formulated in Ref. \cite{CortsCubero2019} and successfully applied to the reconstruction of the generalized hydrodynamic description of the correlation functions \cite{CortesCubero2020,Cubero1}. 

In this work we develop a heuristic approach to address asymptotics of correlation functions at finite density of states. Our main motivating example is the XY spin chain in a transverse magnetic field.
On one hand these systems were analyzed extensively in literature and the exact answers for spin-spin correlation functions in terms of Toeplitz or Fredholm determinants are known. On the other hand the complexity of excitation is the same as for generic systems. As we have mentioned above, this complexity is combinatorial in nature and reflects the fact that each form factor for the thermal states is exponentially small so the number of relevant form factors is exponentially large. This makes direct computation of the corresponding sum for the correlation functions notoriously difficult and force researchers to focus at most on the two particle-hole excitations \cite{DeNardis2015,SciPostPhys.1.2.015,DeNardis2018,Panfil2020}, consider semiclassical approximations \cite{PhysRevLett.78.2220}
or develop other approximation schemes \cite{10.21468/SciPostPhys.9.3.033,Etienne2}. 

We deal with this problem in a different manner. Namely, to describe the spin-spin correlation function evaluated on a state with finite density of entropy (energy) we introduce \textit{effective} form factors for the fermions with the modified phase shift that absorbs information about the state and significantly simplifies combinatorics of excitations making it essentially analogous to the zero-temperature case. Here we have to emphasize that the expressions of form factors was inspired by the XX spin chain  \cite{Colomo1992,Colomo1993,Colomo1997,Izergin_2000}, rather than genuine spin form factors in the Ising/XY models 
\cite{Bugrii2004,Gehlen2008,Iorgov2011a,Iorgov2011b,Iorgov2011}.

We focus on the static correlation functions for which we demonstrate that after complete summation of the effective form factors series and taking the thermodynamic limit the answer can be presented in the form of Fredholm determinants. The method of form factors summation of this type was pioneered in Ref. \cite{Korepin_1990aa} for the correlation functions in the impenetrable Bose gas model (see also \cite{Korepin}). For the proper choice of the phase shift in the effective form factors
the kernels in the Fredholm determinants differ from the exact ones \cite{Izergin_2000} by the exponentially small  (in distance between spin operators) terms. Conversely,
by first taking the thermodynamic limit of the effective form factors and then performing their summation we manage to present the Fredholm determinants as integrals of elementary functions.
This kind of asymptotic behavior for models in the continuum (not the lattice) arises similarly from the solution of the Riemann--Hilbert problem for operators acting on the whole real line \cite{Slavnov_2010}. This asymptotics was conjectured to be universal for correlation functions of any gapless model of statistical mechanics at any temperature and for an arbitrary coupling constant \cite{Korepin_1997}.  

An important ingredient for our asymptotic analysis is the winding number of the state-dependent phase shift $\nu(q)$ defined as the difference across the Brillouin zone, namely 
\begin{equation}
    \nu(+\pi)- \nu(-\pi) = \delta \in \mathds{Z}.
\end{equation}
We recover the correlation length in the lattice version of the asymptotics in Ref. \cite{Korepin_1997} at $\delta=1$ and additionally give an analytic expression for the prefactor. For $\delta=0$ and $\delta = \pm 1$ 
we derive asymptotic behavior for the correlation function in the XY spin chain at finite temperature and compare it with the known answer \cite{McCoy2}. Different winding numbers correspond to different values of the magnetic field and  anisotropy. 
The winding number $|\delta|\ge 2$ does not have a direct physical interpretation in this model, but we perform 
the asymptotic analysis anyway and find the results consistent with the generalization of Szeg\H o formulas \cite{Hartwig_1969}. Moreover, we have observed a peculiar identity between Toeplitz determinants and Fredholm determinant of sine-kernel type with finite rank, which, to the best of our knowledge, is new
\begin{equation}\label{rel1}
    \det \left(1+\hat{S}_\nu +\delta \hat{V}_\nu \right)-\det\left(1+\hat{S}_\nu\right) = \det_{1\le j,k \le x}
T_{j-k},
    \end{equation}
where the operators $\hat{S}_\nu$ and $\delta \hat{V}_\nu$ are generalized sine-kernels that act on $L^2([-\pi,\pi])$ and are defined by their kernels
\begin{equation}
    S_\nu (p,q) =  \frac{e^{2\pi i \nu(p)}-1}{2\pi}\frac{\sin \frac{x(p-q)}{2}}{\sin \frac{p-q}{2}},\qquad \delta V_\nu(p,q) =-\frac{e^{2\pi i \nu(p)}-1}{2\pi} e^{-ix(p+q)/2}e^{-i(p-q)/2},
\end{equation}
and  
\begin{equation}
    T_k = -\frac{1}{2\pi}\int\limits_{-\pi}^\pi d\varphi \, e^{-i(k+1)\varphi +2\pi i \nu(\varphi)}.
\end{equation}
Notice that the right hand side of Eq. \eqref{rel1} can be also be presented as a Fredholm determinant but with the modified shifted $\nu(k)$ \cite{751142b191c84550b1c411df826db193}. If $\nu(k)$ corresponds to the XY spin chain the explicit Fredholm determinants are given in Eq.~\eqref{Gm1det}. For the same $\nu(k)$ the left hand side of Eq. \eqref{rel1} was obtained in \cite{Izergin_2000} and the right hand side in \cite{Lieb_1961,McCoy2} (as Toeplitz determinant). 

The paper is organized as follows. In Sec.~\ref{sec:one} we define the tau function together with the effective form factors and outline the derivation of the Fredholm determinant presentation resulting from the summation of form factors. The details of this derivation are presented in Appendix~\ref{sec:appSummation}.  In Sec. \ref{thermo} we study the thermodynamic limit of the form factors and argue for an explicit presentation of the form factors series as integrals of elementary functions. All necessary technical results are given in Appendices \ref{Lemma} and \ref{app:oc}. 
Sec. \ref{sec:xy} deals with the application of the general formulas to the XY spin chain. 
In Sec. \ref{Sec:gene} we discuss connection of the general result to the Toeplitz determinant and 
relations such as Eq. \eqref{rel1}. Sec. \ref{last} concludes the paper and offers an outlook.

\section{Effective form factors}
\label{sec:one}

We start with the formal definition of the static correlation function (tau function), as a form-factor series 
\begin{equation}\label{tau0}
\tau(x) \equiv \sum_{\bf q} |\langle {\bf k}|{\bf q}\rangle|^2  e^{-ix (\sum\limits_{i=1}^{N+1} k_i-\sum\limits_{i=1}^N q_i)},
\end{equation}
here the ordered set $\textbf{k}=\{k_1,\dots, k_{N+1}\}$ consists of $N+1$ distinct  shifted momenta inside the Brillouin zone ($k_i\sim k_i +2\pi n$, $n \in \mathds{Z}$) each being a solution of the transcendental equation
\begin{equation}\label{eqk}
    e^{ikL} = e^{-2\pi i \nu(k)}
\end{equation}
for a smooth function $\nu(k)$. This function plays the role of the phase shift 
and is assumed to be compatible with the Brillouin zone structure, i.e. 
\begin{equation}
    \nu(\pi) - \nu(-\pi) = \delta \in \mathds{Z}.
\end{equation}
The integer $\delta$ is the winding number (index). One can easily argue that the number of solutions of Eq. \eqref{eqk} is $L+\delta$, each root defined up to $O(1/L^2)$ terms.  

The set $\mathbf{q} = \{q_1,\dots, q_{N}\}$ is an ordered set of $N$ distinct solutions of equation  
\begin{equation}\label{eqq}
    e^{iqL}=1.
\end{equation}
Further we consider different values of $N$ and $L$, provided that the sets $\textbf{q}$ and $\textbf{k}$ are not empty. 
For given sets, motivated by the spin form factors for quantum XY chain written in the XXO basis \cite{Izergin_2000},  we postulate the following form-factor
\begin{equation}\label{overlap1}
|\langle {\bf k}|{\bf q}\rangle|^2 = -\frac{4L}{\prod\limits_{i=1}^{N+1}(1+\frac{2\pi}{L}\nu'(k_i))} \left(\prod\limits_{i=1}^{N+1}
\frac{e^{g(k_i)/2} \sin \pi \nu_i}{L} \right)^{2} 
\prod\limits_{i=1}^N e^{-g(q_i)} 
(\det D)^2 ,
\end{equation}
where $\det D$ is a trigonometric Cauchy type determinant that can be presented in two equivalent forms
\begin{equation}\label{detD}
    \det D  = \begin{vmatrix}
\cot\frac{k_1-q_1}{2} &\dots & \cot\frac{k_{N+1}-q_1}{2}\\
\vdots & \ddots & \vdots \\
\cot\frac{k_1-q_N}{2} &\dots & \cot\frac{k_{N+1}-q_N}{2}\\
1& \dots  & 1\\
\end{vmatrix} = 
\frac{\prod\limits_{i>j}^{N+1}\sin \frac{k_i-k_j}{2}\prod\limits_{i>j}^{N}\sin \frac{q_j-q_i}{2}}{\prod\limits_{i=1}^{N+1}\prod\limits_{j=1}^{N}\sin \frac{k_i-q_j}{2}}.
\end{equation}
Furthermore, since we do not specify the specific operator we will sometimes refer to Eq. \eqref{overlap1} as to the overlap, and use this term interchangeably with form factor. 

We assume that the index is of order $\delta \sim O(1)$ as both the system size and the number of particles are approaching the thermodynamic limit $N\to \infty$, $L\to \infty$ such that $N/L =1$. In this case, the summation over $\mathbf{q}$ can be performed exactly, similarly  to Ref. \cite{Korepin_1990aa} (see Appendix~\ref{sec:appSummation}). The result for the tau function reads 
\begin{equation}\label{tau1}
    \tau(x) \overset{N\to \infty}{=} \det(1 + \hat{V}+\delta \hat{V}) - \det(1 + \hat{V}),
\end{equation}
where the determinants are taken in the space $L^2(S^1)$ and the corresponding operators are defined by their kernels 
\begin{equation}\label{v1}
    V(k,q) =\frac{\sin^2(\pi \nu(k))}{4\pi} e^{g(k)} e^{-i(k+q)x/2} e^{i(k-q)/2} \frac{E(k)-E(q)}{\sin\frac{k-q}{2}},
\end{equation}
  \begin{equation}\label{v2}
      \delta V(k,q)  = 
    -    \frac{2}{\pi} \sin^2(\pi \nu(k))  e^{g(k)} e^{-i(k+q)x/2},\qquad k,q \in [-\pi,\pi)
\end{equation}
with 
\begin{equation}\label{e3}
    E(k) = \int\limits_{-\pi}^{\pi} \frac{dq}{\pi}   e^{-g(q)+ i q x}\cot\frac{q+i0-k}{2} -\frac{4ie^{-g(k)+ik x}}{e^{-2\pi i \nu(k)} -1}.
\end{equation}
The diagonal terms $k=q$ are understood as in L'Hopital's limiting procedure. Further, we impose the relation 
\begin{equation}\label{gnu}
    e^{-g(k)} = e^{-2\pi i \nu(k)} -1
\end{equation}
to present tau function as 
\begin{equation}
    \tau(x) =  \det(1+\hat{S}_\nu + \delta \hat{V}_\nu+ \hat{R}) - \det(1+\hat{S}_\nu+ \hat{R})
\end{equation}
with $\hat{S}_\nu$ being a generalized sine-kernel
\begin{equation}
    S_\nu(k,q) = \frac{e^{2\pi i\nu(k)}-1}{2\pi}e^{i(k-q)/2}\frac{\sin\frac{x(k-q)}{2}}{\sin\frac{k-q}{2}}, \qquad
    \delta V_\nu(k,q) = -\frac{e^{2\pi i\nu(k)}-1}{2\pi} e^{-ix(k+q)/2}.
\end{equation}
This way, the remainder $ \hat{R} = \hat{V} - \hat{S}_\nu$ consists of integrals in Eq. \eqref{e3}, which are
exponentially suppressed\footnote{We assume that $\exp(-2\pi i\nu(q))$ is an analytic function within some vicinity of the real line.} for large and positive $x$. Let us call the tau function with discarded $\hat{R}$ as $\tau_S$, namely
\begin{equation}\label{tauS}
    \tau_S(x) = \det(1+\hat{S}_\nu + \delta \hat{V}_\nu) - \det(1+\hat{S}_\nu). 
\end{equation}
This particular generalization of the sine-kernel is contained in the prefactor 
$(e^{2\pi i \nu(k)}-1)$ and allows one to describe a modification of the system from the vacuum state for which $\nu(q)$ is constant within the arc $k\in[-k_F,k_F]$ and zero everywhere else, to the finite-entropy state, where, for instance, for the thermal state of the fermionic system the prefactor would be proportional to the single-particle Fermi distribution function\footnote{See, for instance the discussion in Appendix A in Ref. \cite{Gamayun_2016}.}. In Sec. \ref{sec:xy} we relate this type of kernel to the static spin-spin correlations in the XY chain. Then $\nu(k)$ will depend not only on the state but also on the parameters of the model.

\section{Thermodynamic limit and direct summation of form factors}
\label{thermo} 
\subsection{Winding number $\delta =1$}

In the previous section, we considered the summation of the form factor series and subsequent taking of the thermodynamic limit. This leads to the presentation of the tau function as a Fredholm determinant. The essence of this derivation, which is outlined in Appendix~\ref{sec:appSummation}, is that each momentum $q_i\in \textbf{q}$ was treated independently. 
In this section, we focus more on the detailed structure of the ordered sets $\textbf{q}$ in the sum of Eq.~\eqref{tau0}. 
The total number of solutions of the equation $e^{iqL}=1$ inside the Brillouin zone is $L$, which can be presented as 
\begin{equation}\label{qFF1}
	q_j = \frac{2\pi}{L} \left(
	- \frac{L+1}{2} + j
	\right), \qquad j=1,2,\ldots,L.
\end{equation}
As we have already pointed out above, the number of solution of Eq.~\eqref{eqk}, depends on the winding number $\delta$. In particular, for $\delta=1$ there exist exactly $L+1$ solutions inside the Brillouin zone
\begin{equation}\label{kFF1}
	k_j = \frac{2\pi}{L} \left(
	- \frac{L+1}{2} + j - \nu_j 
	\right),\qquad \nu_j =\nu(k_j) \approx \nu(q_j), \qquad j=1,2,\ldots,L+1.
\end{equation}
If we choose the set $\textbf{k}=\{k_1\dots,k_{L+1}\}$ in Eq.~\eqref{tau0} then 
summation over $\textbf{q}$ will only involve one term 
$\textbf{q} = \{q_1,\dots q_L\}$. 
In the large $L$ limit the corresponding overlap reduces to a constant which is slightly counterintuitive from the orthogonality catastrophe point of view \cite{PhysRevLett.18.1049}.
The explicit value of this constant is given by Eq. \eqref{constd0}. 
The difference of momenta in Eq. \eqref{tau0} can be evaluated in the large $L$ limit as 
\begin{equation}
    \Delta P \equiv  \sum\limits_{i=1}^{L+1}k_i - \sum\limits_{i=1}^Lq_i\approx  \pi - \int\limits_{-\pi}^\pi \nu(q) dq.
\end{equation}
Combining these observations together we obtain explicit equality for the Fredholm determinants in Eq. \eqref{tau1}, and approximation for the generalized sine-kernel
\begin{equation}\label{tauANS1}
  \tau_S(x) \approx  \tau(x)  = \exp\left(-i\pi x +ix \int\limits_{-\pi}^\pi \nu(q) dq-\frac{1}{2} \int\limits_{-\pi}^\pi dq\int\limits_{-\pi}^\pi dk \left[\frac{\nu(q)-\nu(k) - (q-k)/2\pi}{2\sin\frac{q-k}{2}}\right]^2\right).
\end{equation}
It is interesting to note that only the periodic (i.e. having winding number $\delta=0$) part of $\nu(q)$ has entered the final answer. 

\subsection{Winding number $\delta = 0$}

For $\delta=0$ we proceed similarly to the previous subsection. 
This time however the maximal possible number of the $k_i\in \textbf{k}$ is $L$, so the maximal set $\textbf{q}$ consists of $N=L-1$ momenta. There are exactly $L$ such sets and they can be parameterized by the position of the ``hole''
\begin{equation}\label{qaFF}
    \mathbf{q}^{(a)}=\{q_1,\ldots,q_{a-1},q_{a+1},\ldots,q_L\},\qquad a=1,2,\ldots,L.
\end{equation}
The overlap is given by 
\begin{equation}\label{overlapZa}
|\langle {\bf k}|\mathbf{q}^{(a)}\rangle|^2 = 
 \frac{A[q_a]e^{g(q_a)}}{L}  \left[\frac{\Gamma(L-a+1-\nu_a)\Gamma(a+\nu_a)}{\Gamma(L-a+1-\nu_+)\Gamma(a+\nu_+)}\right]^2 \left(\frac{\pi+q_a}{\pi-q_a}\right)^{2\nu_+-2\nu_a}. 
\end{equation}
The derivation can be found in Appendix \ref{app:c2} along with the explicit expression for $A[q_a]$ (see Eq. \eqref{A0}). 
For $a\sim L$ and $L-a\sim L$  the last two factors cancel each other, which yields the following explicit expression 
\begin{equation}\label{overlapZa1}
	|\langle {\bf k}|\mathbf{q}^{(a)}\rangle|^2 = 
	\frac{e^{-2\pi i \nu_a}-1}{L}\exp\left(
	-\frac{1}{2} \int\limits_{-\pi}^\pi dq\int\limits_{-\pi}^\pi dk \left[\frac{\nu(q)-\nu(k)}{2\sin\frac{q-k}{2}}\right]^2
	- \int\limits_{-\pi}^{\pi}\nu(q)\cot\frac{q- q_a+i0}{2} dq
	\right) .  
\end{equation}
On a technical side, we have used a variation of Sokhotski--Plemelj formula
\begin{equation}\label{sp}
	\dashint_{-\pi}^\pi \nu(q) \cot\frac{q-k}{2} dq  = \int_{-\pi}^\pi \nu(q) \cot\frac{q-k+i0}{2} dq + 2\pi i \nu(k),
\end{equation}
to transform the integral in the exponential. 
For $a\sim 1$ and $L-a\sim 1$, the overlap is still $O(1/L)$, so we can replace the sum in tau function Eq. \eqref{tau0} by
an integral
\begin{equation}
	\tau(x) = e^{-ix\sum\limits_{j=1}^L(k_j-q_j)}\sum_{a=1}^L |\langle {\bf k}|\mathbf{q}^{(a)}\rangle|^2 e^{-ix q_a} =  T_0(x)Y_0(x) 
\end{equation}
with \begin{equation}\label{T0}
	T_0(x) = \exp\left(
	ix \int\limits_{-\pi}^\pi \nu(q) dq-\frac{1}{2} \int\limits_{-\pi}^\pi dq\int\limits_{-\pi}^\pi dk \left[\frac{\nu(q)-\nu(k)}{2\sin\frac{q-k}{2}}\right]^2
	\right), 
\end{equation}
and 
\begin{equation}\label{Y0}
	Y_0(x) = \int\limits_{-\pi}^{\pi} \frac{dk}{2\pi} (e^{-2\pi i \nu(k)}-1)e^{-ikx} \exp\left(- \int\limits_{-\pi}^{\pi}\nu(q)\cot\frac{q- k+i0}{2} dq\right).
\end{equation}
Equivalently we may re-write $Y_0(x)$ as a contour integral in the variable $z=e^{ik}$
\begin{equation}
	Y_0(x) = \frac{1}{2\pi i} \oint_{C_>} \frac{dz}{z}(e^{-2\pi i \nu(k)}-1) z^{-x}\mathfrak{S}(z),
\end{equation}
where the contour $C_>$  is a circle centered at the origin with slightly larger than unit radius and
\begin{equation}\label{S}
	\mathfrak{S}(z)=\exp\left( i\int\limits_{-\pi}^{\pi} dq\, \nu(q)  \frac{z+e^{iq}}{z-e^{iq}}
	\right).
\end{equation}
We assume that $\nu(q)$ is non-singular in the region of integration, thus $\mathfrak{S}(z)$ is holomorphic outside the unit circle on the Riemann sphere, so the asymptotic for large positive integers $x$ 
is defined by the analytic behavior of $\nu(k)$ in the upper-half plane.
For example, if $e^{-2\pi i \nu(k)}$  is a meromorphic function (of $z=e^{ik}$) outside the unit circle in the complex plane having simple poles at  
$z_1$, $z_2,\ldots,$ with $1<|z_1|<|z_2|<\cdots$, then for large $x$ the leading contribution comes from the smallest pole
\begin{equation}
	Y_0(x)\approx - z_1^{-x-1} \mathfrak{S}(z_1)\, \mathrm{res}_{z=z_1}\,e^{-2\pi i \nu(k)}, \qquad z=e^{ik}.
\end{equation}
Applying this formula together with Eq. \eqref{T0}, we have an asymptotic expression for the sine-kernel Fredholm determinant for $\delta=0$
\begin{equation}\label{asympt_d0}
	\tau_S(x)\approx \tau(x) \approx -\frac{T_0(x)}{z_1^{x+1}} \mathfrak{S}(z_1) \, \mathrm{res}_{z=z_1}\,e^{-2\pi i \nu(k)}.
\end{equation}
\textbf{Remark}. Notice that even for $\delta = 1$ one could have chosen $\textbf{k}=\{k_1,\dots k_L\}$. This would not affect the derivation of the Fredholm determinants, 
but instead of one term in the form factor series as in the previous section, we still get a sum of $L$ terms. Using Appendix~\ref{app:c2} and specifically Eq. \eqref{Za111}, we obtain 
\begin{multline}
\tau(x)=e^{-ix\sum\limits_{j=1}^L(k_j-q_j)}\sum_{a=1}^L |\langle {\bf k}|\mathbf{q}^{(a)}\rangle|^2 e^{-ix q_a}  
\\ =  \tau_{\delta=1}(x)
\sum_{a=1}^Le^{ix\pi - ix q_a}\frac{\sin^2(\pi \nu_a)}{\pi^2} \frac{e^{2F(q_a)+g(q_a)}}{e^{2F(\pi)+g(\pi)}} \left[\frac{\Gamma(L-a+1-\nu_a)\Gamma(a+\nu_a)}{\Gamma(L-a+2-\nu_+)\Gamma(a+\nu_+)}\right]^2 .
\end{multline}
Here, by $\tau_{\delta=1}(x)$ we mean the r.h.s of Eq. \eqref{tauANS1}. 
Notice that contrary to the $\delta = 0$ scenario, the middle parts $a\sim L$ and $L-a\sim L$, are suppressed as $1/L^2$, so their contributions are negligible as $L\to \infty$. 
The soft-modes at the edges $a\ll L$ and $L-a\ll L$ now start to play more important role because the corresponding overlaps are $O(1)$. The prefactor in front of the Gamma functions simplifies to one and the whole series reads
\begin{equation}
\tau(x)=\tau_{\delta=1}(x)\frac{\sin^2(\pi \nu_-)}{\pi^2} \sum_{a=1}^L \left[\frac{\Gamma(L-a+1-\nu_a)\Gamma(a+\nu_a)}{\Gamma(L-a+2-\nu_+)\Gamma(a+\nu_+)}\right]^2+ O(1/L). 
\end{equation}
In order to compute this sum in the $L\to \infty$ limit we expand it at the edges and then perform the summation of the simplified expression extending the upper limit to infinity. 
Namely, the asymptotics
\begin{equation}
	\frac{\Gamma(L-a+1-\nu_a)\Gamma(a+\nu_a)}{\Gamma(L-a+2-\nu_+)\Gamma(a+\nu_+)} =  \begin{cases}
		(a+\nu_-)^{-2} &, \, a\ll L  \\
      	(L-a-\nu_-)^{-2} &,  \, L-a\ll L
	\end{cases},
\end{equation}
leads to 
\begin{equation}
	\frac{\tau(x)}{\tau_{\delta=1}(x)} = \frac{\sin^2(\pi \nu_-)}{\pi^2} \left(\sum\limits_{a=1}^\infty\frac{1}{(a+\nu_-)^2} + \sum\limits_{a=0}^\infty
	\frac{1}{(a-\nu_-)^2} \right) =\frac{\sin^2(\pi \nu_-)}{\pi^2}\sum\limits_{a=-\infty}^\infty\frac{1}{(a+\nu_-)^2} = 1.
\end{equation}
This way we restore the correct result even in the different formulation of the form-factor series. 

\subsection{Negative winding number $\delta<0$}\label{sec:negD}

Let us consider $\delta = 1-n$ for positive integers $n\in \mathds{Z}_{>}$. 
The maximal number of solutions of Eq. \eqref{eqk} is $\ell = L+\delta$. We choose all of them to comprise our set $\textbf{k}$
\begin{equation}
\textbf{k}=\{k_1,\dots k_\ell\},\qquad 	k_i = \frac{2\pi}{L} \left(-\frac{L+1}{2}+i -\nu_i\right).
\end{equation}
The set $\textbf{q}^{a_1,\dots a_n}$ is obtained from the complete set $\textbf{q}$ in Eq. \eqref{qFF1} by the omission of the ``particle'' (creating a ``hole'') at positions $q_{a_i}$
\begin{equation}
	\textbf{q}^{a_1,\dots a_n}=\{q_1,\dots \hat{q}_{a_1},\dots \hat{q}_{a_n}, \dots q_L\}.   
\end{equation}
The total difference of momenta for such a state reads
\begin{equation}
	\Delta P_{a_1,\dots a_{n}} = \sum\limits_{i=1}^\ell k_i - \sum\limits_{i=1}^L q_i +\sum\limits_{i=1}^n q_{a_i} \approx \delta \pi - \int\limits_{-\pi}^\pi \nu(q) dq+\sum\limits_{i=1}^n q_{a_i} .
\end{equation} 
The corresponding overlap is analyzed thoroughly in Appendix \ref{app:xx} for $a_i\sim L$, $L-a_i\sim L$. It gives the following contribution to the  tau function \eqref{tau0}  
\begin{equation}\label{ff11}
	e^{-ix\Delta P_{a_1,\dots a_{n}}} |\langle \textbf{k}| \textbf{q}^{a_1,\dots a_n}\rangle|^2 = 	\mathcal{A}_{\delta}[\nu]\prod\limits_{i>j}^n  \left(2\sin \frac{q_{a_i}-q_{a_j}}{2}\right)^2  \prod\limits_{i=1}^n \mathcal{Y}_{a_i}  ,
\end{equation}
where 
\begin{equation}
	\mathcal{A}_{\delta}[\nu] =  \exp\left(ix\int\limits_{-\pi}^\pi \nu(q) dq -i x \delta \pi -\frac{1}{2}\int\limits_{-\pi}^\pi dq \int\limits_{-\pi}^\pi dk \left[\frac{\nu(q)-\nu(k)-(q-k)\delta /(2\pi)}{2\sin \frac{q-k}{2}}\right]^2\right)
\end{equation}
and 
	\begin{equation}
		\mathcal{Y}_a = -4\frac{\sin^2(\pi \nu(q_a))}{L} \exp\left[-ixq_a+g(q_a)-\dashint\limits_{-\pi}^\pi dq \left(\nu(q) - \delta\frac{q}{2\pi}\right)\cot\frac{q-q_a}{2}\right].
	\end{equation}
In order to evaluate the tau function we proceed similarly to Ref. \cite{lenard_TG_64,lenard_TG_66} (see also \cite{Forrester2003}). First, we notice that one can present the product of sines in \eqref{ff11} as Vandermonde determinants
\begin{multline}
	\prod\limits_{i>j}^n  \left(2\sin \frac{q_{a_i}-q_{a_j}}{2}\right)^2 = 
	\prod\limits_{k>j}^n  \left(e^{iq_{a_k}}-e^{iq_{a_j}}\right)\left(e^{-iq_{a_k}}-e^{-iq_{a_j}}\right)  \\ = \det_{1\le j,k\le n}(e^{i(j-1)q_{a_k}}) \det_{1\le j,k\le n}(e^{-i(j-1)q_{a_k}}) =
	\varepsilon_{j_1\dots j_n}\varepsilon_{j'_1\dots j'_n} e^{i(j_1-j_1')q_{a_1}} \dots e^{i(j_n-j_n')q_{a_n}},
\end{multline}
where $\varepsilon_{j_1\dots j_n}$ is a completely antisymmetric tensor; the summation over repeated indices is implied. This expression is an almost factorized, so in the second step we render summation over $q_{a_i}$ to be independent, namely
\begin{equation}
	\sum_{q_{a_1}<\dots < q_{q_n}} = \frac{1}{n!} \sum_{q_{a_1}} \dots \sum_{q_{a_n}}.
\end{equation}
This immediately allows us to write the tau function \eqref{tau0} in the thermodynamic limit 
\begin{equation}\label{ansdelta}
	\tau(x) = \det_{1\le j,k\le n}\left[ Y_{\delta}(x+j-k)\right] \exp\left(ix\int\limits_{-\pi}^\pi \nu_\delta(q) dq -\frac{1}{2}\int\limits_{-\pi}^\pi dq \int\limits_{-\pi}^\pi dk \left[\frac{\nu_\delta(q)-\nu_\delta(k)}{2\sin \frac{q-k}{2}}\right]^2\right),
\end{equation}
where $\nu_\delta(q) \equiv \nu(q) - (q+\pi)\delta/(2\pi)$ has zero winding number and $Y_\delta(x)$ 
stands for 
\begin{equation}
	Y_\delta (x) = \int\limits_{-\pi}^\pi \frac{dq}{2\pi} \left(e^{-2\pi i \nu(q)}-1\right) \exp\left(-i(x-\delta) q+i\delta\pi - \int\limits_{-\pi}^\pi dk \nu_\delta(k) \cot \frac{q-k+i0}{2} \right).
\end{equation}
The integral has been transformed using identities such as Eq.~\eqref{sp} to facilitate finding asymptotic behavior at large positive $x$. 
Indeed, the exponential is an analytic function, so after the proper deformation of the integration contour it can be dropped. This way, we demonstrate that, in fact, $Y_\delta(x)$ depends only on $\nu_\delta(x)$, namely
\begin{equation}\label{Yd}
    Y_\delta (x) = \int\limits_{-\pi}^\pi \frac{dq}{2\pi} e^{-2\pi i \nu_\delta(q)} \exp\left(-ix q - \int\limits_{-\pi}^\pi dk \nu_\delta(k) \cot \frac{q-k+i0}{2} \right).
\end{equation}
Let us emphasize that Eq. \eqref{ansdelta} gives the exact answer for the Fredholm determinants \eqref{tau1}. The asymptotic behavior for large $x$ will give also asymptotics for the generalized sine-kernel determinants \eqref{tauS}.
Similar to the treatment of the $\delta=0$, the asymptotic expansion of $Y_\delta(x)$ is connected with analytic properties of $\nu(q)$. Let us assume that the first $n$ leading terms are given by 
\begin{equation}
	Y_\delta (x) = A_1 e^{-\varkappa_1 x} + \dots + A_n e^{-\varkappa_n x} + o(e^{- \varkappa_n x}),\qquad 
	|\varkappa_i|\le |\varkappa_{i+1}|.
\end{equation}
Then the leading order of the determinant reads 
\begin{equation}\label{detY}
	\det_{1\le j,k\le n}\left[ Y_{\delta}(x+j-k)\right]  \approx \prod\limits_{i=1}^n A_i e^{-\varkappa_i x} \prod\limits_{i>j}^n\left(2\sinh\frac{\varkappa_i-\varkappa_j}{2}\right)^2.
\end{equation}
For $n=1$ we reproduce the results from the previous subsection.

\subsection{Positive winding numbers $\delta>1$}

For $\delta>1$, similar to $\delta=1$, we can keep the maximal available number of $k_{i} \in \textbf{k}$, so the r.h.s sum in Eq. \eqref{tau0} consists of one term, which is of order $O(1/L)$. This means that the corresponding Fredholm determinants in Eq.~\eqref{tau1} vanish identically. The reason for this can already be seen before going into the thermodynamic limit. Namely, first we notice that the matrix $\mathcal{A}_{ij}$ in Eq. \eqref{ffaa} can be considered on the full set of momenta $\textbf{k}=\{k_1,\dots, k_{L+\delta}\}$, which will not change the determinant's limiting value 
\begin{equation}
    \lim\limits_{L\to\infty}\det\limits_{1\le i,j\le L+\delta} \mathcal{A}_{ij} = \det(1+\hat{V}). 
\end{equation}
But since $\mathcal{A}_{ij}$ has the form of Eq. \eqref{aij} which can be schematically written as 
\begin{equation}
    \mathcal{A}_{ij} = \sum\limits_{k=1}^{L} \varphi_{q_k}(k_i)\phi_{q_k}(k_j)
\end{equation}
for some functions $\varphi$ and $\phi$. This means that the rank of this matrix is maximally $L$ and addition of the rank one matrix $\delta V$ can increase the rank to at most $L+1$. Therefore, for $\delta > 1$ 
\begin{equation}
        \det(1+\hat{V}) = \det(1+\hat{V} +\delta \hat{V}) = 0.
\end{equation}
The corresponding determinants with sine-kernels are not zero i.e. $\det(1+\hat{S}_\nu)\neq 0$. 
In this case we see that even though the difference between $\hat{V}$ and $\hat{S}_\nu$ is exponentially small, it cannot be neglected, contrary to the cases for $\delta \le 1$. 
To estimate the difference between the Fredholm determinants with two different trace class operators one can use Eqs. (4.1) and (4.2) in \cite{Bornemann_2009}. These estimates in our case are in fact too rough as they also exclude discarding terms for $\delta \le 1$, even though numerically we can check that this operation is still legit. We found the following rule of thumb to neglect the exponential corrections for the kernel: the correction should vanish faster than the resulting determinant. 
For $\delta>1$ this rule is violated, so to find the asymptotics we modify the definition of the tau function by considering summation over $\textbf{k}$ in Eq.~\eqref{tau0} instead of $\textbf{q}$, namely 
\begin{equation}\label{tauminus}
	\tau_-(x) = \sum\limits_{\textbf{k}} |\langle {\bf k}|{\bf q}\rangle|^2  e^{-ix (\sum\limits_{i=1}^{N+1} k_i-\sum\limits_{i=1}^N q_i)},
\end{equation}
where the overlaps keep their form \eqref{overlap1} but with the modified relation between $\nu(q)$ and $g(q)$, namely 
\begin{equation}\label{gnew}
	e^{-g(q)} =e^{2\pi i \nu(q)}-1. 
\end{equation}
In the thermodynamic limit this sum transforms into Fredholm determinants (see Appendix~\ref{sec:appSummation})
\begin{equation}\label{tauminusA}
	\tau_-(x) = \det(1+ \hat{V}_- +\delta \hat{V}_-) + (\Gamma -1)\det(1 + \hat{V}_-)
\end{equation}
with 
\begin{equation}
    V_-(k,q) = \frac{e^{2\pi i \nu(k)}-1}{4\pi} e^{ix(q+k)/2+i(q-k)/2} \frac{E_-(k)-E_-(q)}{\sin \frac{k-q}{2}}, \qquad   \Gamma = \int\limits_{-\pi}^\pi\frac{dk}{2\pi} e^{-ixk} (1- e^{-2\pi i \nu(k)}),
\end{equation}
\begin{equation}
    \delta V_-(k,q) =\frac{e^{2\pi i \nu(k)}-1}{2\pi } \left(E_-(q) - i\Gamma/2\right)\left(E_-(k) + i\Gamma/2\right)e^{ix(q+k)/2},
\end{equation}
\begin{equation}
     E_-(q)=\int\limits_{-\pi-i0}^{\pi-i0}\frac{dk}{4\pi}\,e^{-ixk}(e^{-2\pi i\nu(k)}-1)\cot\frac{k-q}{2}+ie^{-ixq}.
\end{equation}
For large $x>0$, we notice that $\Gamma$ is exponentially suppressed and $E_-(q) \approx ie^{-ixq}$, so $\tau_-(x)$ transforms into a generalized sine kernel Fredholm determinant Eq. \eqref{tauS} up to terms exponentially small in $x$. 
The corresponding asymptotics can be obtained in a way similar to $\delta<0$, however instead of summation over ``holes'' $q_a$ we will have summation over extra particles $k_a$. 
We demonstrate how it works for $\delta =2$. In this case the set 
$\textbf{q}$ consists of $L$ elements and the set $\textbf{k}$
of $L+1$ elements, which we parameterize by the omission one of the $L+2$ momenta from the all possible solutions of 
Eq.~\eqref{eqk}. Namely, 
\begin{equation}
    \textbf{k}^{(a)}  = \{k_1,\ldots,k_{a-1},k_{a+1},\ldots,k_{L+2}\},\qquad a=1,2,\ldots,L+2.
\end{equation}
The relative momentum of this state in the thermodynamic limit reads as 
\begin{equation}
    \Delta P = \sum\limits_{k\in     \textbf{k}^{(a)}  } k- \sum\limits_{i=1}^L q_i = 2\pi - k_a -\int\limits_{-\pi}^\pi \nu(q) dq.
\end{equation}
The corresponding overlaps are given in Appendix \ref{delta2}. Taking the thermodynamic limit we obtain the following presentation suited for the asymptotic analysis when $x\to +\infty$
\begin{equation}
    \tau_-(x) = \mathcal{A}_-\int\limits_{-\pi}^\pi \frac{dk}{2\pi} (e^{-2\pi i \nu(k)}-1) \exp\left(i k (x+2)+
    \int\limits_{-\pi}^{\pi}\left(\nu(q)-\frac{q}{\pi}\right)\cot\frac{q-k-i0}{2}dq
    \right),
\end{equation}
\begin{equation}
    \mathcal{A}_- =   \exp\left[-\frac{1}{2} \int\limits_{-\pi}^\pi dq\int\limits_{-\pi}^\pi dk \left(\frac{\nu(q)-\nu(k) - (q-k)/\pi}{2\sin\frac{q-k}{2}}\right)^2\right].
\end{equation}
For $\delta>2$ one can obtain similar determinant representation as for $\tau_-(x)$ in Eq.~\eqref{ansdelta}.

Even though we have constructed $\tau_-(x)$ to address positive indices $\delta>1$ 
it is possible to describe $\delta<1$ by the previous choice of $g(k)$ Eq.~ \eqref{gnu} and considering $x<0$. 

\section{Quantum XY spin chain and its correlation functions}
\label{sec:xy}

In this section we consider an application of the general results obtained in the previous sections to 
the derivation of large distance asymptotics of thermal spin-spin correlation functions of the 
quantum XY spin chain.
The quantum XY spin chain in a transverse field is defined by the Hamiltonian \cite{Lieb_1961,Katsura_1962}
\begin{equation}
\mathbf{H}_{\mathrm{XY}}=-\frac{1}{2} \sum_{m=1}^L \left( 
\frac{1+\gamma}{2} \sigma_m^x \sigma_{m+1}^x
+\frac{1-\gamma}{2} \sigma_m^y \sigma_{m+1}^y
+h \sigma_m^z
\right),
\end{equation}
where a periodic boundary condition for the spin operators is assumed $\sigma_{L+1}^{\alpha}=\sigma_{1}^{\alpha}$,
$\gamma$ is an anisotropy parameter, and $h$ is the strength of the magnetic field.
The Hamiltonian $\mathbf{H}_{\mathrm{XY}}$ of XY model can be considered as an anisotropic deformation of the Hamiltonian 
$\mathbf{H}_{\mathrm{XX}}$ of XX model, corresponding to $\gamma=0$.
The Hamiltonian $\mathbf{H}_{\mathrm{XX}}$ can be diagonalized in two steps: Jordan--Wigner transformation to fermionic operators and a Fourier transform to momenta representation.
To diagonalize  the Hamiltonian $\mathbf{H}_{\mathrm{XY}}$ an additional Bogoliubov transformation is needed 
\cite{Lieb_1961,Katsura_1962,Izergin_2000}
specified by the angle $\theta(p)$:
\begin{equation}
    e^{i\theta(p)} = \frac{h-\cos(p)+i\gamma \sin(p)}{\mathcal{E}(p)},\qquad 
    \mathcal{E}(p)=\sqrt{(h-\cos(p))^2+\gamma^2\sin^2(p)}.
\end{equation}
Here $\mathcal{E}(p)$ stands for the spectrum of the effective Dirac fermions $A_p$, and the  Hamiltonian $\mathbf{H}_{\mathrm{XY}}$ reduces to the free-fermionic one, namely
\begin{equation}
\mathbf{H}_{\mathrm{XY}}=\sum_{p} \mathcal{E}(p) \left(A_{p}^{+} A_{p}-1/2\right).
\end{equation}
We skip the details of the fermions boundary conditions as they are not important in the thermodynamic limit. 
We focus on the following spin-spin correlation function at finite temperature
\begin{equation}
G(m)\equiv \left\langle\sigma_{m+1}^{x} \sigma_{1}^{x}\right\rangle_{T}=\frac{\operatorname{Tr}\left(\sigma_{m+1}^{x} \sigma_{1}^{x} e^{-\beta \mathbf{H}_{\mathrm{XY}}}\right)}{\operatorname{Tr}\left(e^{-\beta \mathbf{H}_{\mathrm{XY}}}\right)} .
\end{equation}
It is the most interesting two point correlation function
as the others are either trivial in the thermodynamic limit: $\langle\sigma_{m+1}^{x} \sigma_{1}^{y}\rangle_{T}=0$, can be expressed in terms of elementary functions as $\langle\sigma_{m+1}^{z} \sigma_{1}^{z}\rangle_{T}$ (see Ref. \cite{McCoy2}), or related to $G(m)$ after the change  $\gamma\to-\gamma$ as  $\langle\sigma_{m+1}^{y} \sigma_{1}^{y}\rangle_{T}$. We follow Ref. \cite{Izergin_2000} to present $G(m)$ in the thermodynamic limit as Fredholm determinants ($m>0$):
 \begin{equation}\label{Gm2det}
G(m)=\det(1+\hat W+\widehat{\delta W})-\det(1+\hat W),
\end{equation}
where the operators $\hat{W}$, $\widehat{\delta W}$ are integral operators on  
$L^2([-\pi, \pi])$ with the kernels given by
\begin{equation}
W(p, q)=-\frac{1}{\pi} e^{\frac{i(p-q)}{2}} \frac{\sin \frac{m(p-q)}{2}}{\sin \frac{p-q}{2}}\, \omega_{F}(q),\qquad \delta W(p,q)=\frac{1}{\pi}\exp\frac{-im(p+q)}{2}\,\omega_F(q),
\end{equation}
\begin{equation}
\omega_{F}(q)=\frac{1}{2}\left(1-e^{i \theta(q)} \tanh \frac{\beta \mathcal{E}(q)}{2}\right).
\end{equation}
In this form we immediately observe that $G(m)$ can be identified with $\tau_S(m)$ defined in Eq. \eqref{tauS} with the appropriate choice of $\nu(q)$, which can be read off from the prefactor in front of the sine-kernel 
\begin{equation}
e^{2\pi i \nu(k)}= 1-2\omega_F(k)=e^{i\theta(k)}\tanh \frac {\beta \mathcal{E}(k)}{2}.
\end{equation}
\begin{figure}
    \centering
    \includegraphics[width=\textwidth]{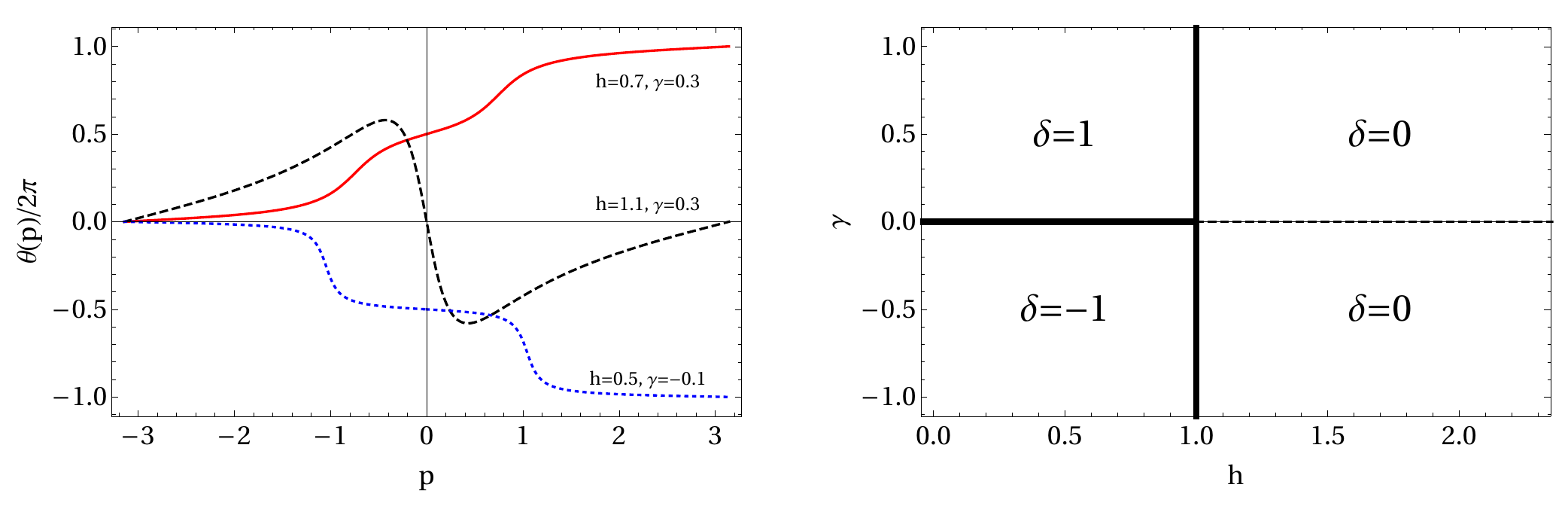}
    \caption{(left panel): 
    the dependence of Bogoliubov angles on momentum for three different points in $h{-}\gamma$-plane: 
    $h=0.7$, $\gamma=0.3$  ($\delta=1$) -- red solid,
    $h=1.1$, $\gamma=0.3$  ($\delta=0$) -- black dashed,
    $h=0.5$, $\gamma=-0.1$  ($\delta=0$) -- blue dotted;
    (right panel): three regions in $h{-}\gamma$-plane corresponding to $\delta=\pm 1$ (ferromagnetic phase with $\gamma \gtrless 0$)
    and $\delta=0$ (paramagnetic phase ).       }
    \label{windingtheta}
\end{figure}
This way, to find the large $m$ asymptotics we approximate $G(m)$ by $\tau(m)$ from Eq.~\eqref{tau1} 
and use results for form factor series obtained in the previous sections.
The analysis depends on the winding number $\delta=\nu(\pi)-\nu(-\pi)$, which can be read off from the following form of the phase shift
\begin{equation}\label{nuk}
\nu(k)=\frac{\theta(k)}{2\pi}+\frac{1}{2\pi i}\log\tanh\frac {\beta \mathcal{E}(k)}{2}.
\end{equation}
The winding number is governed by the Bogoliubov angle $\theta(\pi)-\theta(-\pi)=2 \pi \delta$. The possible values of $\delta$ are $\delta=0,\pm 1$  depending on the
anisotropy parameter $\gamma$ and the magnetic field $h$ (see Fig.~\ref{windingtheta} for the typical behaviour of the Bogoliubov angle and the phase diagram).

Notice also that Eq. \eqref{nuk} implies that the integral entering the asymptotic formulas can be presented as 
\begin{equation}\label{intnuXY}
    \int_{-\pi}^{\pi}dq\,\nu(q)=\pi\delta+\frac{1}{2\pi i}
    \int_{-\pi}^{\pi}dq\, \log\tanh\frac {\beta \mathcal{E}(q)}{2}.
\end{equation}
In Fig.~\ref{Gm} we plot exact values for the correlation function $G(m)$  (red dots) and compare them with the asymptotic formulas written explicitly below (blue curves). We see that large $m$ asymptotics gives reasonable approximation even for $m\sim 1$. In fact, to get any visual discrepancy we had to consider large negative anisotropies in the ferromagnetic phase ($\delta=-1$ and $\gamma < -1$). It turns out that in this case the asymptotic formulas for non-integer $m$ acquire nonzero imaginary part, which is discarded in the plot. For integer points the imaginary part is equal to zero.
Below we analyze each case separately and present analytical formulas for the asymptotics. These expressions turn out to be in accordance with the results of Ref. \cite{McCoy2} but have a more compact form. 

\begin{figure}
    \centering
    \includegraphics[width=\textwidth]{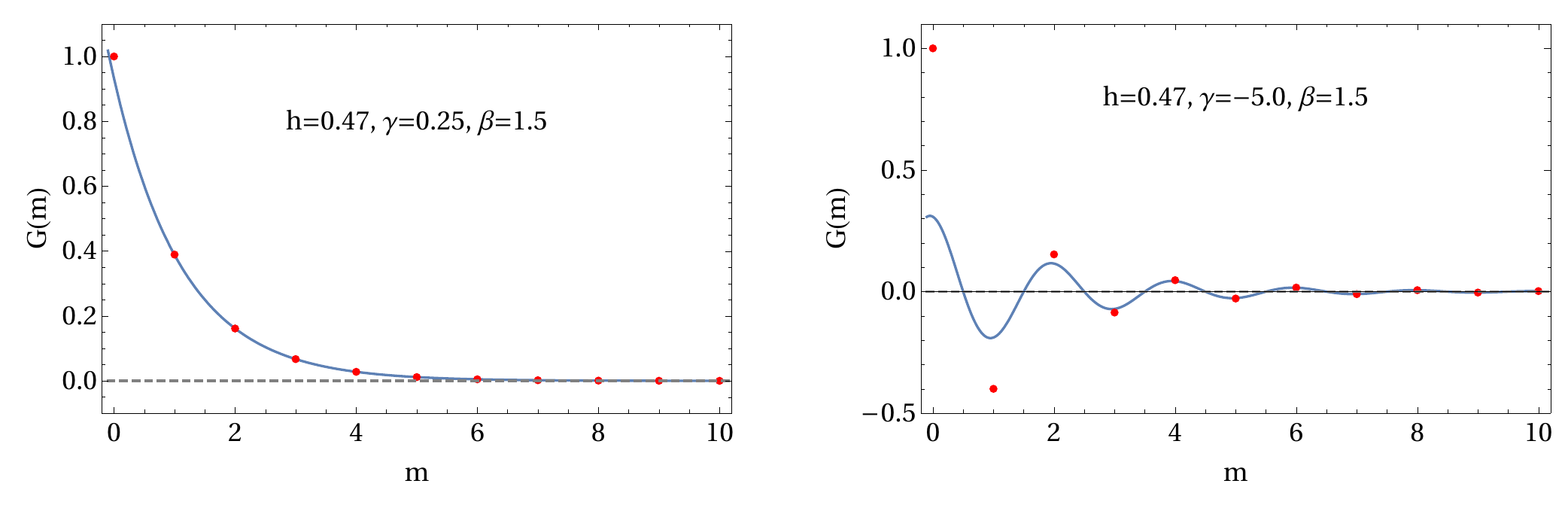}
    \caption{The exact values of the correlation function $G(m)$ (red dots) and its large distance asymptotics (blue solid curves). The left panel corresponds to  $h=0.47$, $\gamma=0.25$, $\beta=1.5$, and $\delta = +1$. The right panel corresponds to $h=0.47$, $\gamma=-5.0$, $\beta=1.5$, and $\delta = -1$.}
    \label{Gm}
\end{figure}

The results of Sec.~\ref{Sec:gene}  allow us to present the difference of the determinants in Eq. \eqref{Gm2det} as a single determinant, namely
\begin{equation}\label{Gm1det}
G(m)=\det(1+\hat W_1), 
\end{equation}
where $\hat{W}_1$ is an integral operator on  
$L^2([-\pi, \pi])$ with the generalized sine-kernel given by
\begin{equation}
W_1(p,q)= \frac{e^{2\pi i \nu_1(p)}-1}{2\pi}   \frac{\sin \frac{m(p-q)}{2}}{\sin \frac{p-q}{2}}=
-\frac{e^{i(\theta(p)-p)}\tanh \frac {\beta \mathcal{E}(p)}{2}+1}{2\pi}   \frac{\sin \frac{m(p-q)}{2}}{\sin \frac{p-q}{2}},
\end{equation}
\begin{equation}
\nu_1(p)=\nu(p) - \frac{p+\pi}{2\pi}.
\end{equation}
This result is a particular case of the relation \eqref{relrel} with \eqref{otonu}.

\subsection{Paramagnetic phase $h>1$ ($\delta=0$)}

We start our consideration with relatively large magnetic field $h>1$. 
For zero temperature such values of $h$ correspond to the paramangetic phase, while for finite temperature the corresponding $\nu(q)$ has zero winding number $\delta=0$.
This way, we use formula \eqref{asympt_d0} to find asymptotic behavior of the correlation function $G(m)$ at large $m$, namely
\begin{equation}
    G(m)=\tau_S(m)    \approx
    -T_0(m) z_1^{-m-1} \mathfrak{S}(z_1)\, \mathrm{res}_{z=z_1}\,e^{-2\pi i \nu(k)}, \qquad z=e^{ik}.
\end{equation}
where $T_0(m)$  and $\mathfrak{S}(z)$ are  given by \eqref{T0} and \eqref{S}, respectively.
The point $z_1$ is the position of the pole of $e^{-2\pi i \nu(k)}$ outside unit circle with minimal absolute value. 
To find $z_1$ we factorize 
\begin{equation}
   Q(z)= \mathcal{E}^2(k)=(h-\cos k)^2+\gamma^2 \sin^2 k = \frac{1-\gamma^2}{4z^2} (z-x_-)(z-x_+)(z-y_-)(z-y_+),
\end{equation}
\begin{equation}\label{xypm}
x^{\pm} = \frac{h-\sqrt{h^2+\gamma^2-1}}{1\pm\gamma},\qquad 
y^{\pm} = \frac{h+\sqrt{h^2+\gamma^2-1}}{1\pm\gamma}, \qquad  (x^{\pm})^{-1}=y^\mp. 
\end{equation}
The exponent of the angle $\theta(k)$ of Bogoliubov transformation can also be presented in a factorized form, which leads to 
\begin{equation}\label{emnu}
e^{-2\pi i \nu(k)} = e^{-i\theta(k)} \coth \frac{\beta \mathcal{E}(k)}{2}=-\frac{2z}{1+\gamma}\frac{\sqrt{Q(z)}\coth \frac{\beta\sqrt{Q(z)}}{2}}{(z-x_+)(z-y_+)}.
\end{equation}
It useful to present $\sqrt{Q(z)} \coth \frac{\beta}{2} \sqrt{Q(z)}$  as an infinite product
\begin{equation}\label{Qprod}
    \sqrt{Q(z)} \coth \frac{\beta}{2} \sqrt{Q(z)}=\frac{2}{\beta} 
    \frac{\prod_{n=1}^\infty \left( 1+\frac{\beta^2 Q(z)}{(2n-1)^2 \pi^2}\right)}{\prod_{n=1}^\infty \left( 1+\frac{\beta^2 Q(z)}{(2n)^2 \pi^2}\right)}.
\end{equation}
Notice that in such a form the branch cut singularities disappear manifestly. Moreover, the analysis of the poles of $e^{-2\pi i\nu(k)}$ is now a straightforward task, from which we conclude that the smallest (by the absolute value) pole outside the unit circle
is $z_1=y_+$ for all non-zero temperatures. Therefore using 
Eq. \eqref{emnu} and Eq. \eqref{Qprod} we obtain
\begin{equation}
    \mathrm{res}_{z=y_+}\,e^{-2\pi i \nu(k)}=-\frac{2}{\beta}\frac{y_+}{\sqrt{h^2+\gamma^2-1}}.
\end{equation}
Finally, taking into account \eqref{intnuXY} for $\delta=0$, the asymptotics reads
\begin{equation}
 G(m)   \approx \mathcal{A}e^{-m/\xi},
\end{equation}
where
\begin{equation}\label{xi0}
\xi^{-1}=\log y_+-\frac{1}{2\pi}\int\limits_{-\pi}^{\pi}dq\, \log\tanh \frac {\beta \mathcal{E}(q)}{2}, \qquad y_+=\frac{h+\sqrt{h^2+\gamma^2-1}}{1+\gamma},
\end{equation}
\begin{equation}\label{Aq0}
\mathcal{A}= \frac{2}{\beta \sqrt{h^2+\gamma^2-1}} \exp\left(-\frac{1}{2}\int\limits_{-\pi}^\pi dq \int\limits_{-\pi}^\pi  dp 
\left[\frac{\nu(q)-\nu(p)}{2\sin \frac{q-p}{2}}\right]^2
+i\int\limits_{-\pi}^{\pi} dq\, \nu(q)  \frac{y_++e^{iq}}{y_+-e^{iq}} \right).
\end{equation}
The sign of the magnetic field $h$ is irrelevant since it can be flipped by the conjugation of the Hamiltonian with $\sigma^x$ acting in each site. Therefore below we consider $0<h<1$.

\subsection{Ferromagnetic phase $h<1$, $\gamma>0$ ($\delta=1$)}

In the ferromagnetic phase $h<1$ with positive anisotropy $\gamma>0$ we 
use the asymptotics \eqref{tauANS1} and the integral \eqref{intnuXY} for $\delta=1$ to obtain 
\begin{equation}\label{aaaaa}
  G(m)   \approx \mathcal{A} e^{- m/\xi},
\end{equation}
with
\begin{equation}\label{xip1}
  \xi^{-1}= -\frac{1}{2\pi}\int\limits_{-\pi}^{\pi}dq \,\log\tanh \frac {\beta \mathcal{E}(q)}{2},
\end{equation}
\begin{equation}\label{Ap1}
    \mathcal{A} = \exp\left(-\frac{1}{2} \int\limits_{-\pi}^\pi dq\int\limits_{-\pi}^\pi dp \left[\frac{\nu(q)-\nu(p) - (q-p)/2\pi}{2\sin\frac{q-p}{2}}\right]^2\right).
\end{equation}
For particular values of the parameters we plot exact correlation function $G(m)$ and its asymptotics \eqref{aaaaa} in the left panel of Fig.~\ref{Gm}. 

Note that even though formulas for the correlation length in different parameter regions Eq. \eqref{xi0} and Eq. \eqref{xip1} look different, the transition $h<1$ and $h>1$ is analytic in $h$. The same is true for 
prefactors $\mathcal{A}$ given by Eq. \eqref{Aq0} and Eq. \eqref{Ap1} (see the corresponding plots in
Fig.~\ref{corrlength}). This reflects the fact that at finite temperature in one dimensional systems with short-range interactions phase transitions are absent and the physical observables are smooth functions of system parameters. 
This observation was used in Ref. \cite{Sachdev1996} to obtain correct expressions for the correlation length and prefactor for the Ising model in the scaling limit.  

\begin{figure}
    \centering
    \includegraphics[width=\textwidth]{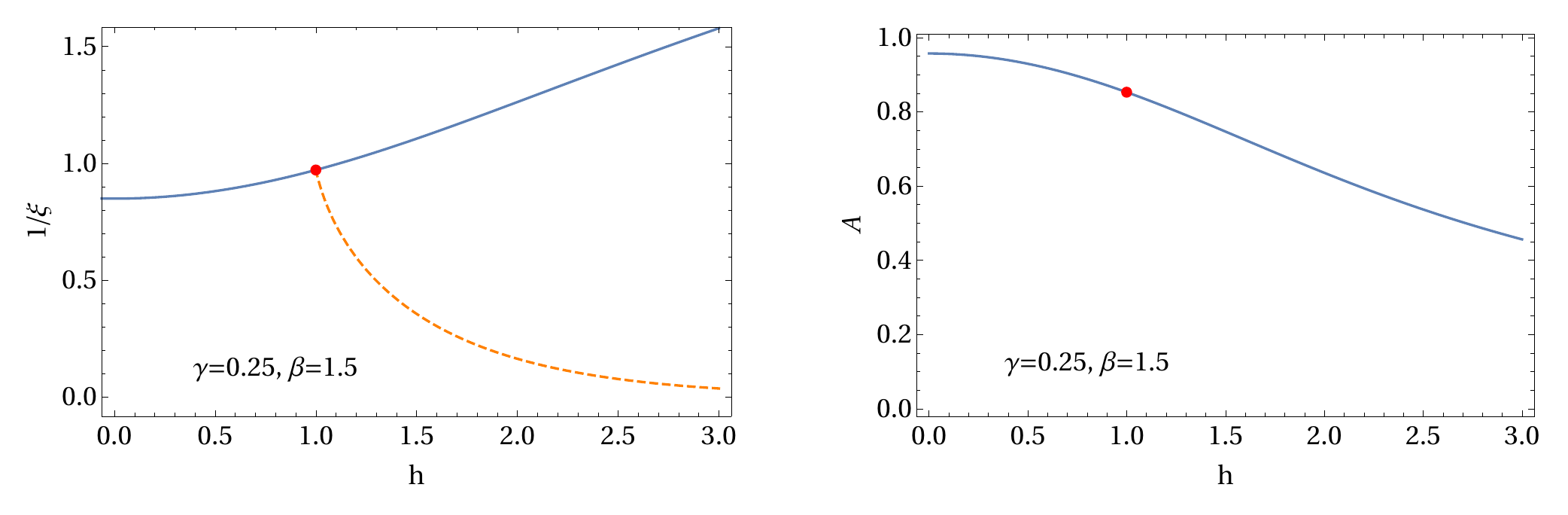}
    \caption{ The inverse correlation length (left panel) and the prefactor (right panel) for different values of magnetic field $h$. Blue solid curves correspond to Eq. \eqref{xip1} [Eq. \eqref{Ap1}] for $h<1$ and 
    Eq. \eqref{xi0} [Eq. \eqref{Aq0}] for $h>1$, for the left [right] panels, respectively. The orange line shows formal use of Eq.~\eqref{xip1} for the region $h>1$.}
    \label{corrlength}
\end{figure}

\subsection{Ferromagnetic phase $h<1$, $\gamma<0$ ($\delta=-1$)}

In this region of parameters, the correlation function $G(m)$ is given by Eq.~\eqref{ansdelta} for $n=2$.
We will need the large $m$ asymptotics of $Y_{-1}(m)$, which for $h^2+\gamma^2\ne 1$ is given by
\begin{equation}
    Y_{-1}(m)\approx A_1 e^{-\varkappa_1 m}+A_2 e^{-\varkappa_2 m},
\end{equation}
where, as seen from Eq. \eqref{xypm},
\begin{equation}
    \varkappa_1=\log x_+,\qquad
    \varkappa_2=\log y_+,
\end{equation}
\begin{equation}
    A_1=\frac{2}{\beta}\frac{1}{\sqrt{h^2+\gamma^2-1}}\frac{1}{x_+}\mathfrak{S}_{-1}(x_+), \qquad 
    A_2=-\frac{2}{\beta}\frac{1}{\sqrt{h^2+\gamma^2-1}}\frac{ 1}{y_+}\mathfrak{S}_{-1}(y_+),
\end{equation}
\begin{equation}
    \mathfrak{S}_{-1}(z)=
         \exp\left( i\int\limits_{-\pi}^{\pi} dq\,\left( \nu(q)+\frac{\pi+q}{2\pi} \right) \frac{z+e^{iq}}{z-e^{iq}}
	\right).
\end{equation}
Therefore Eq. \eqref{detY} becomes
\begin{equation}
    \begin{vmatrix}
    Y_{-1}(m) & Y_{-1}(m+1) \\
    Y_{-1}(m-1) & Y_{-1}(m)
    \end{vmatrix}
    \approx
    \frac{16}{\beta^2(1-\gamma)^2}\mathfrak{S}_{-1}(x_+)\mathfrak{S}_{-1}(y_+)
    e^{-m(\log x_+ +\log y_+)}.
\end{equation}
Finally,  the large distance asymptotic for $G(m)$ following from \eqref{ansdelta} is
\begin{equation}\label{Gasdm1}
G(m)\approx \mathcal{A} e^{-m/\xi},
\end{equation}
where
\begin{equation}
\xi^{-1}=\log x_+ +\log y_+
-\frac{1}{2\pi}\int\limits_{-\pi}^{\pi}dq\, \log\tanh \frac {\beta \mathcal{E}(q)}{2},
\end{equation}
\begin{equation}
    \mathcal{A}=\frac{16}{\beta^2(1-\gamma)^2}\mathfrak{S}_{-1}(x_+)\mathfrak{S}_{-1}(y_+)
    \exp\left( -\frac{1}{2}\int\limits_{-\pi}^\pi dq \int\limits_{-\pi}^\pi dp \left[\frac{\nu(q)-\nu(p)+(q-p)/(2\pi)}{2\sin \frac{q-p}{2}}\right]^2\right).
\end{equation}
In the case when $h^2+\gamma^2= 1$ we have $x_+=y_+$ and  the derivation is changed  slightly
(in particular, $Y_{-1}(m)\approx (B+Cm) e^{-m \log x_+}$) however the final  formula for the asymptotic of $G(m)$ is the same. Notice that for non-integer values of $m$ the right hand side of Eq. \eqref{Gasdm1} becomes a complex valued function.  We plot the typical behaviour of $G(m)$ and the real part of its asymptotics in \eqref{Gasdm1} in the right panel of Fig.~\ref{Gm}.

\section{Relation to Toeplitz determinants} 
\label{Sec:gene}
The traditional approach to the correlation functions in the XY spin chain is in presenting them via Toeplitz determinants \cite{Lieb_1961,McCoy2}. Asymptotic analysis of these structures can be performed by means of the Szeg\H o theorem \cite{Szeg1915,grenander2001toeplitz} and its generalization\footnote{Here we focus only on the smooth symbols with the only ``singularity'' given by the non-trivial winding number} by Hartwig and Fisher \cite{Hartwig_1969}. 
Let us comment on how similar structures can appear within our effective form factors approach. 
In addition to tau functions \eqref{tau0} and \eqref{tauminus} that contained different number of ``particles'' in bra- and ket- states, we define
\begin{equation}\label{tau00}
 \tau_0(x) = \sum_{\mathbf{q}} |\langle {\bf p}|{\bf q}\rangle|^2 e^{-ix \left(\sum\limits_{i=1}^{N} p_i-\sum\limits_{i=1}^N q_i\right)},
\end{equation}
where the quasi momenta $q$ are solutions of $e^{iqL}=1$, while $p$  are solutions of the following equation
\begin{equation}\label{eqp}
    e^{ipL} = e^{-2\pi i\omega(p)}.
\end{equation}
Here for convenience we have chosen a different notation for the phase shift. We focus on the case of non-positive winding numbers for this function i.e. $\omega(\pi)-\omega(-\pi) \le 0$. The corresponding form factors read
\begin{equation}\label{pq}
    |\langle {\bf p}|{\bf q}\rangle|^2  =\frac{\prod\limits_{i=1}^N e^{g_\omega(p_i)-g_\omega(q_i)} }{\prod\limits_{i=1}^{N}(1+\frac{2\pi}{L}\omega'(p_i))} \left(\prod\limits_{i=1}^{N}
	\frac{ \sin \pi \omega(p_i)}{L} \right)^{2}
	\frac{\prod\limits_{i>j}^{N}\sin^2 \frac{p_i-p_j}{2}\prod\limits_{i>j}^{N}\sin^2 \frac{q_j-q_i}{2}}{\prod\limits_{i,j=1}^{N}\sin^2 \frac{p_i-q_j}{2}}.
\end{equation}
The summation in Eq. \eqref{tau00} can be performed using techniques developed in Appendix~\ref{sec:appSummation}, which together with the identification 
\begin{equation}
    e^{-g_\omega(p)} = e^{-2\pi i \omega(p)}-1
\end{equation}
leads to the Fredholm determinant expression of $\tau_0$ 
\begin{equation}
    \tau_0 (x) = \det(1+\hat{V}_\omega),\qquad \hat{V}_\omega=\hat{S}_\omega+\hat{R}_\omega,
\end{equation}
where 
\begin{equation}
    S_\omega (p,q) =  \frac{e^{2\pi i \omega(p)}-1}{2\pi} \frac{\sin \frac{x(p-q)}{2}}{\sin \frac{p-q}{2}},\qquad R_\omega(p,q) = \frac{e^{2\pi i \omega(p)}-1}{4\pi} e^{-i(p+q)x/2}\frac{r_\omega(p)-r_\omega(q)}{\sin\frac{p-q}{2}},
\end{equation}
\begin{equation}
    r_\omega(k) = \int\limits_{-\pi}^{\pi} \frac{dq}{4\pi}   (e^{-2\pi i \omega(q)}-1)e^{i q x}\cot\frac{q+i0-k}{2}.
\end{equation}
Notice that definitions of the kernels of $\hat{V}$  and $\hat{S}_\nu$ differ from their analogues introduced in Sec. \ref{sec:one} by the conjugation with diagonal matrices, which does not change the value of the determinant. Comparing overlaps \eqref{overlap1} and \eqref{pq} (see Appendix~\ref{app:ux}) we conclude that imposing the following relation between $\nu(q)$ and $\omega(q)$  
\begin{equation}\label{otonu}
    \omega(q) = \nu(q) - \frac{q+\pi}{2\pi}\equiv \nu_{1}(q),
\end{equation}
we obtain exact equality for the tau functions, namely
\begin{equation}\label{rel}
    \det\left(1+\hat{V}_\nu + \delta \hat{V}_\nu\right) -\det\left(1+\hat{V}_\nu \right) =\det\left(1+ \hat{V}_{\nu_1} \right).
\end{equation}
Here the finite rank contribution is modified due to the conjugation with the diagonal matrices
\begin{equation}
    \delta V_\nu(p,q) = -\frac{e^{2\pi i\nu(p)}-1}{2\pi} e^{-i(x+1)p/2}e^{-i(x-1)q/2}.
\end{equation}
Similar relations can be obtained between $\tau_-(x)$ and $\tau_0(x)$ for $\delta >1$. 
For large positive $x$, functions $r_\omega(x)$ are exponentially small, so Eq. \eqref{relrel} holds 
for the generalized sine-kernels $\hat{S}_\nu$.
\begin{equation}\label{relrel}
    \det\left(1+\hat{S}_\nu + \delta \hat{V}_\nu\right) -\det\left(1+\hat{S}_\nu \right) =\det\left(1+ \hat{S}_{\nu_1} \right).
\end{equation}
In fact we can easily demonstrate that this relation is true for any positive integer $x$. 
To do so we will clarify the relation between Fredholm and Toeplitz determinants (\textit{cf.} \cite{Deift_1997,751142b191c84550b1c411df826db193}).
It is convenient to deform slightly the kernel by the set of functions 
$a_0(p)$, $a_1(p)$, ..., $a_{x-1}(p)$
\begin{equation}
   S^{a}_\nu(p,q) = \frac{e^{2\pi i \nu(p)}-1}{2\pi}  \sum\limits_{n=0}^{x-1}a_{n}(p)e^{in(q-p)}.
\end{equation}
For $a_i(q) = 1$ one can easily see that we recover the kernel of $\hat{S}_\nu$ up to conjugation with diagonal matrices, which does not affect the value of the determinant 
\begin{equation}
    \det \left(1 + \hat{S}_\nu\right) =\det \left(1 + \widehat{S^a}\right)\Big|_{a_0=a_1=\dots a_{x-1}=1}.
\end{equation}
Furthermore, we can treat $\widehat{S^a}$ as a product of two rectangular matrices
\begin{equation}
    \widehat{S^a} = \mathcal{A}\mathcal{B},\qquad  \mathcal{A}_{qn} = e^{iqn},\qquad \mathcal{B}_{np} =\frac{e^{2\pi i \nu(p)}-1}{2\pi}a_n(p) e^{-inp}.
\end{equation}
Then using the fact that $\det\left(1+\mathcal{A}\mathcal{B}\right) = \det\left(1+\mathcal{B}\mathcal{A}\right)$ we obtain a relation between the Fredholm determinant and determinant of matrix $x$ by $x$, namely
\begin{equation}
    \det \left(1 + \widehat{S^a}\right) = \det_{0\le n,m\le x-1}\left(\delta_{nm}+ T_{nm}\right),\qquad\quad T_{nm} = \int\limits_{-\pi}^\pi \frac{dq}{2\pi} a_n(q) (e^{2\pi i \nu(q)}-1)e^{-i(n-m)q}.
\end{equation}
For $a_n=1$ the matrix $T_{nm}$ transforms into the Toeplitz one, namely \begin{equation}\label{t1}
     \det \left(1 + \widehat{S^a}\right) = \det_{0\le n,m\le x-1} c_{n-m},\qquad \quad c_k = \int\limits_{-\pi}^\pi \frac{dq}{2\pi}e^{2\pi i \nu(q)}e^{-i kq}.
\end{equation} 
In order to account for the finite rank we notice that because rank-one contributions are at most linear in the determinant expansion, we can present 
\begin{equation}
     \det\left(1+\hat{S}_\nu + \delta \hat{V}_\nu\right) -\det\left(1+\hat{S}_\nu \right) = \frac{\partial}{\partial \alpha }\det\left(1+\hat{S}_\nu + \alpha \delta \hat{V}_\nu\right) \Big|_{\alpha =0}.
\end{equation}
To account for the finite $\alpha$ one must choose $a_0(q) = 1 - \alpha e^{- ix q}$ and $a_n(q)=1$ for $n\geq 1$, therefore  
\begin{equation}
    \det\left(1+\hat{S}_\nu + \alpha \delta \hat{V}_\nu\right) = \det \left(1 + \widehat{S^a}\right) = \det\begin{pmatrix}
        c_{0}-\alpha c_{x} & c_{-1}-\alpha c_{x-1} & \ldots & c_{-x+1}-\alpha c_{1}\\ 
        c_{1} & c_{0} & \ldots & c_{-x+2}\\ 
        . & . &  & . \\ 
        . & . & \ddots & . \\ 
        . & . &  & . \\ 
        c_{x-1} & c_{x-2} & \ldots & c_0
    \end{pmatrix}.
\end{equation}
Since we are looking only to the terms linear in $\alpha$, we can leave only terms that are proportional to $\alpha$ in the first row. Moreover we can replace this row with the last one. This way we obtain
\begin{equation}
    \frac{\partial}{\partial \alpha }\det\left(1+\hat{S}_\nu + \alpha \delta \hat{V}_\nu\right) \Big|_{\alpha =0}=
    (-1)^x\det\begin{pmatrix}
        c_{1} & c_{0} & \ldots & c_{-x+2}\\ 
        c_{2} & c_{1} & \ldots & c_{-x+3}\\ 
        . & . &  & . \\ 
        . & . & \ddots & . \\ 
        . & . &  & . \\ 
        c_{x} & c_{x-1} & \ldots & c_1
    \end{pmatrix} = \det_{0\le n,m\le x-1} \tilde{c}_{n-m},
\end{equation}
where \begin{equation}
    \tilde{c}_k = -\int\limits_{-\pi}^\pi \frac{dq}{2\pi}e^{2\pi i \nu(q)}e^{-i (k+1)q} = 
    \int\limits_{-\pi}^\pi \frac{dq}{2\pi}e^{2\pi i \nu_1(q)}e^{-i  k q}.
\end{equation}
Here we see the shift $\nu(q) \to \nu_1(q)$ as predicted from the finite size scaling of the form factors in Eq. \eqref{otonu}. This shift together with Eq. \eqref{t1} completes the proof of Eq. \eqref{relrel}. 

Let us also comment on how results of Sec.~\ref{sec:negD} reproduce Hartwig and Fisher asymptotic behaviour (Theorem~4 in Ref.~\cite{Hartwig_1969}). As $\nu_\delta(q)$ has zero winding number we can expand it as 
\begin{equation}
    \nu_\delta(q) =  \frac{-1}{2\pi i} \sum\limits_{n=-\infty}^\infty k_n  e^{iq n}. 
\end{equation}
Then the integral in the exponential in Eq. \eqref{Yd} can be evaluated as 
\begin{equation}
    - \int\limits_{-\pi}^\pi dp\, \nu_\delta(p) \cot \frac{q-p+i0}{2} =
    \int\limits_{-\pi}^\pi \frac{dp}{2\pi} \frac{e^{i(q+i0)}+e^{ip}}{e^{i(q+i0)}-e^{ip}} 
    \sum\limits_{n=-\infty}^\infty k_n  e^{ip n}
    =-k_0- 2 \sum\limits_{n=1}^\infty e^{iq n} k_n.
\end{equation}
In this derivation we used that $|e^{i(q+i0)}|<1$ and expanded the denominator as a geometric series. Substituting this result back into Eq.~\eqref{Yd} we immediately see that $Y_\delta(x) =l_x$, where the Fourier modes $l_m$ are defined through the relation 
\begin{equation}
    \exp\left(\sum\limits_{n=1}^\infty \left(k_{-n} e^{-iq n} -k_{n}e^{iq n} \right)\right) = \sum\limits_{m=-\infty}^\infty l_m e^{imq}.
\end{equation}
Finally, expressing double integral in the asymptotic expression Eq. \eqref{ansdelta}
\begin{equation}
-\frac{1}{2}\int\limits_{-\pi}^\pi dq \int\limits_{-\pi}^\pi  dp 
\left[\frac{\nu_\delta(q)-\nu_\delta(p)}{2\sin \frac{q-p}{2}}\right]^2 = \sum\limits_{n=1}^\infty n k_n k_{-n},
\end{equation}
we obtain the statement of Theorem 4 in Ref. \cite{Hartwig_1969}.
\section{Summary and Outlook}
\label{last}

In this work we have introduced the form factors (overlaps) to simulate the static correlation functions for the states with finite entropy. The state was determined by the phase shift function $\nu(q)$. For the traditional approaches dealing with the finite entropy states is notoriously difficult but for our approach it is  rather advantageous situation, since almost all available quantum numbers are occupied which tremendously simplifies the computation of form factor series.
This allows us, in particular, to re-derive known asymptotics for the static two point correlators in the XY spin chain and present them in a more compact form. We hope that the simplicity of this approach will make it possible to obtain the full asymptotic expansion at large distances. 

Apart from the thermal state we can apply our approach to the states resulting from the long time evolution after a quench \cite{PhysRevLett.106.227203,Calabrese2012,Calabrese2012a,Caux_2013,De_Nardis_2014,PhysRevA.96.023611}, to models of 1D anyons \cite{P_u_2007,P_u_2008,P_u_2009,P_u_2009a,P_u_2015,PhysRevA.89.013609}, or mobile impurity models \cite{Gamayun_2016,gamayun2015impurity,Gamayun_2018}. This can be done by the appropriate modification of the phase shift function. We will discuss it
elsewhere.

It is interesting to note that $\nu(q)$ is apparently connected with the auxiliary functions that appears in the Quantum Transfer Matrix (QTM) approach and specifies the Bethe roots for QTM \cite{Ghmann2020,Ghmann2019a1,Ghmann2020l,Gohmann2006}. It would be interesting to completely clarify  connection between these two approaches. 

The correlation functions at zero temperature (entropy)  can be formally accounted by the jump discontinuities in $\nu(q)$, which can also be treated by the form factor summation developed for the critical models \cite{1742-5468-2011-12-P12010,Kozlowski_2015}. In this case the role of the lattice is not essential and the exponential asymptotic behaviour is expected to be replaced by a power-law, which can be obtained from the proper modification of the 
generalized sine-kernels (see section 9 in Ref. \cite{Kitanine_2009}). 
To address dynamical correlation functions we must modify appropriately the form factors and the spectral factor $e^{-i \sum k_i x} \to e^{-i\sum (k_i x - \epsilon(k_i)t)}$. The detailed constructions and extraction of the asymptotic behavior is another intriguing direction for future research. However we can already anticipate that for the space-like region, i.e. when the saddle point of the expression $k x - \epsilon(k)t$ is outside the Brillouin zone, the asymptotic analysis remains largely unchanged, which can be immediately seen in the asymptotics of Ref. \cite{Korepin_1997}. For the time-like region the main problem will be that a suitable $\nu(q)$ might have a jump discontinuity which leads to additional power-law behavior ({\it c.f.} Ref. \cite{PhysRevLett.70.2357}). 
Finally, there will be extra $1/\sqrt{t}$ terms connected to the saddle point contributions indicated by the non-linear Luttinger theory \cite{Imambekov_2009,RevModPhys.84.1253,10.21468/SciPostPhys.7.4.047}.

\section*{Acknowledgements}

We are grateful to Nikita Slavnov and Frank G{\"o}hmann for useful discussions. 
We thank Oleg Lychkovskiy and Daniel Chernowitz for careful reading of the manuscript and numerous useful remarks and suggestions.
The authors acknowledge the support by the National Research Foundation of Ukraine grant 2020.02/0296.
O. G. acknowledges the support from the European Research Council under ERC Advanced grant 743032 DYNAMINT. 

\appendix

\section{Summation of form factors and determinant formula}
\label{sec:appSummation}

In this appendix, we derive formula \eqref{tau1} presenting tau function in the thermodynamic limit as a difference of two Fredholm determinants. 

We consider solutions in the large $L$ limit and choose  $\mathbf{k}$
to fill a Fermi Sea, namely
\begin{equation}\label{kset}
    k_{i} = \frac{2\pi}{L} \left(
    -\frac{N}{2} + i-1 -\nu_i
    \right),\qquad i = 1 ,\dots, N+1,
\end{equation}
where $\nu_i \equiv \nu(k_i)$.
For simplicity, we choose $N$ to be even.

First, we identically rewrite the overlap as (note, $\det D=\det\tilde D$)
\begin{equation}
|\langle {\bf k}|{\bf q}\rangle|^2 = -4 L \prod\limits_{i=1}^{N+1}\Omega_i \left(\prod\limits_{i=1}^{N+1}
\frac{e^{g(k_i)/2} \sin \pi \nu_i}{L} \right)^{2} 
\prod\limits_{i=1}^N e^{-g(q_i)} 
\det D \det \tilde D
\end{equation}
\begin{equation}
D =
\begin{pmatrix} \label{D}
\cot\frac{k_1-q_1}{2} -i &\dots & \cot\frac{k_{N+1}-q_1}{2}-i \\
\vdots & \ddots & \vdots \\
\cot\frac{k_1-q_N}{2}-i &\dots & \cot\frac{k_{N+1}-q_N}{2}-i \\
1& \dots  & 1\\
\end{pmatrix},\quad
\tilde D =
\begin{pmatrix}
\cot\frac{k_1-q_1}{2} +i &\dots & \cot\frac{k_{N+1}-q_1}{2}+i \\
\vdots & \ddots & \vdots \\
\cot\frac{k_1-q_N}{2}+i &\dots & \cot\frac{k_{N+1}-q_N}{2}+i \\
1& \dots  & 1\\
\end{pmatrix},
\end{equation}
\begin{equation}\label{omega}
    \Omega_i  = \frac{1}{1+ \frac{2\pi \nu'(k_i)}{L}}.
\end{equation}
Then using standard linear algebra manipulations we rewrite the static tau function as 
\begin{equation}\label{tt1}
\tau(x) = \det (\mathcal{A} + \delta \mathcal{A}) - \det \mathcal{A}    
\end{equation}
with 
\begin{equation}
    \delta \mathcal{A}_{ij} =-\frac{4\Omega_i }{L} \sin^2(\pi \nu_i)  e^{g(k_i)} e^{-i(k_i+k_j)x/2},
\end{equation}
\begin{equation}\label{aij}
    \mathcal{A}_{ij} = \Omega_i\frac{\sin^2(\pi \nu_i) }{L^2} e^{g(k_i)} e^{-i(k_i+k_j)x/2} 
    \sum_q e^{iqx - g(q)} \left(\cot \frac{q-k_i}{2} - i \right) \left(\cot \frac{q-k_j}{2} + i \right),
\end{equation}
where summation over $q$ is happening over the whole Brillouin zone
\begin{equation}
q \in \left\{ \frac{2\pi}{L}  \left(
    -\frac{L-1}{2} + j-1
    \right),\qquad j = 1 ,\dots, L\right\}.
\end{equation}
For $i\neq j$ we present 
\begin{equation}\label{ffaa}
     \mathcal{A}_{ij} = \Omega_i\frac{\sin^2 \pi \nu_i}{2L} e^{g(k_i)} e^{-i(k_i+k_j)x/2} e^{i(k_i-k_j)/2} \frac{c(k_i)-c(k_j)}{\sin\frac{k_i-k_j}{2}}
\end{equation}
with 
\begin{equation}\label{cc}
    c(k_i) =\frac{2}{L}\sum\limits_q e^{iqx-g(q)} \cot\frac{q-k_i}{2} .
\end{equation}
This sum can be rewritten as a contour integral and evaluated at large $L$, namely, choosing contour $\gamma$ running around $q_i$ and avoiding any other singularities of the integrand we obtain
\begin{equation}\label{c2}
	c(k_i) =  \oint_\gamma \frac{dq}{\pi} \frac{e^{-g(q)+ i q x}}{e^{iq L}-1}   \cot\frac{q-k_i}{2}.
\end{equation}
Further, we deform the contour into the rectangle that encapsulates interval $[-\pi,\pi]$. The vertical parts of this rectangle cancel and we are left with two lines above and below the real axis along with the contribution from the pole at $q=k_i$
\begin{equation}\label{cc2}
  c(k_i) =
 \left(\int\limits_{-\pi-i0}^{\pi-i0} -  \int\limits_{-\pi+i0}^{\pi+i0} \right)\frac{dq}{\pi} \frac{e^{-g(q)+ i q x}}{e^{iq L}-1} \cot\frac{q-k_i}{2}
  -\frac{4ie^{-g(k_i)+ik_i x}}{e^{i k_i L} -1}. 
\end{equation}
Here we assume that the imaginary shift $i0$ is chosen to be larger then ${\rm Im}\, k_i=O(1/L)$. In this form we immediately see that, in the limit $L\to \infty$, the values of $c(k)$ at points $k_i$ are equal to the values of the $E(k_i)$ for the analytic function $E(k)$ given by
\begin{equation}\label{c3}
    E(k) = \int\limits_{-\pi+i0}^{\pi+i0} \frac{dq}{\pi}   e^{-g(q)+ i q x}\cot\frac{q-k}{2} -\frac{4ie^{-g(k)+ik x}}{e^{-2\pi i \nu(k)} -1}. 
\end{equation}
Using $E(k)$ we can obtain values also for some vicinity of $k_i$, which allow us to effectively ``omit'' solving Bethe equations \eqref{eqk}. 
{
Performing similar computation for the diagonal components we arrive at
\begin{equation}
    \mathcal{A}_{ii} = e^{g(k_i)-ik_i x}\frac{\Omega_i  \sin(\pi \nu_i)^2}{L^2} \sum_q \frac{e^{iqx - g(q)}}{\sin^2\frac{q-k_i}{2}} .
\end{equation}
The sum can be evaluated in the same way as in Eq. \eqref{c2}:
\begin{equation}
    \mathcal{A}_{ii} =\Omega_i\left(1+\frac{(x+ig'(k_i))(e^{2\pi i \nu(k_i)}-1)}{L}\right)+e^{g(k_i)-ik_i x}\frac{\Omega_i \sin(\pi \nu_i)^2}{2L}  \int\limits_{-\pi}^{\pi} \frac{dq}{\pi}   \frac{e^{-g(q)+ i q x}}{\sin^2\frac{q+i0-k_i}{2}}.
\end{equation}
Equivalently, using definition \eqref{c3}, we can present 
\begin{equation}
    \mathcal{A}_{ii} = \Omega_i \left(1+ \frac{2\pi \nu'(k_i)}{L}\right) + e^{g(k_i)-ik_i x}\frac{\Omega_i \sin(\pi \nu_i)^2}{2L}  2E'(k_i).
\end{equation}
So recalling definition of \eqref{omega} we obtain for generic $i$ and $j$
\begin{equation}\label{AA}
    \mathcal{A}_{ij} = \delta_{ij} +  \frac{\sin^2(\pi \nu_i)}{2L} e^{g(k_i)} e^{-i(k_i+k_j)x/2} e^{i(k_i-k_j)/2} \frac{E(k_i)-E(k_j)}{\sin\frac{k_i-k_j}{2}}+ O(1/L^2),
\end{equation}
where for $i=j$ the second term is understood in the L'Hopital rule sense.
Similarly we obtain for the finite rank contribution 
\begin{equation}
	\delta \mathcal{A}_{ij} =- \frac{4}{L} \sin^2(\pi \nu_i)  e^{g(k_i)} e^{-i(k_i+k_j)x/2} + O(1/L^2).
\end{equation}
In this form we are at the position to take limit $L\to \infty$, and taking into account that $k_i$ is quantized in the units $2\pi/L$, arrive at the Fredholm determinants \eqref{tau1}.

Similarly, we can perform summation for $\tau_-(x)$ defined in Eq. \eqref{tauminus}. 
Instead of Eq. \eqref{tt1} we obtain the following
\begin{equation}
    \tau_-(x) = \det(\mathcal{A}+\delta\mathcal{A}) + (\Gamma-1) \det\mathcal{A},
\end{equation}
where 
\begin{equation}
    \mathcal{A}_{ij} = \frac{1}{L^2}e^{-g(q_i)+ix(q_i+q_j)/2}\sum_{k}\frac{e^{g(k)-ixk}\sin^2 \pi\nu(k)}{1+\frac{2\pi}{L}\nu'(k)}\left(\cot\frac{k-q_i}{2}-i\right)\left(\cot\frac{k-q_j}{2}+i\right),
\end{equation}
\begin{equation}
    \delta A_{ij} = \frac{4}{L} F_+(q_i)F_-(q_i),\qquad \Gamma = -\frac{4}{L}\sum\limits_{k}\frac{e^{g(k)-ixk}\sin^2 \pi\nu(k)}{1+\frac{2\pi}{L}\nu'(k)},
\end{equation}
\begin{equation}
    F_{\pm}(q)=e^{-\frac{g(q)}{2}+ix\frac{q}{2}}\frac{1}{L}\sum\limits_{k}\frac{e^{g(k)-ixk}\sin^2 \pi\nu(k)}{1+\frac{2\pi}{L}\nu'(k)}\left(\cot\frac{k-q}{2}\pm i\right).
\end{equation}
Here $\sum\limits_{k}$ means sum over all $L+\delta$ nonequivalent (${\rm mod}\,2 \pi$) solutions of Eq. \eqref{eqk}, which can be presented as a contour integral 
\begin{equation}
    \frac{1}{L}\sum_k \frac{f(k)}{1+\frac{2\pi \nu'(k)}{L}} = \oint_C \frac{dk}{2\pi} \frac{f(k)}{e^{ikL+2\pi i \nu(k)}-1} ,
\end{equation}
where the contour $C$ runs around poles of the denominator only and avoids and singularities of $f(k)$. Then the derivation goes along the lines as for $\tau(x)$. Namely, for $i\neq j$ we present 
\begin{equation}
    \mathcal{A}_{ij} = \frac{1}{2L} e^{-g(q_i)+i x (q_i+q_j)/2} e^{i(q_i-q_j)/2} \frac{c(q_i)-c(q_j)}{\sin\frac{q_i-q_j}{2}},
\end{equation}
where now instead of Eq. \eqref{cc} 
\begin{equation}
    c(q) = \frac{2}{L} \sum\limits_k \frac{e^{g(k)-ikx}\sin^2(\pi \nu(k))}{1+\frac{2\pi \nu'(k)}{L}}\cot\frac{k-q}{2}.
\end{equation}
In the thermodynamic limit this function can be replaced by $E(q)$, which does not depend on the system size  
\begin{equation}
    c(q)\approx E_-(q)=\frac{1}{\pi}\int\limits_{-\pi+i0}^{\pi+i0}dk\,e^{g(k)-ixk}\sin^2 \pi\nu(k)\cot\frac{k-q}{2}-4i\frac{e^{g(q)-ixq}\sin^2\pi\nu(q)e^{-2\pi i\nu(q)}}{1-e^{-2\pi i\nu(q)}}.
\end{equation}
For positive $x$ it is much more convenient to rewrite this function as 
\begin{equation}
    E_-(q)=\frac{1}{\pi}\int\limits_{-\pi-i0}^{\pi-i0}dk\,e^{g(k)-ixk}\sin^2 \pi\nu(k)\cot\frac{k-q}{2}-4ie^{g(q)-ixq}\sin^2\pi\nu(q)\left(1+\frac{e^{-2\pi i\nu(q)}}{1-e^{-2\pi i\nu(q)}}\right).
\end{equation}
Now if we relate 
\begin{equation}
    e^{-g(q)}=e^{2\pi i\nu(q)}-1,
\end{equation}
this function transform into 
\begin{equation}
    E_-(q)=\int\limits_{-\pi-i0}^{\pi-i0}\frac{dk}{4\pi}\,e^{-ixk}(e^{-2\pi i\nu(k)}-1)\cot\frac{k-q}{2}+ie^{-ixq}.
\end{equation}
For large positive $x$ the integral can be neglected. For diagonal components we obtain 
\begin{equation}
    A_{ii}=1+\frac{1}{L}e^{-g(q_i)+ixq_i}E_-'(q_i).
\end{equation}
Function $\Gamma$ can be written as 
\begin{equation}
    \Gamma = \int\limits_{-\pi}^\pi\frac{dk}{2\pi} e^{-ixk} (1- e^{-2\pi i \nu(k)}).
\end{equation}
It is also exponentially suppressed for $x\to +\infty$.
The finite rank contribution is easily evaluated taking into account that 
\begin{equation}
    F_\pm(q) = \frac{e^{-g(q)/2+ixq/2}}{2} \left( E_-(q) \mp i \Gamma/2 \right).
\end{equation}
After all these transformations one readily obtains the result Eq. \eqref{tauminusA} in the thermodynamic limit. 

\section{Lemmas about products}
In this appendix, we study products that appear in the overlaps. 
In this section we assume that   $\nu(q)$ is a smooth function on the segment $[-\pi,\pi]$ and assign its values in specific points as $\nu_j$, namely 
\begin{equation}\label{omega22}
	\nu_j=\nu(q_j), \qquad 
	q_j =\frac{2\pi}{L} \left(
	- \frac{L+1}{2} + j \right), \qquad
	\nu_-=\nu(-\pi), \qquad \nu_+=\nu(\pi),\qquad \delta \equiv \nu_+- \nu_-. 
\end{equation}

First we consider constant function $\nu(q) = \nu = {\rm const}$.
\label{Lemma}
\begin{lemma} 
\label{lemma1}
The following product formula is valid
\begin{equation}
B_L(\nu) \equiv \prod\limits_{j=1}^{L-1} \frac{\sin\frac{\pi (j-\nu)}{L}}{ \sin \frac{\pi j}{L}}=\frac{\sin(\pi \nu)}{L\sin\frac{\pi \nu}{L}}.
\end{equation}
In the limit $L\to\infty$ this product simplifies to 
\begin{equation}
B_L(\nu) \approx \frac{\sin(\pi \nu)}{\pi \nu}.
\end{equation}
The denominator is equal to 
\begin{equation}
\prod\limits_{j=1}^{L-1} 	\sin \frac{\pi j}{L} = \frac{L}{2^{L-1}}.
\end{equation}
\end{lemma}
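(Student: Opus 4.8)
The plan is to derive all three displayed identities from a single master formula for the product of sines at equally spaced arguments, namely
\begin{equation}
\prod_{k=0}^{L-1}\sin\left(x+\frac{\pi k}{L}\right)=\frac{\sin(Lx)}{2^{L-1}},
\end{equation}
which holds for every $x$. Granting this, I would first recover the denominator $\prod_{j=1}^{L-1}\sin\frac{\pi j}{L}=L/2^{L-1}$ by isolating the $k=0$ factor $\sin x$ and letting $x\to0$, using $\lim_{x\to0}\sin(Lx)/\sin x=L$. Next I would read off the numerator: setting $x=-\pi\nu/L$ turns $x+\pi k/L$ into $\pi(k-\nu)/L$, so dividing the master identity by its $k=0$ term $\sin x=-\sin\frac{\pi\nu}{L}$ and using $\sin(Lx)=-\sin(\pi\nu)$ gives $\prod_{j=1}^{L-1}\sin\frac{\pi(j-\nu)}{L}=\sin(\pi\nu)/(2^{L-1}\sin\frac{\pi\nu}{L})$.

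To prove the master identity I would pass to exponentials, writing $u=e^{ix}$ and $\omega=e^{i\pi/L}$, so that each factor becomes $\sin(x+\pi k/L)=\tfrac{1}{2i}u^{-1}\omega^{-k}(u^2\omega^{2k}-1)$. Because $\omega^2=e^{2\pi i/L}$ is a primitive $L$-th root of unity, the numbers $\{\omega^{2k}\}_{k=0}^{L-1}$ run over all $L$-th roots of unity, whence $\prod_{k=0}^{L-1}(u^2\omega^{2k}-1)=(-1)^{L-1}(u^{2L}-1)$ via $z^L-1=\prod_\zeta(z-\zeta)$. Collecting the scalar prefactors $(2i)^{-L}$, $u^{-L}=e^{-iLx}$ and $\omega^{-\sum k}=e^{-i\pi(L-1)/2}$, and simplifying $e^{-iLx}(u^{2L}-1)=2i\sin(Lx)$ together with $(2i)^{L-1}=2^{L-1}e^{i\pi(L-1)/2}$, every phase cancels and the right-hand side $\sin(Lx)/2^{L-1}$ remains.

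With numerator and denominator computed, the main claim follows by division,
\begin{equation}
B_L(\nu)=\frac{\sin(\pi\nu)/\!\left(2^{L-1}\sin\frac{\pi\nu}{L}\right)}{L/2^{L-1}}=\frac{\sin(\pi\nu)}{L\sin\frac{\pi\nu}{L}},
\end{equation}
and the large-$L$ statement is then just $\sin\frac{\pi\nu}{L}=\frac{\pi\nu}{L}+O(L^{-3})$, so that $L\sin\frac{\pi\nu}{L}\to\pi\nu$ and $B_L(\nu)\to\sin(\pi\nu)/(\pi\nu)$.

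The only nonroutine step is the master identity, and within it the hard part will be the bookkeeping of phase factors: I must confirm that the accumulated powers of $i$ and of $\omega$ recombine into a manifestly real expression with the correct overall sign, since a slip in the exponent of $i$ would corrupt the sign of the result. I would guard against this by checking $L=2,3$ explicitly before committing to the general phase count.
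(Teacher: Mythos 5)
Your proof is correct and rests on the same mechanism as the paper's: passing to exponentials and using the cyclotomic factorization $z^L-1=\prod_{\zeta^L=1}(z-\zeta)$, with the phases cancelling exactly as you anticipate (I checked your bookkeeping of the powers of $i$ and $\omega$; it closes). The only cosmetic difference is that you route everything through the standard master identity $\prod_{k=0}^{L-1}\sin(x+\pi k/L)=\sin(Lx)/2^{L-1}$ and then specialize, whereas the paper manipulates the ratio $B_L(\nu)$ directly, but the two are essentially the same argument.
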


\begin{proof}
We can rewrite identically the left hand side as  
\begin{equation}
B_L(\nu) = \prod\limits_{j=1}^{L-1} \frac{\sin\frac{\pi (j-\nu)}{L}}{ \sin \frac{\pi j}{L}}=
e^{-i\pi \nu\frac{L-1}{L}} \prod\limits_{j=1}^{L-1} \frac{e^{\frac{2\pi i }{L}j}-e^{\frac{2\pi i \nu}{L}}}{e^{\frac{2\pi i }{L}j}-1}.
\end{equation}
Taking into account that 
\begin{equation}
\prod_{j=1}^{L-1} \left(z-e^{2\pi i j/L}\right) = \frac{z^L-1}{z-1},
\end{equation}
we obtain 
\begin{equation}
B_L(\nu) = \frac{\sin(\pi \nu)}{L\sin\frac{\pi \nu}{L}}.
\end{equation}
\end{proof}

\medskip

Further, we proceed with the generic function $\nu(q)$.
\begin{lemma}
\label{lemma2}
For an integer $0\le A \le L-1$, the following asymptotic approximation in the limit $L\to\infty$ is valid

\begin{equation}\label{prod111}
	B_{A,L}[\nu(q)]\equiv\prod\limits_{j=1}^{A} \frac{\sin \frac{\pi(j-\nu_j)}{L}}{\sin \frac{\pi j}{L}} 
	\approx 
L^{\nu_A}\,	\Gamma\left[
	\begin{array}{c}
		A+1-\nu_1, \, L-A \\
		A+1,\, 1-\nu_1,\, L-A+\nu_A
	\end{array}
	\right]
	\exp \int\limits_{-\pi}^{q_A} f(q)\, dq\ ,
\end{equation}
where 
\begin{equation}
	\Gamma\left[
	\begin{array}{cccc}
		a_1, &a_2,&\dots &a_p\\
		b_1,& b_2,&\dots &b_q
	\end{array}
	\right] = \frac{\Gamma(a_1)\Gamma(a_2)\dots \Gamma(a_p)}{\Gamma(b_1)\Gamma(b_2)\dots \Gamma(b_q)},
\end{equation}
\begin{equation}\label{ff}
f(q)=-\frac{\nu(q_A)}{\pi-q}+\frac{\nu_-}{\pi+q} +\frac{\nu(q)}{2} \tan \frac{q}{2}.    
\end{equation}
\end{lemma}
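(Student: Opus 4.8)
The plan is to pass to logarithms,
\begin{equation}
\log B_{A,L}=\sum_{j=1}^{A}\left[\log\sin\frac{\pi(j-\nu_j)}{L}-\log\sin\frac{\pi j}{L}\right],
\end{equation}
and to treat separately the two edges of the summation range, where $\sin\frac{\pi j}{L}$ vanishes (at $j=0$ and at $j=L$), so that a naive Riemann sum would diverge logarithmically. The device is to peel off these singular contributions as rational factors,
\begin{equation}
\frac{\sin\frac{\pi(j-\nu_j)}{L}}{\sin\frac{\pi j}{L}}=\frac{j-\nu_1}{j}\cdot\frac{L-j+\nu_A}{L-j}\cdot\rho_j,
\end{equation}
where the first factor carries the $j\to0$ behaviour (controlled by $\nu_1$), the second the $L-j\to0$ behaviour (controlled by $\nu_A$), and $\rho_j$ is a smooth remainder with $\rho_j=1+O(1/L)$ that is regular at both endpoints.

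First I would evaluate the two rational products exactly by telescoping them into ratios of Gamma functions,
\begin{equation}
\prod_{j=1}^{A}\frac{j-\nu_1}{j}=\frac{\Gamma(A+1-\nu_1)}{\Gamma(A+1)\,\Gamma(1-\nu_1)},\qquad
\prod_{j=1}^{A}\frac{L-j+\nu_A}{L-j}=\frac{\Gamma(L+\nu_A)\,\Gamma(L-A)}{\Gamma(L)\,\Gamma(L-A+\nu_A)}.
\end{equation}
Inserting the Stirling ratio $\Gamma(L+\nu_A)/\Gamma(L)\sim L^{\nu_A}$ turns the second product into $L^{\nu_A}\,\Gamma(L-A)/\Gamma(L-A+\nu_A)$, and the two products together reproduce exactly the prefactor $L^{\nu_A}\Gamma[\cdots]$ of the claim. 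This single step delivers all of the Gamma functions and the power of $L$.

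It then remains to show that $\sum_{j=1}^{A}\log\rho_j\to\int_{-\pi}^{q_A}f(q)\,dq$. Because the edge singularities have been removed, $\log\rho_j$ is a smooth, uniformly $O(1/L)$ function of $q_j$ and may be summed as a Riemann sum with spacing $\Delta q=2\pi/L$. Using $\log\sin\frac{\pi(j-\nu_j)}{L}-\log\sin\frac{\pi j}{L}\approx-\frac{\pi\nu_j}{L}\cot\frac{\pi j}{L}$, $\log\frac{j-\nu_1}{j}\approx-\nu_1/j$, $\log\frac{L-j+\nu_A}{L-j}\approx\nu_A/(L-j)$, and substituting $\frac{\pi j}{L}\approx\frac{q+\pi}{2}$ (whence $\cot\frac{\pi j}{L}\to-\tan\frac q2$, $j\to\frac{L(q+\pi)}{2\pi}$, $L-j\to\frac{L(\pi-q)}{2\pi}$ and $\nu_1\to\nu_-$), the factor $\frac{L}{2\pi}$ from $\sum\to\int$ cancels the $1/L$ and reproduces precisely $f(q)=\frac{\nu(q)}{2}\tan\frac q2+\frac{\nu_-}{\pi+q}-\frac{\nu(q_A)}{\pi-q}$.

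The main obstacle will be the uniform error control in this last step. One must verify that, after subtracting the rational factors, the cancellation of the $\log\sin$ singularities against them genuinely makes $\log\rho_j$ regular and of bounded variation at $j\sim1$ and $L-j\sim1$, so that the Euler--Maclaurin correction to the Riemann sum is $o(1)$; and one must confirm that the accumulated $O(1/L)$ approximations (the shift in $q_j$, the replacement $\nu_1\to\nu_-$, and the Stirling ratio) do not build up to an $O(1)$ error in $\log B_{A,L}$. The endpoint analysis is precisely where the estimate is delicate, since there each individual term in the sum fails to be small.
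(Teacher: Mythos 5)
Your proposal is correct and follows essentially the same strategy as the paper: regularize the edge singularities of the product by dividing out rational factors, telescope those factors into ratios of Gamma functions (plus a Stirling estimate for the $L^{\nu_A}$ power), and convert the logarithm of the smooth remainder into the Riemann integral of $f$. The only difference is organizational — the paper first divides by the local factors $(1-\nu_j/j)(1+\nu_j/(L-j))$ and then compares them to the endpoint-value products that telescope, whereas you peel off the endpoint-value factors $\frac{j-\nu_1}{j}\cdot\frac{L-j+\nu_A}{L-j}$ directly — and the delicate edge control you flag is handled there by the same cancellation you describe.
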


\begin{proof}

First, we introduce the modified product 
\begin{equation}\label{prod11}
	\tilde{B}_{A,L}[\nu(q)]=\prod\limits_{j=1}^{A} \frac{\sin \frac{\pi(j-\nu_j)}{L}}{\sin \frac{\pi j}{L}} \frac{1}{1-\frac{\nu_j}{j}} \frac{1}{1+\frac{\nu_j}{L-j}}
	=\prod\limits_{j=1}^{A} \frac{\Gamma(1+\frac{j}{L})}{\Gamma(1+\frac{j-\nu_j}{L})}\frac{\Gamma(2-\frac{j}{L})}{\Gamma(2-\frac{j-\nu_j}{L})}.
\end{equation}
Due to this modification, it is enough to expand $\log\tilde{B}_{A,L}[\nu(q)]$ up to the linear terms in $\nu_j$ since higher orders will be of order $O(1/L)$, namely
\begin{equation}
	\log \tilde{B}_{A,L}[\nu(q)]=\sum\limits_{j=1}^A \nu_j \left(\frac{1}{j}-\frac{1}{L-j}- \frac{1}{L} \cot \frac{\pi j}{L}\right) + O(1/L).
\end{equation}
Taking into account \eqref{omega22} we transform the sum into an integral
\begin{equation}\label{bb2}
\log \tilde{B}_{A,L}[\nu(q)]= \int\limits_{-\pi}^{q_A} \nu(q)\left(\frac{1}{q+\pi}- \frac{1}{\pi -q} + \frac{1}{2\pi} \tan \frac{q}{2}\right)dq .
\end{equation}
The rest of the product can be evaluated in a similar manner. First, we identically transform
\begin{equation}
\prod\limits_{j=1}^{A} \left(1-\frac{\nu_j}{j}\right) \left( 1+\frac{\nu_j}{L-j} \right) =	\Gamma\left[
\begin{array}{c}
	A+1-\nu_1, \, L+\nu_A,\, L-A \\
	A+1,\, 1-\nu_1,\,L,\, L-A+\nu_A
\end{array}
\right] \prod\limits_{j=1}^{A} \frac{\left(1-\frac{\nu_j}{j}\right) \left( 1+\frac{\nu_j}{L-j} \right) }{\left(1-\frac{\nu_1}{j}\right) \left( 1+\frac{\nu_A}{L-j} \right) }
\end{equation}
The logarithm of the remaining product can be expanded only up to linear in $\nu$ terms to capture finite terms in $L\to \infty $ limit, namely 
\begin{equation}
	\log \prod\limits_{j=1}^{A} \frac{\left(1-\frac{\nu_j}{j}\right) \left( 1+\frac{\nu_j}{L-j} \right) }{\left(1-\frac{\nu_1}{j}\right) \left( 1+\frac{\nu_A}{L-j} \right) } = 
	- \int\limits_{-\pi}^{q_A}dq\frac{\nu(q)-\nu(-\pi)}{q+\pi} + \int\limits_{-\pi}^{q_A}dq\frac{\nu(q)-\nu_A}{\pi -q}
\end{equation}
Combining this result with \eqref{bb2} and using Stirling's formula we obtain the desired result \eqref{prod111}. 
\end{proof}

\textbf{Remark 1.} For $A=L-1$, using 
Stirling's approximation for Gamma functions we obtain
\begin{equation}\label{prod1}
\prod\limits_{j=1}^{L-1} \frac{\sin \frac{\pi(j-\nu_j)}{L}}{\sin \frac{\pi j}{L}} 
	\approx 
	\frac{L^{\nu_+-\nu_-}}{\Gamma(1+\nu_+)\Gamma(1-\nu_-)}
	\exp \int\limits_{-\pi}^\pi dq\,  \left(\frac{\pi(\nu_+-\nu_-)+q(\nu_++\nu_-)}{q^2-\pi^2} + \frac{\nu(q)}{2} \tan \frac{q}{2} \right).
\end{equation}.
\textbf{Remark 2.}  For $A\sim L$  and $L-A \sim L$ the prefactor can be simplified as 
\begin{equation}
    \Gamma\left[
\begin{array}{c}
	A+1-\nu_1, \, L+\nu_A,\, L-A \\
	A+1,\, 1-\nu_1,\,L,\, L-A+\nu_A
\end{array}
\right] = \frac{1}{L^{\nu_1}} \frac{1}{\Gamma(1-\nu_1)} \frac{1}{(A/L)^{\nu_1}(1-A/L)^{\nu_A}}.
\end{equation}
The next lemma is a simple corollary of the previous one.
\begin{lemma}
	\label{lemma2p5}
	The following asymptotic expression is valid as $L\to \infty$
	\begin{equation}\label{zaa}
		\mathcal{Z}_a \equiv  \sin^2 \frac{\pi \nu_a}{L}\prod\limits_{j\neq a}^L \frac{\sin^2\frac{\pi(j-a-\nu_j)}{L}}{\sin^2\frac{\pi(j-a)}{L}}  \approx 
		 L^{2\delta -2} \sin^2(\pi \nu_a) \Gamma\left[
		\begin{array}{c}
			a+\nu_a, \, L-a+1-\nu_a \\
			a+\nu_+,\, L-a+1-\nu_-
		\end{array}
		\right]^2 e^{2F(q_a)},
	\end{equation}
where
\begin{multline}\label{fqa}
	F(q_a) = \int\limits_{q_a}^\pi \left(-\frac{\nu_+}{2\pi+q_a-q}+ \frac{\nu_a}{q-q_a} - \frac{\nu(q)}{2}\cot \frac{q-q_a}{2}\right) dq \\
	+\int\limits_{-\pi}^{q_a} \left(\frac{\nu_-}{2\pi+q-q_a}+ \frac{\nu_a}{q-q_a} - \frac{\nu(q)}{2}\cot \frac{q-q_a}{2}\right)dq.
\end{multline}
\end{lemma}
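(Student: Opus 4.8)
The plan is to reduce $\mathcal{Z}_a$ to two applications of Lemma~\ref{lemma2}. First I would split the product over $j\neq a$ at the excluded index into a ``right'' piece ($j>a$) and a ``left'' piece ($j<a$). Setting $n=j-a$ in the right piece and $n=a-j$ in the left piece, and using that $\sin^2$ is even, I would write
\begin{equation}
\mathcal{Z}_a = \sin^2\tfrac{\pi\nu_a}{L}\,P_+^2\,P_-^2,\qquad
P_+=\prod_{n=1}^{L-a}\frac{\sin\frac{\pi(n-\nu_{a+n})}{L}}{\sin\frac{\pi n}{L}},\qquad
P_-=\prod_{n=1}^{a-1}\frac{\sin\frac{\pi(n+\nu_{a-n})}{L}}{\sin\frac{\pi n}{L}}.
\end{equation}
Each factor is exactly of the form $B_{A,L}[\,\cdot\,]$ treated in Lemma~\ref{lemma2}: $P_+$ is $B_{L-a,L}$ with effective phase $n\mapsto \nu_{a+n}$, while $P_-$ is $B_{a-1,L}$ with the \emph{reflected} effective phase $n\mapsto -\nu_{a-n}$.

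Next I would apply Lemma~\ref{lemma2} to each piece, reading off the endpoint values from the reindexing. For $P_+$ the boundary phases are $\nu_{a+1}\approx\nu_a$ and $\nu_{L}\approx\nu_+$; for $P_-$ they are $-\nu_{a-1}\approx-\nu_a$ and $-\nu_{1}\approx-\nu_-$. This gives
\begin{equation}
P_+ \approx L^{\nu_+}\,\Gamma\left[\begin{array}{c}L-a+1-\nu_a,\;a\\ L-a+1,\;1-\nu_a,\;a+\nu_+\end{array}\right]e^{I_+},\qquad
P_- \approx L^{-\nu_-}\,\Gamma\left[\begin{array}{c}a+\nu_a,\;L-a+1\\ a,\;1+\nu_a,\;L-a+1-\nu_-\end{array}\right]e^{I_-}.
\end{equation}
Multiplying, the factors $\Gamma(a)$ and $\Gamma(L-a+1)$ cancel, the powers combine to $L^{\nu_+-\nu_-}=L^{\delta}$, and the reflection formula $\Gamma(1+\nu_a)\Gamma(1-\nu_a)=\pi\nu_a/\sin(\pi\nu_a)$ converts the leftover $1/[\Gamma(1-\nu_a)\Gamma(1+\nu_a)]$ into $\sin(\pi\nu_a)/(\pi\nu_a)$. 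Squaring and absorbing the prefactor $\sin^2\frac{\pi\nu_a}{L}\approx\pi^2\nu_a^2/L^2$ cancels the $\pi^2\nu_a^2$ and lowers the power by two, producing precisely $L^{2\delta-2}\sin^2(\pi\nu_a)$ times the squared Gamma ratio in \eqref{zaa}.

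It then remains to verify that the exponents assemble correctly, i.e.\ $I_++I_-=F(q_a)$. For $P_+$ I would change variables $\hat q=q-q_a-\pi$ in the integral $\int_{-\pi}^{\hat q_{L-a}}\hat f\,d\hat q$ supplied by Lemma~\ref{lemma2}; using the identity $\tan\frac{\theta-\pi}{2}=-\cot\frac{\theta}{2}$ this maps the three terms of $\hat f$ onto the first line of \eqref{fqa}. For $P_-$ I would substitute $q=q_a-\pi-\check q$, which by the same tangent-to-cotangent identity and the reflection $\nu(q_a-\pi-\check q)=\nu(q)$ reproduces the second line of \eqref{fqa}. I expect the \textbf{main obstacle} to be exactly this last bookkeeping for $P_-$: one must simultaneously track the sign flip $\nu\to-\nu$, the reflected argument, and the interchange of the two endpoint values (so that ``$\nu(q_A)$'' becomes $-\nu_-$ and ``$\nu_-$'' becomes $-\nu_a$). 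Getting these orientations right is what makes the two half-integrals fuse into the single symmetric $F(q_a)$ rather than leaving spurious boundary contributions.
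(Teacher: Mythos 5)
Your proposal is correct and takes essentially the same route as the paper: the paper likewise splits $\mathcal{Z}_a$ at the excluded index into the two half-products $\prod_{j=1}^{a-1}\sin^2\frac{\pi(j+\nu_{a-j})}{L}/\sin^2\frac{\pi j}{L}$ and $\prod_{j=1}^{L-a}\sin^2\frac{\pi(j-\nu_{j+a})}{L}/\sin^2\frac{\pi j}{L}$, applies Lemma~\ref{lemma2} to each, and then changes variables ($q\mapsto q\pm(q_a+\pi)$ with $\tan\frac{\theta-\pi}{2}=-\cot\frac{\theta}{2}$) to fuse the two integrals into $F(q_a)$. Your bookkeeping of the endpoint phases, the $\Gamma(a)$, $\Gamma(L-a+1)$ cancellations, and the reflection formula $\Gamma(1+\nu_a)\Gamma(1-\nu_a)=\pi\nu_a/\sin(\pi\nu_a)$ all check out.
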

\begin{proof}
	First, we identically present this product as 
	\begin{equation}
		\mathcal{Z}_a = \sin^2 \frac{\pi \nu_a}{L}\prod\limits_{j=1}^{a-1}\frac{\sin^2 \frac{\pi(j+\nu_{a-j})}{L}}{\sin^2\frac{\pi j}{L}}
		\prod\limits_{j=1}^{L-a} \frac{\sin^2 \frac{\pi(j-\nu_{j+a})}{L}}{\sin^2 \frac{\pi j}{L}}.
	\end{equation}
Then using Lemma \eqref{lemma2} and Stirling's formula we obtain 
\begin{equation}
	\mathcal{Z}_a \approx   L^{2\delta -2} \sin^2(\pi \nu_a) \Gamma\left[
	\begin{array}{c}
		a+\nu_a, \, L-a+1-\nu_a \\
		a+\nu_+,\, L-a+1-\nu_-
	\end{array}
	\right]^2 e^{2F(q_a)}
\end{equation} 
with
\begin{multline}
	F(q_a)=\int_{-\pi}^{-q_a}dq\left(-\frac{\nu_+}{\pi-q}+\frac{\nu(q_a)}{\pi+q} 
	+\frac{\nu(q_a+q+\pi)}{2} \tan\frac q2 \right) -\\
	-\int_{-\pi}^{q_a}dq\left(-\frac{\nu_-}{\pi-q}+\frac{\nu(q_a)}{\pi+q} 
	+\frac{\nu(q_a-q-\pi)}{2} \tan\frac q2 \right).
\end{multline}
Changing variables we obtain the desired statement.
\end{proof}

Further, we proceed with double products. 
\begin{lemma}\label{TL-Z}
For $\delta\ge0$ the following asymptotic expansion is valid in the limit $L\to \infty$
\begin{equation}\label{ZappL}
Z\equiv
\prod\limits_{i=1}^L\prod\limits_{j=1}^{i-1} \frac{\sin\frac{\pi}{L}(i-j-\nu_i+\nu_j)}{\sin\frac{\pi(i-j)}{L}}  \approx \frac{\mathcal{A}}{L^{\delta^2/2}},
\end{equation}
where the $L$ independent prefactor  $\mathcal{A}$ reads
\begin{equation}\label{adelta}
    \mathcal{A}= G(1+\delta) (2\pi)^{-\frac{\delta(\delta+1)}{2}}\exp\left(\frac{\delta}{2}-\delta F(\pi)-\int\limits_{-\pi}^\pi dq \int\limits_{-\pi}^\pi dk \left[\frac{\nu(q)-\nu(k)-\delta (q-k)/(2\pi)}{4\sin \frac{q-k}{2}}\right]^2\right)
\end{equation}
with $F(\pi)$ is defined in Eq. \eqref{fqa} and $G(x)$ stands for Barnes G-function defined by the functional relation $G(x+1)=\Gamma(x) G(x)$.
Notice that function $\nu(q) -\delta q/2\pi$ has zero winding number so the integrals in the exponential are well defined.
\end{lemma}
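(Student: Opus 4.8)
The plan is to pass to logarithms and to isolate the winding before anything else. I would first decompose $\nu(q)=\nu_\delta(q)+\frac{\delta}{2\pi}(q+\pi)+\nu_-$, where $\nu_\delta$ is periodic with zero winding, so that on the lattice $\nu_i-\nu_j=(\mu_i-\mu_j)+\frac{\delta}{L}(i-j)$ with $\mu_i=\nu_\delta(q_i)$. Writing the generic factor as $\log\frac{\sin\frac{\pi}{L}(i-j-\nu_i+\nu_j)}{\sin\frac{\pi(i-j)}{L}}=\log\bigl(\cos\epsilon_{ij}-\cot\tfrac{\pi(i-j)}{L}\,\sin\epsilon_{ij}\bigr)$ with $\epsilon_{ij}=\frac{\pi(\nu_i-\nu_j)}{L}=O(1/L)$, I would expand to second order, obtaining $-\epsilon_{ij}\cot\frac{\pi(i-j)}{L}-\frac{\epsilon_{ij}^2}{2\sin^2\frac{\pi(i-j)}{L}}+O(1/L)$, and split each term according to the periodic and the linear pieces of $\epsilon_{ij}$.

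The purely periodic quadratic term $-\frac12\frac{\pi^2(\mu_i-\mu_j)^2/L^2}{\sin^2\frac{\pi(i-j)}{L}}$, summed over $i>j$ and turned into a Riemann sum via $\frac{\pi(i-j)}{L}\to\frac{q-k}{2}$ and $\sum_{i>j}\to\frac12\frac{L^2}{(2\pi)^2}\int\!\!\int dq\,dk$, produces exactly the double integral $-\int\!\!\int[\,(\nu_\delta(q)-\nu_\delta(k))/(4\sin\frac{q-k}{2})\,]^2$ appearing in \eqref{adelta}. Crucially, because $\mu$ is periodic the first-order contribution $-\frac{\pi(\mu_i-\mu_j)}{L}\cot\frac{\pi(i-j)}{L}$ has vanishing diagonal-band sum $\frac1L\sum_i\mu'(q_i)\to\frac{1}{2\pi}\int\mu'=0$, so it survives only through finite edge terms, which together with the mixed periodic--winding contribution I expect to assemble into the factor $e^{-\delta F(\pi)}$, with $F$ the boundary integral \eqref{fqa} of Lemma~\ref{lemma2p5}.

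The linear piece $\frac{\delta}{L}(i-j)$ renders the remaining contribution $\nu$-independent; reorganising the pairs by $m=i-j$ (multiplicity $L-m$) collapses it to $Z_{\rm lin}=\prod_{m=1}^{L-1}\bigl[\sin\!\frac{\pi m}{L}(1-\tfrac{\delta}{L})/\sin\frac{\pi m}{L}\bigr]^{L-m}$. I would evaluate this exactly rather than order by order in $\delta/L$, recognising it, as in Lemmas~\ref{lemma1}--\ref{lemma2}, as a product that telescopes through $\prod_{i=0}^{N-1}\Gamma(z+i)=G(z+N)/G(z)$ into ratios of Barnes $G$-functions. Feeding these into the Barnes asymptotics $\log G(1+z)=\frac{z^2}{2}\log z-\frac{3z^2}{4}+\frac{z}{2}\log 2\pi-\cdots$ is what generates the constants $G(1+\delta)$, $(2\pi)^{-\delta(\delta+1)/2}$, $e^{\delta/2}$ and the power $L^{-\delta^2/2}$ of \eqref{adelta}; the hypothesis $\delta\ge0$ is precisely what keeps all Gamma arguments positive so that this manipulation is legitimate.

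The hard part is the two edges of $Z_{\rm lin}$: near $m\sim1$ and, more dangerously, near $m\sim L$ where $\frac{\pi m}{L}\to\pi$ and the cotangent blows up, a naive expansion in $\delta/L$ produces spurious $O(L)$ and logarithmically divergent pieces. One must retain the exact sine ratio there, where the weight $L-m$ is small, so that these pieces reorganise into finite constants plus exactly $-\frac{\delta^2}{2}\log L$. Controlling this cancellation, and simultaneously checking that the $O(1/L)$ remainders of the second-order expansion do not accumulate over the $\sim L^2$ factors (which I expect to handle by the $\tilde B$-type modification used in the proof of Lemma~\ref{lemma2}), is the crux; matching the surviving $\log L$ against the $\frac{z^2}{2}\log z$ term of the Barnes expansion is what simultaneously fixes the exponent of $L$ and the constant $\mathcal{A}$.
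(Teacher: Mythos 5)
Your plan is essentially correct and identifies the source of every term in \eqref{ZappL}--\eqref{adelta}, but it organizes the computation differently from the paper. The paper regularizes \emph{multiplicatively} from the outset, writing $Z=Y_1Y_2e^{R_\delta}$ with compensating factors chosen so that (i) $Y_1$ and the second factor of $Y_2$ may be expanded to quadratic order in $\nu$ and become convergent Riemann double integrals, (ii) the pure-winding corner product $W(\delta)=\prod_{j=1}^{L-1}\left[(1+\delta/j)^je^{-\delta}\right]$ is evaluated via the differential identity $d\log W/d\delta=-\delta\, d\log W_0/d\delta$ with $W_0=\Gamma(L+\delta)/(\Gamma(1+\delta)\Gamma(L))$ --- this is where $-\tfrac{\delta^2}{2}\log L$, $G(1+\delta)$, $(2\pi)^{-\delta(\delta+1)/2}$ and $e^{\delta/2}$ all originate --- and (iii) the compensator sum $R_\delta$ collapses to $\delta R$ and, together with the leftover integrals, supplies $-\delta F(\pi)$. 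You instead split $\nu$ \emph{additively} into its zero-winding part plus a linear ramp and expand the logarithm directly, keeping the anti-diagonal corner exact. The two routes meet at the corner: your $Z_{\rm lin}$ restricted to $m=L-r$ with $r=O(1)$ reduces to precisely $\prod_r(1+\delta/r)^r$ up to normalization, i.e.\ to the paper's $W(\delta)$, so the Barnes-$G$ constants and $L^{-\delta^2/2}$ emerge identically; your route buys a cleaner conceptual separation of winding from fluctuation, the paper's buys pieces that are each separately finite without further case analysis. Two caveats, neither fatal. First, $Z_{\rm lin}$ does not telescope \emph{exactly} into Barnes $G$-functions --- only its $m\sim L$ edge does; its bulk must still be expanded to second order and recombined with the cross term so that the integrand closes up into $\left[\nu(q)-\nu(k)-\delta(q-k)/(2\pi)\right]^2$, so ``evaluate this exactly rather than order by order'' overstates what is available. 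Second, the vanishing of the first-order periodic contribution is an exact consequence of $\sum_{j=1}^{L-1}\cot\frac{\pi j}{L}=0$ (the weight multiplying each $\mu_i$ vanishes identically for every $i$), not merely of $\int\nu_\delta'\,dq=0$; and the assembly of the mixed and edge terms into $e^{-\delta F(\pi)}$, which you leave as an expectation, is exactly the bookkeeping the paper itself compresses into its closing sentence about ``tedious but straightforward manipulations with integrals,'' so it must still be carried out to claim the constant in \eqref{adelta}.
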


\begin{proof}


To find the thermodynamic limit of $Z$ 
we rewrite it as $Z = Y_1 Y_2 e^{R_\delta}$ with
\begin{multline}
Y_1 = \prod\limits_{i=1}^L\prod\limits_{j=1}^{i-1}  \frac{\sin\frac{\pi}{L}(i-j-\nu_i+\nu_j)}{\sin\frac{\pi(i-j)}{L}}  \frac{1-\frac{i-j}{L}}{1-\frac{i-j-(\nu_i-\nu_j)}{L}}e^{\frac{\nu_i-\nu_j}{L-i+j}}
= \\
= \prod\limits_{i=1}^L\prod\limits_{j=1}^{i-1}  \frac{\cos \frac{\pi(\nu_i-\nu_j)}{L}}{1+ \frac{\nu_i-\nu_j}{L(1- \frac{i-j}{L})}} \left(
1- \frac{\tan \frac{\pi(\nu_i-\nu_j)}{L}}{\tan \frac{\pi(i-j)}{L}}
\right)e^{\frac{\nu_i-\nu_j}{L-i+j}},
\end{multline}
\begin{equation}\label{Y2}
Y_2 =  \prod\limits_{i=1}^L\prod\limits_{j=1}^{i-1}  \left(1+ \frac{\delta}{L-i+j}\right)e^{-\frac{\delta}{L-i+j}}\prod\limits_{i=1}^L\prod\limits_{j=1}^{i-1} \left(1+ \frac{\nu_i-\nu_j-\delta}{L-i+j+\delta}\right)e^{-\frac{\nu_i-\nu_j-\delta}{L-i+j+\delta}},
\end{equation}
\begin{equation}\label{rrd}
R_\delta =  \sum\limits_{i=1}^L\sum\limits_{j=1}^{i-1}\left(\frac{\nu_i-\nu_j-\delta}{L-i+j+\delta}+\frac{\delta}{L-i+j}-\frac{\nu_i-\nu_j}{L-i+j} \right).
\end{equation}
The factors are designed in such a way that terms $O(\nu^n)$ for $n>2$ do not contribute in $L\to \infty$ case. 
In particular, we used that 
\begin{equation}
\sum\limits_{j=1}^{L-1} \cot \frac{\pi j}{L} = 0.
\end{equation}
So keeping only quadratic terms we obtain
\begin{equation}
\log Y_1 = \sum\limits_{i=1}^L\sum\limits_{j=1}^{i-1} \frac{\pi^2}{2L^2} (\nu_i-\nu_j)^2 \left(
\frac{1}{\pi^2}\left(1-\frac{i-j}{L}\right)^{-2}-\frac{1}{\sin^2 \frac{\pi(i-j)}{L}}
\right) 
\end{equation}
and taking $L\to \infty$ 
\begin{equation}
\log Y_1  = \frac{1}{8}\int\limits_{-\pi}^\pi dq \int\limits_{-\pi}^q  dk (\nu(q)-\nu(k))^2  
\left(
\frac{4}{\left(2\pi-q+k\right)^{2}}-\frac{1}{\sin^2 \frac{q-k}{2}}
\right) .
\end{equation}
Similarly 
\begin{equation}
\log \prod\limits_{i=1}^L\prod\limits_{j=1}^{i-1} \left(1+ \frac{\nu_i-\nu_j-\delta}{L-i+j+\delta}\right)e^{-\frac{\nu_i-\nu_j-\delta}{L-i+j+\delta}} \approx 
-\frac{1}{2}\int\limits_{-\pi}^\pi dq \int\limits_{-\pi}^q  dk  \left(\frac{\nu(q) - \nu(k)-\delta}{2\pi -q+k}\right)^2.
\end{equation}
The first part of the product in $Y_2$ (by grouping terms with the same  $i-j$) can be presented as 
\begin{equation}
W(\delta)\equiv\prod\limits_{i=1}^L\prod\limits_{j=1}^{i-1}  \left(1+ \frac{\delta}{L-i+j}\right)e^{-\frac{\delta}{L-i+j}} = \prod\limits_{j=1}^{L-1} \left[\left(1+\frac{\delta}{j}\right)^je^{-\delta}\right].
\end{equation}
We consider an additional expression 
\begin{equation}
W_0 (\delta)\equiv \prod\limits_{j=1}^{L-1} \left(1+\frac{\delta}{j}\right) = \frac{\Gamma(L+\delta)}{\Gamma(1+\delta)\Gamma(L)}.
\end{equation}
Differentiating it by $\delta$ we obtain
\begin{equation}\label{WW0}
\frac{d \log W(\delta)}{d \delta} = -\delta \frac{d \log W_0(\delta)}{d \delta} .
\end{equation}
For large $L$ we can approximate 
\begin{equation}
W_0 (\delta)\approx \frac{L^\delta}{\Gamma(1+\delta)}.
\end{equation}
Solving Eq. \eqref{WW0} with initial condition  $\log W(\delta=0)=0$ we obtain
\begin{equation}
\log W(\delta) \approx - \frac{\delta^2}{2} \log L  + \int\limits_0^\delta z \frac{d\log\Gamma(1+z)}{dz} dz.
\end{equation}
Finally, let us find $L\to \infty$ expression for $R_\delta$ defined in Eq. \eqref{rrd}. First we identically transform it into
\begin{equation}\label{rrr}
    R_{\delta}=\sum\limits_{i=1}^L (\nu_i-\nu_L) S_{L-i+1}- \sum\limits_{i=1}^L (\nu_i-\nu_1) S_i.
\end{equation}
\begin{equation}\label{si}
   S_i=\sum\limits_{j=i}^{L-1}\left(\frac{1}{j+\delta}-\frac{1}{j}\right) = \frac{d}{d\epsilon} \log \frac{\Gamma(L+\epsilon+\delta)\Gamma(i+\epsilon)}{\Gamma(L+\epsilon)\Gamma(i+\epsilon+\delta)}\Big|_{\epsilon=0}.
\end{equation}
From the form of Eq. \eqref{rrr} one can conclude that as $L\to \infty$ the non-vanishing contributions to the sum will come from indices $i=O(L)$. Therefore, using Stirling's formula we can present $S_i$ as 
\begin{equation}
    S_i \approx \frac{d}{d\epsilon}  \log \left(\frac{L+\epsilon}{i+\epsilon}\right)^\delta\Big|_{\epsilon=0} = \delta \left(\frac{1}{L}- \frac{1}{i}\right).
\end{equation}
Therefore $R_{\delta}\approx \delta R$ with 
\begin{multline}
R=   \lim_{L\to \infty} \left( \sum\limits_{i=1}^{L }(\nu_i-\nu_L) \left(\frac{1}{L}-\frac{1}{L-i+1}\right) - 
\sum\limits_{i=1}^{L-1} (\nu_i-\nu_1) \left(\frac{1}{L}-\frac{1}{i}\right)\right)=\\
=  \int\limits_{-\pi}^{\pi} dq (\nu(q)-\nu(\pi)) \left(\frac{1}{2\pi}-\frac{1}{\pi-q}\right) - \int\limits_{-\pi}^{\pi} dq (\nu(q)-\nu(-\pi)) \left(\frac{1}{2\pi}-\frac{1}{\pi+q}\right).
\end{multline}
So far we have proved that 
\begin{equation}
    Z \approx L^{-\delta^2/2}e^{C_\delta}. 
\end{equation}
with 
\begin{equation}
    C_\delta = \delta R + \int\limits_0^\delta z \frac{d\log\Gamma(1+z)}{dz} dz  +  
    \frac{1}{2}\int\limits_{-\pi}^\pi dq \int\limits_{-\pi}^q  dk 
\left(
\frac{2\delta(\nu(q)-\nu(k)) -\delta^2}{\left(2\pi-q+k\right)^{2}}-\frac{(\nu(q)-\nu(k))^2  }{4\sin^2 \frac{q-k}{2}}
\right).
\end{equation}
Further, we can use 
\begin{equation}
    \int\limits_0^\delta z \frac{d\log\Gamma(1+z)}{dz} dz  = 
    \frac{\delta(\delta+1)}{2} - \frac{\delta}{2}\log(2\pi) + \log G(1+\delta),
\end{equation}
where $G(x)$ is Barnes G-function. 
The final answer is obtained by tedious but straightforward manipulations with integrals. 
\end{proof}
In the next lemma we address a similar double product for negative winding numbers $\delta <0$.
\begin{lemma}\label{negD}
Let us define $\ell=L+\delta$, with $\delta <0$, then the following asymptotic behavior is valid as $L\to\infty$ (here we still assume that $|\delta|\ll L$)
\begin{equation}
    Z \equiv \prod\limits_{i=1}^\ell\prod\limits_{j=1}^{i-1} \frac{\sin\frac{\pi}{L}(i-j-\nu_i+\nu_j)}{\sin\frac{\pi(i-j)}{L}}  \approx  \frac{L^{\delta^2/2}(2\pi)^{-(\delta^2+\delta)/2}e^{\delta/2}}{G(1-\delta)}\exp\left(-\int\limits_{-\pi}^\pi dq \int\limits_{-\pi}^\pi dk \left[\frac{\nu(q)-\nu(k)-\delta (q-k)/(2\pi)}{4\sin \frac{q-k}{2}}\right]^2\right).
\end{equation}
\end{lemma}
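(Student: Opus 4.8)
The plan is to follow the proof of Lemma~\ref{TL-Z} almost verbatim, decomposing $Z$ into a bulk factor that reproduces the double integral, a purely $\delta$-dependent combinatorial factor, and linear-in-$\nu$ remainders. Two features force modifications. First, the outer index now stops at $\ell=L+\delta<L$, so the product reaches only a finite distance from the Brillouin-zone boundary. Second, $\delta$ is a negative integer, which makes several of the Gamma- and Barnes-function prefactors used for $\delta\ge0$ singular; in particular $W_0(\delta)=\Gamma(L+\delta)/[\Gamma(1+\delta)\Gamma(L)]$ has a vanishing factor $1+\delta/(-\delta)=0$. The whole point of the separate lemma is to reorganize the computation so as to step around this degeneracy.

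Concretely, I would keep the identity $1+\tfrac{\nu_i-\nu_j}{L-i+j}=\bigl(1+\tfrac{\delta}{L-i+j}\bigr)\bigl(1+\tfrac{\nu_i-\nu_j-\delta}{L-i+j+\delta}\bigr)$ and the factorization $Z=Y_1Y_2e^{R_\delta}$ of Lemma~\ref{TL-Z}, now with all double products over $1\le j<i\le\ell$. Grouping equal $m=i-j$ and setting $n=L-m$, the $\delta$-factor becomes $W=\prod_{n=1-\delta}^{L-1}\bigl[(1+\delta/n)e^{-\delta/n}\bigr]^{\,n+\delta}$, whose lower limit $n=1-\delta$ lies just above the forbidden value $n=-\delta$; a Stirling estimate gives $\log W=-\tfrac{\delta^2}{2}\log L+\mathrm{const}$. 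The bulk of $Y_1$ together with the mixed factor of $Y_2$ and with $R_\delta$ should again assemble into $\exp\bigl(-\!\int\!\!\int[(\nu(q)-\nu(k)-\delta(q-k)/2\pi)/(4\sin\tfrac{q-k}{2})]^2\bigr)$, exactly as for $\delta\ge0$.

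The main obstacle is that, unlike the $\delta\ge0$ case, the factor $Y_1$ is no longer finite. In the quadratic expansion used in Lemma~\ref{TL-Z} the neglected terms carry powers of $\cot\tfrac{\pi m}{L}$ and $1/\sin\tfrac{\pi m}{L}$; near the upper edge $m\to\ell$ the argument $\tfrac{\pi m}{L}$ sits a finite distance $\sim|\delta|/L$ below $\pi$, so the effective expansion parameter $\sim(\nu_i-\nu_j)/[\pi(L-m)]$ becomes of order unity and the truncation fails. I would isolate this $O(|\delta|)$-wide edge window and sum it exactly, writing its contribution through ratios of Gamma functions with shifted arguments in the spirit of Lemmas~\ref{lemma2} and \ref{lemma2p5} and then applying Stirling. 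I expect this edge resummation to supply a factor $L^{+\delta^2}$, so that together with the $L^{-\delta^2/2}$ from $W$ the net power flips to the advertised $L^{+\delta^2/2}$; the same step should generate the Barnes factor $1/G(1-\delta)$ (in place of $G(1+\delta)$) and the constants $(2\pi)^{-(\delta^2+\delta)/2}$ and $e^{\delta/2}$, and should account for the absence of the boundary term $F(\pi)$ that is present for $\delta\ge0$. Matching these edge constants against the Barnes $G$-function is the delicate bookkeeping on which the proof turns.

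Finally, I would test the whole scheme on the linear phase shift $\nu(q)=\delta(q+\pi)/(2\pi)$, for which $\nu(q)-\nu(k)-\delta(q-k)/(2\pi)\equiv0$ and the double integral drops out, reducing the claim to $Z\approx L^{\delta^2/2}(2\pi)^{-(\delta^2+\delta)/2}e^{\delta/2}/G(1-\delta)$. In this solvable case $Z=\prod_{m=1}^{\ell-1}[\sin(\tfrac{\pi m}{L}\tfrac{L-\delta}{L})/\sin\tfrac{\pi m}{L}]^{\ell-m}$, the antisymmetry of $t(1-t)\cot(\pi t)$ about $t=\tfrac12$ cancels the would-be $O(L)$ (exponential-in-$L$) contribution, and one is left to match the coefficient of $\log L$ and the finite constant directly; confirming that the edge window produces $L^{+\delta^2}$ and the Barnes factor here is the cleanest way to fix the normalization before passing to general $\nu$.
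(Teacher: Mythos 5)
Your strategy---rerunning the proof of Lemma~\ref{TL-Z} directly for $\delta<0$ and repairing it at the edge---is genuinely different from the paper's. The paper instead writes $Z=Z_1/Z_2$ with $Z_1=\prod\sin\frac{\pi}{L}(i-j-\nu_i+\nu_j)\big/\sin\frac{\pi(i-j)}{\ell}$ and $Z_2=\prod\sin\frac{\pi(i-j)}{L}\big/\sin\frac{\pi(i-j)}{\ell}$: after the rescaling $\frac{\pi}{L}(i-j-\nu_i+\nu_j)=\frac{\pi}{\ell}(i-j-[\nu_\delta]_i+[\nu_\delta]_j)$ the factor $Z_1$ is exactly the $\delta=0$ case of Lemma~\ref{TL-Z} on an $\ell$-site lattice for the zero-winding function $\nu_\delta(q)=\nu(q)-\delta(q+\pi)/(2\pi)$, giving only the double integral, while $Z_2$ is the \emph{positive}-winding case of the same lemma for the linear phase shift $|\delta|q/(2\pi)$, for which the double integral vanishes and $F(\pi)=-|\delta|\log(2\pi)$, giving $G(1-\delta)(2\pi)^{(\delta^2+\delta)/2}e^{-\delta/2}\ell^{-\delta^2/2}$. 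The singular objects you worry about ($\Gamma(1+\delta)$, $W_0(\delta)$, a Barnes function at negative argument) then never appear.

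As written, your proposal has a genuine gap, and the specific mechanism you offer to close it would fail. Everything that distinguishes the $\delta<0$ answer from the $\delta\ge0$ one---the sign flip $L^{-\delta^2/2}\to L^{+\delta^2/2}$, the replacement $G(1+\delta)\to 1/G(1-\delta)$, the constants $(2\pi)^{-(\delta^2+\delta)/2}e^{\delta/2}$, the disappearance of $F(\pi)$---is deferred to an ``edge resummation'' whose outcome you only conjecture. But the window where your quadratic truncation breaks down, $L-(i-j)=O(1)$, contains only $O(\delta^2)$ pairs $(i,j)$, each contributing an $O(1)$ ratio of sines (e.g.\ for $i-j=\ell-1$ the factor is $\approx 1/(1-\delta)$); a bounded number of bounded factors cannot produce a power of $L$, let alone $L^{+\delta^2}$. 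In reality the extra logarithm accumulates over \emph{all} scales $1-\delta\le L-(i-j)\lesssim L$: in your own test case $\nu(q)=\delta(q+\pi)/(2\pi)$ the summand $(\ell-m)\log\bigl[\sin\bigl(\tfrac{\pi m}{L}(1-\tfrac{\delta}{L})\bigr)\big/\sin\tfrac{\pi m}{L}\bigr]$ behaves like $\delta+\tfrac{\delta^2}{2n}+O(n^{-2})$ with $n=L-m$, and it is the harmonic sum of $\tfrac{\delta^2}{2n}$ from $n=1-\delta$ up to $n\sim L$ that produces $\tfrac{\delta^2}{2}\log L$, not an $O(|\delta|)$-wide boundary layer. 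So the decomposition ``bulk as before plus a finite edge window'' is structurally wrong for $\delta<0$, and the Barnes, $(2\pi)$ and $e^{\delta/2}$ constants would have to be extracted from the same all-scales sums, which your plan does not carry out.
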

\begin{proof}
We present this product as a ratio $Z = Z_1/Z_2$ with 
\begin{equation}
    Z_1 =\prod\limits_{i=1}^\ell\prod\limits_{j=1}^{i-1} \frac{\sin\frac{\pi}{L}(i-j-\nu_i+\nu_j)}{\sin\frac{\pi(i-j)}{\ell}},\qquad 
    Z_2 =\prod\limits_{i=1}^\ell\prod\limits_{j=1}^{i-1} \frac{\sin\frac{\pi(i-j)}{L}}{\sin\frac{\pi(i-j)}{\ell}}.
\end{equation}
We can identically transform $Z_1$ as
\begin{equation}
    Z_1 = \prod\limits_{i=1}^\ell\prod\limits_{j=1}^{i-1} \frac{\sin\frac{\pi}{\ell}(i-j-[\nu_\delta]_i+[\nu_\delta]_j)}{\sin\frac{\pi(i-j)}{\ell}},
\end{equation}
where $[\nu_\delta]_i = \nu_i (1+\delta/L) - \delta i/L$. In thermodynamic limit this expression correspond to the following function
\begin{equation}
   \nu_\delta(q)=\nu(q) - \delta \frac{\pi +q}{2\pi}.
    \end{equation}
This function has zero winding number, so applying the previous lemma, we obtain
\begin{equation}\label{appz1}
    Z_1 = \exp\left(-\int\limits_{-\pi}^\pi dq \int\limits_{-\pi}^\pi dk \left[\frac{\nu(q)-\nu(k)-\delta (q-k)/(2\pi)}{4\sin \frac{q-k}{2}}\right]^2\right).
\end{equation}
Similarly, we can evaluate $Z_2$. We present it as 
\begin{equation}\label{appz2}
    Z_2  = \prod\limits_{i=1}^\ell\prod\limits_{j=1}^{i-1} \frac{\sin\frac{\pi}{\ell}\left(i-j+ \frac{\delta(i-j)}{L}\right)}{\sin\frac{\pi(i-j)}{\ell}}.
\end{equation}
This corresponds to the positive phase shift $\nu(q) = -\delta q/(2\pi) = |\delta |q /(2\pi)$, and allows us to use previous lemma once again and obtain
\begin{equation}\label{appz3}
    Z_2 = \frac{G(1-\delta)(2\pi)^{(\delta^2+\delta)/2}e^{-\delta/2}}{\ell^{\delta^2/2}}.
\end{equation}
\end{proof}
Here we used that $F(\pi)= - |\delta|\log(2\pi)$ for  $\nu(q) = |\delta |q /(2\pi)$ (see Eq. \eqref{fqa}). Finally, Eqs. \eqref{appz1} and \eqref{appz3} immediately lead to the statement of the lemma. 

\section{Orthogonality catastrophe on the lattice} 
\label{app:oc}

Here using results from Appendix~\ref{Lemma} we evaluate the overlaps in Eq. \eqref{overlap1}.

\subsection{Winding number $\delta =1$} 

For $\delta=1$ there exist $L+1$ solutions of Eq. \eqref{eqk}
\begin{equation}\label{kFF3}
	k_j = \frac{2\pi}{L} \left(
	- \frac{L+1}{2} + j - \nu_j 
	\right),\qquad \nu_j =\nu(k_j), \qquad j=1,2,\ldots,L+1.
\end{equation}
We use all of them in Eq. \eqref{overlap1} and set $\textbf{q} = \{q_1,\dots q_L\}$ with
\begin{equation}\label{qqqq}
q_j = \frac{2\pi}{L} \left(
- \frac{L+1}{2} + j
\right), \qquad j=1,2,\ldots,L.
\end{equation}
To evaluate Eq. \eqref{overlap1} in thermodynamic limit $L\to \infty$, we first we transform identically the determinant \eqref{detD} as 
\begin{equation}\label{dd11}
    (\det D)^2 = \prod\limits_{i>j}^L \frac{\sin^2 \frac{k_i-k_j}{2}}{\sin^2\frac{q_i-q_j}{2}} \times \prod\limits_{j=1}^L \frac{\sin^2\frac{k_{L+1}-k_j}{2}}{\sin^2\frac{k_{L+1}-q_j}{2}} \times\prod\limits_{i=1}^L \frac{\prod_{j \neq i}^L\sin^2\frac{q_i-q_j}{2}}{\prod\limits_{j=1}^L\sin^2\frac{k_i-q_j}{2}}.
\end{equation}
We analyze this expression term by term. 
The last product can be written down using Eqs. \eqref{qFF1} and \eqref{kFF1} as
\begin{equation}\label{dd1}
    \frac{\prod_{j \neq i}^L\sin^2\frac{q_i-q_j}{2}}{\prod\limits_{j=1}^L\sin^2\frac{k_i-q_j}{2}} = \frac{1}{\sin^2 \frac{\pi \nu_i}{L}} \prod\limits_{j=1}^{L-1}\frac{\sin^2 \frac{\pi j}{L}}{\sin^2 \frac{\pi \left(j-\nu_i\right)}{L}} = \frac{L^2}{\sin^2\pi \nu_i}.
\end{equation}
In the last step, we used Lemma \eqref{lemma1}. The next product can be evaluated employing similar transformations and using Lemma \eqref{lemma2}, namely
\begin{equation}\label{ff2}
    \prod\limits_{j=1}^L \frac{\sin^2\frac{k_{L+1}-k_j}{2}}{\sin^2\frac{k_{L+1}-q_j}{2}} =
   \frac{\sin^2 \frac{\pi \delta}{L}}{\sin^2 \frac{\pi \nu_{+}}{L} } \prod\limits_{j=1}^{L-1}\frac{\sin^2 \frac{\pi}{L}\left(j - \nu_{+}+\nu_{L+1-j}\right)}{\sin^2\frac{\pi j}{L}}
    \prod\limits_{j=1}^{L-1}\frac{\sin^2\frac{\pi j}{L}}{\sin^2 \frac{\pi\left(j - \nu_{+}\right)}{L}}\approx \frac{\pi^2L^2}{\sin^2 \pi \nu_{+}} \exp\left(\int\limits_{-\pi}^\pi dq f_1(q)\right),
\end{equation}
where $\nu_+=\nu_{L+1}$ and $\delta = \nu_+ - \nu_1 =\nu(\pi) - \nu(-\pi)=1$ and 
\begin{equation}
    f_1(q) = \frac{2 }{q-\pi} + (\nu(\pi)-\nu(-q)) \tan \frac{q}{2}.
\end{equation}
Notice that 
\begin{equation}\label{ffFF}
	\int\limits_{-\pi}^\pi dq f_1(q)  = 2 F(\pi) = 2F(-\pi)
\end{equation}
with $F(q)$ defined in Eq. \eqref{fqa}. 
Contrary to the expression \eqref{dd1}, Eq. \eqref{ff2} is asymptotic as $L\to \infty$. Finally, the first double product  in Eq.~\eqref{dd11} can be evaluated using Lemma \eqref{TL-Z}.
\begin{equation}\label{z00}
    \prod\limits_{i>j}^L \frac{\sin^2 \frac{k_i-k_j}{2}}{\sin^2\frac{q_i-q_j}{2}}  = 
    \prod\limits_{i=1}^L\prod\limits_{j=1}^{i-1} \frac{\sin^2\frac{\pi}{L}(i-j-\nu_i+\nu_j)}{\sin^2\frac{\pi(i-j)}{L}}  \approx \frac{\mathcal{A}^2}{L}.
\end{equation}
Where $A$ is defined in Eq. \eqref{adelta}. 
The rest of the product in Eq. \eqref{overlap1} can be evaluated for generic $\delta$ 
\begin{equation}\label{pre1}
    \prod\limits_{i=1}^{L+1}\left(1+\frac{2\pi}{L}\nu'(k_i)\right) \approx \exp\left( \int\limits_{-\pi}^\pi \nu'(q) dq \right) = e^{\delta},
    \end{equation}
    \begin{equation}\label{pre2}
        \prod\limits_{i=1}^{L} e^{g(k_i)-g(q_i)} \approx 
    \exp\left( -\int\limits_{-\pi}^\pi g'(q) \nu(q) dq \right) = 
    \exp\left( 2\pi i\int\limits_{-\pi}^\pi \frac{\nu(q)\nu'(q)}{e^{2\pi i \nu(q)}-1} dq \right) = \left(1-e^{-2\pi i \nu_+}\right)^\delta,
\end{equation}
where in the last part we have used the relation between $g(q)$ and $\nu(q)$ 
Eq. \eqref{gnu} and assumed that $\nu(k)$ has a non-vanishing imaginary part. Combining all factors together in Eq. \eqref{overlap1} we obtain
\begin{equation}\label{constd0}
	|\langle {\bf k}|{\bf q}\rangle|^2 = 4\pi^2 \mathcal{A}^2 e^{2F(\pi)-1}=
	 \exp\left(-\frac{1}{2} \int\limits_{-\pi}^\pi dq\int\limits_{-\pi}^\pi dk \left[\frac{\nu(q)-\nu(k) - (q-k)/2\pi}{2\sin\frac{q-k}{2}}\right]^2\right).
\end{equation}

\subsection{Winding number $\delta=2$}
\label{delta2}

For $\delta>1$ computation of the overlaps goes in the similar manner as in the previous section. Namely, first we consider overlap with the set $\tilde{\textbf{k}}={k_1,\dots k_{L+1}}$ with $k_j$ defined in Eq. \eqref{kFF3}. There instead of Eq. \eqref{ff2} we will have 
\begin{equation}
    \prod\limits_{j=1}^L \frac{\sin^2\frac{k_{L+1}-k_j}{2}}{\sin^2\frac{k_{L+1}-q_j}{2}} = \frac{L^2}{\sin^2(\pi \nu_+)}\frac{\pi^2 L^{2\delta-2}}{\Gamma(\delta)^2}e^{2F(\pi)},
\end{equation}
with $F(\pi)$ defined in Eq. \eqref{fqa}. Further, Eq. \eqref{z00} we will replaced accordingly to Lemma \eqref{TL-Z}
\begin{equation}
   \prod\limits_{i>j}^L \frac{\sin^2 \frac{k_i-k_j}{2}}{\sin^2\frac{q_i-q_j}{2}}  = 
    \prod\limits_{i=1}^L\prod\limits_{j=1}^{i-1} \frac{\sin^2\frac{\pi}{L}(i-j-\nu_i+\nu_j)}{\sin^2\frac{\pi(i-j)}{L}}  \approx \frac{\mathcal{A}^2}{L^{\delta^2}}.
\end{equation}
Taking into account Eqs. \eqref{pre1} and \eqref{pre2}, for the corresponding function $g(k)$ (see Eq.  \eqref{gnew} ) we find the thermodynamic form for the overlap 
\begin{equation}\label{ktilde}
    |\langle \tilde{\textbf{k}}| \textbf{q} \rangle |^2 = \frac{G(\delta)^2}{L^{(\delta-1)^2}} \frac{(1-e^{2\pi i \nu_+})^{\delta-1}}{(2\pi)^{(\delta-1)(\delta+2)}}e^{-2F(\pi)(\delta-1)}
    \exp\left(-\frac{1}{2} \int\limits_{-\pi}^\pi dq\int\limits_{-\pi}^\pi dk \left[\frac{\nu(q)-\nu(k) - \delta(q-k)/2\pi}{2\sin\frac{q-k}{2}}\right]^2\right).
\end{equation}
The overlaps for other sets $\textbf{k}$ can be obtained from this one. We further focus on $\delta=2$, in this case there are exacly $L+2$ sets $\textbf{k}$ parametrized by the omission of one of the solutions of Eq. \eqref{eqk}, namely 
\begin{equation}
    \textbf{k}^{(a)}  = \{k_1,\ldots,k_{a-1},k_{a+1},\ldots,k_{L+2}\},\qquad a=1,2,\ldots,L+2.
\end{equation}
With this notations $ \textbf{k}^{(L+2)}= \tilde{\textbf{k}}$. Now let us consider ratio of the 
excited overlap 
\begin{equation}
    \frac{|\langle \textbf{k}^{(a)} | \textbf{q} \rangle |^2}{|\langle \tilde{\textbf{k}}| \textbf{q} \rangle |^2}= e^{g(\pi)-g(k_a)} \frac{\prod\limits_{j=1}^{L+1}\sin^2\frac{k_{L+2}-k_j}{2}}{\prod\limits_{j\neq a}^{L+2}\sin^2\frac{k_{a}-k_j}{2}} = e^{g(\pi)-g(k_a)} \frac{\sin^2 \frac{\pi}{L}\sin^2 \frac{2\pi}{L}\prod\limits_{j=1}^{L-1}\sin^2 \frac{\pi}{L}(j-\nu_{j+2}+\nu_+)}{\prod\limits_{j=1}^{a-1} \sin^2\frac{\pi(j-\nu_a+\nu_{a-j})}{L}\prod\limits_{j=1}^{L+2-a}\sin^2\frac{\pi(j-\nu_{j+a}+\nu_a)}{L}}.
\end{equation}
Using Lemma \eqref{lemma2} for $a\sim L$, $L-a\sim L$ we obtain 
\begin{equation}
        \frac{|\langle \textbf{k}^{(a)} | \textbf{q} \rangle |^2}{|\langle \tilde{\textbf{k}}| \textbf{q} \rangle |^2}= (2\pi)^4 \exp\left[2F(\pi)+\dashint\limits_{-\pi}^{\pi}\left(\nu(q)-\frac{q}{\pi}\right)\cot\frac{q-k_a}{2}dq\right]. 
\end{equation}
Combining this result with Eq. \eqref{ktilde} for $\delta=2$, the overlap can be written as
\begin{equation}
    |\langle \textbf{k}^{(a)} | \textbf{q} \rangle |^2 =-\frac{e^{2\pi i \nu(k)}-1}{L} 
    \exp\left[\dashint\limits_{-\pi}^{\pi}\left(\nu(q)-\frac{q}{\pi}\right)\cot\frac{q-k_a}{2}dq-\frac{1}{2} \int\limits_{-\pi}^\pi dq\int\limits_{-\pi}^\pi dk \left(\frac{\nu(q)-\nu(k) - (q-k)/\pi}{2\sin\frac{q-k}{2}}\right)^2\right].
\end{equation}
\subsection{Winding number $\delta=0$}
\label{app:c2}

Let us study thermodynamic limit of the overlap \eqref{overlap1} in the case $N=L-1$, which is especially useful for $\delta =0$. 
Below, however, for the sake of generality, we will keep $\delta \ge 0$. Our goal is to evaluate $Z_a$ defined via
\begin{equation}
    |\langle {\bf k}|\mathbf{q}^{(a)}\rangle|^2 \equiv e^{g(q_a)} Z_a.
\end{equation}
We use notations \eqref{kFF1} and \eqref{qFF1} to label the momenta and Eq. \eqref{qaFF} for $\mathbf{q}^{(a)}$ .
Let $\det D^{(a)}$ be the determinant in Eq. \eqref{detD} that corresponds to the set $\mathbf{q}^{(a)}$. It explicitly reads as 
\begin{equation}\label{detDafactor}
\det D^{(a)}=
\frac{\prod\limits_{i>j}^{L}\sin \frac{k_i-k_j}{2}
\prod\limits_{\substack{i>j\\ i,j\ne a}}^{L}\sin \frac{q_j-q_i}{2}}
{\prod\limits_{i=1}^{L}\prod\limits_{\substack{j=1\\ j\ne a}}^{L}\sin \frac{k_i-q_j}{2}}.
\end{equation}
We can present it identically as 
\begin{equation}\label{detDW1W2}
    \prod_{i=1}^L \left(\frac{\sin \pi \nu_i}{L} \right)^2  (\det D^{(a)})^2= 
    \prod\limits_{i=1}^L\prod\limits_{j=1}^{i-1} \frac{\sin^2\frac{k_i-k_j}{2}}{\sin^2\frac{q_i-q_j}{2}} 
    \times \prod_{i=1}^L \left(\frac{\sin \pi \nu_i}{L} \right)^2  \frac{\prod_{j \neq i}^L\sin^2\frac{q_i-q_j}{2}}{\prod\limits_{j=1}^L\sin^2\frac{k_i-q_j}{2}} \times \sin^2 \frac{\pi \nu_a}{L}\prod\limits_{j\neq a} \frac{\sin^2\frac{k_j-q_a}{2}}{\sin^2\frac{q_j-q_a}{2}} 
\end{equation}
The last part of this product is nothing but $\mathcal{Z}_a$ in Eq. \eqref{zaa}, the middle part is equal to 1 due to to Lemma \eqref{lemma1}, while the first part  can evaluated with Lemma \eqref{TL-Z} and gives $\mathcal{A}^2/L^{\delta^2}$. Overall we have\footnote{Recall that $\nu_a\equiv \nu(q_a)$.}
\begin{equation}
	\prod_{i=1}^L \left(\frac{\sin \pi \nu_i}{L} \right)^2  (\det D^{(a)})^2 \approx  \frac{\mathcal{A}^2e^{2F(q_a)}}{L^{\delta^2-2\delta+2}}
	\sin^2(\pi \nu_a)\left[\frac{\Gamma(L-a+1-\nu_a)\Gamma(a+\nu_a)}{\Gamma(L-a+1-\nu_-)\Gamma(a+\nu_+)}\right]^2,
\end{equation}
where $F(q_a)$ is given by Eq. \eqref{fqa}.

Taking into account Eqs. \eqref{pre1} and \eqref{pre2}, we obtain 
\begin{equation}
	Z_a = -4(1-e^{-2\pi i \nu_+})^{\delta}\frac{\mathcal{A}^2e^{2F(q_a)-\delta}}{L^{(\delta-1)^2}}
	\sin^2(\pi \nu_a)\left[\frac{\Gamma(L-a+1-\nu_a)\Gamma(a+\nu_a)}{\Gamma(L-a+1-\nu_-)\Gamma(a+\nu_+)}\right]^2.
\end{equation}
For $\delta = 0 $ we can rewrite this expression as
\begin{equation}
	Z_a = \frac{A[q_a]}{L}  \left[\frac{\Gamma(L-a+1-\nu_a)\Gamma(a+\nu_a)}{\Gamma(L-a+1-\nu_+)\Gamma(a+\nu_+)}\right]^2 \left(\frac{\pi+q_a}{\pi-q_a}\right)^{2\nu_+-2\nu_a},
\end{equation}
\begin{equation}\label{A0}
	A[q_a]= -4 \sin^2(\pi \nu_a)\exp\left(
	-\frac{1}{2} \int\limits_{-\pi}^\pi dq\int\limits_{-\pi}^\pi dk \left[\frac{\nu(q)-\nu(k)}{2\sin\frac{q-k}{2}}\right]^2
	- \dashint\limits_{-\pi}^{\pi}\nu(q)\cot\frac{q- q_a}{2} dq
	\right) ,
\end{equation}
where the integral is understood as the principal value. 

For $\delta = 1 $  we can rewrite this expression as
\begin{equation}
	Z_a = 4 \sin^2(\pi \nu_a)(e^{-2\pi i \nu_+}-1) \mathcal{A}^2e^{2F(q_a)-1}\left[\frac{\Gamma(L-a+1-\nu_a)\Gamma(a+\nu_a)}{\Gamma(L-a+2-\nu_+)\Gamma(a+\nu_+)}\right]^2 .
\end{equation}
Using expression \eqref{ffFF} and \eqref{adelta} we obtain 
\begin{equation}\label{Za111}
	Z_a = \frac{\sin^2(\pi \nu_a)}{\pi^2}(e^{-2\pi i \nu_+}-1) |\langle {\bf k}|{\bf q}\rangle|^2 e^{2F(q_a)-2F(\pi)}\left[\frac{\Gamma(L-a+1-\nu_a)\Gamma(a+\nu_a)}{\Gamma(L-a+2-\nu_+)\Gamma(a+\nu_+)}\right]^2 
\end{equation}
with $|\langle {\bf k}|{\bf q}\rangle|^2$ given by Eq. \eqref{constd0}. 

\subsection{Negative winding number $\delta<0$}
\label{app:xx}

Following Sec. \eqref{sec:negD} we fix $\delta = 1-n$ with $n\in \mathds{Z}_{\ge}$, $\ell = L+\delta$,  the set $\textbf{k}=\{k_1,\dots k_\ell\}$ is given as 
\begin{equation}
    k_i = \frac{2\pi}{L} \left(-\frac{L+1}{2}+i -\nu_i\right),\qquad i=1,2,\dots \ell,
\end{equation}
the set $\textbf{q}^{a_1,\dots a_n}$ is obtained from the complete set $\textbf{q}$ in Eq. \eqref{qqqq} by the omission of the ``particle'' at position $q_{a_i}$
\begin{equation}
 \textbf{q}^{a_1,\dots a_n}=\{q_1,\dots \hat{q}_{a_1},\dots \hat{q}_{a_n}, \dots q_L\}.   
\end{equation}
The determinant \eqref{detD} in \eqref{overlap1} after certain restructuring of the factors and employing Lemma \eqref{lemma1} reads 
\begin{multline}\label{177}
    \prod_{i=1}^\ell \left(\frac{\sin \pi \nu_i}{L} \right)^2  (\det D)^2 = 
\prod_{i=1}^\ell  \frac{\prod\limits_{j=1}^L\sin^2\frac{k_i-q_j}{2}}{\prod_{j \neq i}^L\sin^2\frac{q_i-q_j}{2}}\times 
\frac{\prod\limits_{i>j}^{\ell}\sin^2    \frac{k_i-k_j}{2}\prod\limits_{\substack{i>j\\ i,j\ne a_1,\ldots,a_n}}^{L}\sin^2 \frac{q_j-q_i}{2}}{\prod\limits_{i=1}^{\ell}\prod\limits_{\substack{j=1\\ j\ne a_1,\ldots,a_n}}^{L}\sin^2 \frac{k_i-q_j}{2}} \\
    = \frac{\prod\limits_{i>j}^{\ell}\sin^2    \frac{k_i-k_j}{2}}{\prod\limits_{i>j}^{\ell}\sin^2    \frac{q_i-q_j}{2}}\times \frac{\prod\limits_{i>j}^{\ell}\sin^2    \frac{q_i-q_j}{2}\prod\limits_{i>j}^{L}\sin^2 \frac{q_j-q_i}{2}}{\prod\limits_{i=1}^{\ell}\prod\limits_{\substack{j=1\\ j\ne i}}^{L}\sin^2 \frac{q_i-q_j}{2}}\times \prod\limits_{i>j}^n  \sin^2 \frac{q_{a_i}-q_{a_j}}{2} \times \prod\limits_{i=1}^{n}\tilde{\mathcal{Z}}_{a_i}
\end{multline}
with
\begin{equation}\label{Ztilde}
    \tilde{\mathcal{Z}}_a =  \frac{\prod\limits_{i=1}^\ell \sin^2\frac{k_i-q_a}{2}}{\prod\limits_{i\neq a}^L\sin^2\frac{q_i-q_a}{2}}.
\end{equation}
The first factor in this expression can be evaluated via Lemma \eqref{negD} \begin{equation}
    \frac{\prod\limits_{i>j}^{\ell}\sin^2    \frac{k_i-k_j}{2}}{\prod\limits_{i>j}^{\ell}\sin^2    \frac{q_i-q_j}{2}} = \frac{L^{\delta^2}(2\pi)^{-(\delta^2+\delta)}e^{\delta}}{G(1-\delta)^2}\exp\left(-\frac{1}{2}\int\limits_{-\pi}^\pi dq \int\limits_{-\pi}^\pi dk \left[\frac{\nu(q)-\nu(k)-\delta (q-k)/(2\pi)}{2\sin \frac{q-k}{2}}\right]^2\right).
\end{equation}
The second factor reads 
\begin{multline} 
\frac{\prod\limits_{i>j}^{\ell}\sin^2    \frac{q_i-q_j}{2}\prod\limits_{i>j}^{L}\sin^2 \frac{q_j-q_i}{2}}{\prod\limits_{i=1}^{\ell}\prod\limits_{\substack{j=1\\ j\ne i}}^{L}\sin^2 \frac{q_i-q_j}{2}} = \prod\limits_{i>j>\ell}^{L}\sin^2\frac{q_i-q_j}{2} = \prod\limits_{i=1}^{n-1}\prod\limits_{j=1}^{i-1}\sin^2\frac{\pi(i-j)}{L}\approx \left(\frac{\pi}{L}\right)^{(n-2)(n-1)} \prod\limits_{i=1}^{n-1}\prod\limits_{j=1}^{i-1}(i-j)^2 
\\   =
    \left(\frac{\pi}{L}\right)^{(n-2)(n-1)} \prod\limits_{i=1}^{n-1}\prod\limits_{j=1}^{i-1}j^2 = 
    \left(\frac{\pi}{L}\right)^{(n-2)(n-1)} \prod\limits_{i=1}^{n-1}\Gamma(i)^2 =
    \left(\frac{\pi}{L}\right)^{(n-2)(n-1)} G(n)^2=
    \left(\frac{\pi}{L}\right)^{\delta(\delta+1)} G(1-\delta)^2.
\end{multline}
We evaluate $\tilde{\mathcal{Z}}_a$ in Eq. \eqref{Ztilde} for $a\sim L$ and $L-a\sim L$. We complete $\tilde{\mathcal{Z}}_a$  to the full product $\mathcal{Z}_a$ in Eq. \eqref{zaa} and approximate it as
\begin{equation}\label{tildeZ}
    \tilde{\mathcal{Z}}_a  = \frac{\mathcal{Z}_a}{\prod\limits_{j=\ell+1}^L\sin^2\frac{\pi(j-a-\nu_j)}{L}} \approx \frac{\mathcal{Z}_a}{\left(\cos\frac{q_a}{2}\right)^{2|\delta|} }.
\end{equation}
To approximate further $\mathcal{Z}_a$ in Eq. \eqref{zaa} we notice that 
\begin{equation}
   L^\delta \Gamma\left[
		\begin{array}{c}
			a+\nu_a, \, L-a+1-\nu_a \\
			a+\nu_+,\, L-a+1-\nu_-
		\end{array}
		\right]\approx \left(\frac{a}{L}\right)^{\nu_a-\nu_+} \left(1-\frac{a}{L}\right)^{\nu_--\nu_a} = \left(\frac{\pi+q_a}{2\pi}\right)^{\nu_a-\nu_+} \left(\frac{\pi-q_a}{2\pi}\right)^{\nu_--\nu_a}.
\end{equation}
Further, we can simplify $F(q_a)$ using that in the principal value 
\begin{equation}
    \dashint\limits_{-\pi}^\pi dq q \cot\frac{q-q_a}{2} = 4\pi \log\left|2\cos\frac{q_a}{2}\right|.
\end{equation}
So thermodynamic limit for $\tilde{\mathcal{Z}}_a$ reads 
\begin{equation}
    \tilde{\mathcal{Z}}_a = 4^{|\delta|}\frac{\sin^2(\pi\nu_a)}{L^2} \exp\left[-\dashint\limits_{-\pi}^\pi dq \left(\nu(q) - \delta\frac{q}{2\pi}\right)\cot\frac{q-q_a}{2}\right].
\end{equation}
The remaining factors in Eq. \eqref{overlap1} can be evaluated with the help of Eqs. \eqref{pre1} and \eqref{pre2}
\begin{equation}
    \frac{\prod\limits_{i=1}^\ell e^{g(k_i)}\prod\limits_{q_i \in  \textbf{q}^{a_1,\dots a_n}} e^{-g(q_{i})}}{\prod\limits_{i=1}^\ell \left(1 + \frac{2\pi}{L}\nu'(k_i)\right)} = 
    \frac{\prod\limits_{i=1}^\ell e^{g(k_i)-g(q_i)}\prod\limits^{L}_{i=\ell+1} e^{-g(q_i)} \prod \limits_{i=1}^ne^{g(q_{a_i})}}{\prod\limits_{i=1}^\ell \left(1 + \frac{2\pi}{L}\nu'(k_i)\right)}= (-1)^\delta e^{-\delta} \prod \limits_{i=1}^ne^{g(q_{a_i})}.
\end{equation}
Finally, the overlap \eqref{overlap1} in the thermodynamic limit can be written as 
\begin{equation}\label{o1}
    |\langle \textbf{k}| \textbf{q}^{a_1,\dots a_n}\rangle|^2 =\exp\left(-\frac{1}{2}\int\limits_{-\pi}^\pi dq \int\limits_{-\pi}^\pi dk \left[\frac{\nu(q)-\nu(k)-\delta (q-k)/(2\pi)}{2\sin \frac{q-k}{2}}\right]^2\right) \prod\limits_{i>j}^n  \left(2\sin \frac{q_{a_i}-q_{a_j}}{2}\right)^2  \prod\limits_{i=1}^n \mathcal{Y}_{a_i}  
\end{equation}
with
\begin{equation}\label{o2}
    \mathcal{Y}_a = -4\frac{\sin^2(\pi \nu_a)}{L} \exp\left[g(q_a)-\dashint\limits_{-\pi}^\pi dq \left(\nu(q) - \delta\frac{q}{2\pi}\right)\cot\frac{q-q_a}{2}\right].
\end{equation}

\subsection{Overlaps for $\tau_0$} 
\label{app:ux}

Now let us consider how overlaps defined in Eq. \eqref{pq} scale with the system size for $\delta \le 0$. Similarly, to the previous sections we can present solutions of Eq. \eqref{eqp} as
\begin{equation}
    p_i = \frac{2\pi}{L} \left( -\frac{L+1}{2} + j -\omega_j\right),\qquad j =1,\dots ,\ell  = L + \delta.
\end{equation}
We use maximally allows set for $\textbf{p}$, namely
\begin{equation}
    \textbf{p} = \{p_1,\dots p_\ell\} 
\end{equation}
and states $\textbf{q}$ are parametrized by the set of $n=|\delta|$ holes as previously \begin{equation}
 \textbf{q}^{a_1,\dots a_n}=\{q_1,\dots \hat{q}_{a_1},\dots \hat{q}_{a_n}, \dots q_L\}.   
\end{equation}
Similar to Eq. \eqref{177} using Lemma \eqref{lemma1} the overlap \eqref{pq}
can be presented as 
\begin{multline}
   |\langle \textbf{p}| \textbf{q}^{a_1,\dots a_n}\rangle|^2 =\frac{\prod\limits_{i=1}^\ell e^{g(p_i)-g(q_i)}\prod \limits_{i=1}^ne^{g(q_{a_i})-g(\pi)}}{\prod\limits_{i=1}^\ell \left(1 + \frac{2\pi}{L}\omega'(p_i)\right)}\frac{\prod\limits_{i>j}^{\ell}\sin^2    \frac{p_i-p_j}{2}}{\prod\limits_{i>j}^{\ell}\sin^2    \frac{q_i-q_j}{2}} \\ \times \frac{\prod\limits_{i>j}^{\ell}\sin^2    \frac{q_i-q_j}{2}\prod\limits_{i>j}^{L}\sin^2 \frac{q_j-q_i}{2}}{\prod\limits_{i=1}^{\ell}\prod\limits_{\substack{j=1\\ j\ne i}}^{L}\sin^2 \frac{q_i-q_j}{2}}\times \prod\limits_{i>j}^n  \sin^2 \frac{q_{a_i}-q_{a_j}}{2} \times \prod\limits_{k=1}^{n}\frac{\prod\limits_{i=1}^\ell \sin^2\frac{p_i-q_{a_k}}{2}}{\prod\limits_{i\neq a_k}^L\sin^2\frac{q_i-q_{a_k}}{2}}.
\end{multline}
This way, using formulas from the previous subsection \eqref{app:xx}, we see that overlap  $ |\langle \textbf{p}| \textbf{q}^{a_1,\dots a_n}\rangle|^2$ is identical to Eqs. \eqref{o1}, \eqref{o2}, upon the identification $\nu\to \omega$ and $\delta$ to be changed from  by $\nu(\pi)-\nu(-\pi) \to \omega(\pi)-\omega(-\pi)$.

\bibliography{1D}

\providecommand{\noopsort}[1]{}\providecommand{\singleletter}[1]{#1}%
\begin{thebibliography}{10}
\providecommand{\url}[1]{\texttt{#1}}
\providecommand{\urlprefix}{URL }
\expandafter\ifx\csname urlstyle\endcsname\relax
  \providecommand{\doi}[1]{doi:\discretionary{}{}{}#1}\else
  \providecommand{\doi}{doi:\discretionary{}{}{}\begingroup
  \urlstyle{rm}\Url}\fi
\providecommand{\eprint}[2][]{\url{#2}}

\bibitem{Gaudin_2009}
M.~Gaudin and J.-S. Caux,
\newblock \emph{{The Bethe Wavefunction}},
\newblock Cambridge University Press,
\newblock \doi{10.1017/cbo9781107053885},
\newblock \urlprefix\url{https://doi.org/10.1017%2Fcbo9781107053885} (2009).

\bibitem{Takahashi_1999}
M.~Takahashi,
\newblock \emph{{Thermodynamics of One-Dimensional Solvable Models}},
\newblock Cambridge University Press,
\newblock \doi{10.1017/cbo9780511524332},
\newblock \urlprefix\url{https://doi.org/10.1017%2Fcbo9780511524332} (1999).

\bibitem{Slavnov_1989}
N.~A. Slavnov,
\newblock \emph{{Calculation of scalar products of wave functions and form
  factors in the framework of the alcebraic Bethe ansatz}},
\newblock Theoretical and Mathematical Physics \textbf{79}(2), 502 (1989),
\newblock \doi{10.1007/bf01016531}.

\bibitem{Slavnov_2007}
N.~A. Slavnov,
\newblock \emph{{The algebraic Bethe ansatz and quantum integrable systems}},
\newblock Russian Mathematical Surveys \textbf{62}(4), 727 (2007),
\newblock \doi{10.1070/rm2007v062n04abeh004430}.

\bibitem{Pakuliak_2018}
S.~Pakuliak, E.~Ragoucy and N.~Slavnov,
\newblock \emph{{Nested Algebraic Bethe Ansatz in integrable models: recent
  results}},
\newblock {SciPost} Physics Lecture Notes  (2018),
\newblock \doi{10.21468/scipostphyslectnotes.6}.

\bibitem{Slavnov_2020}
N.~Slavnov,
\newblock \emph{{Introduction to the nested algebraic Bethe ansatz}},
\newblock {SciPost} Physics Lecture Notes  (2020),
\newblock \doi{10.21468/scipostphyslectnotes.19}.

\bibitem{Caux_2009}
J.-S. Caux,
\newblock \emph{{Correlation functions of integrable models: A description of
  the {ABACUS} algorithm}},
\newblock Journal of Mathematical Physics \textbf{50}(9), 095214 (2009),
\newblock \doi{10.1063/1.3216474}.

\bibitem{Giamarchi_2003}
T.~Giamarchi,
\newblock \emph{Quantum Physics in One Dimension},
\newblock Oxford University Press,
\newblock \doi{10.1093/acprof:oso/9780198525004.001.0001},
\newblock
  \urlprefix\url{https://doi.org/10.1093%2Facprof%3Aoso%2F9780198525004.001.0001}
  (2003).

\bibitem{PhysRevB.84.045408}
A.~Shashi, L.~I. Glazman, J.-S. Caux and A.~Imambekov,
\newblock \emph{{Nonuniversal prefactors in the correlation functions of
  one-dimensional quantum liquids}},
\newblock Phys. Rev. B \textbf{84}, 045408 (2011),
\newblock \doi{10.1103/PhysRevB.84.045408}.

\bibitem{PhysRevB.85.155136}
A.~Shashi, M.~Panfil, J.-S. Caux and A.~Imambekov,
\newblock \emph{{Exact prefactors in static and dynamic correlation functions
  of one-dimensional quantum integrable models: Applications to the
  Calogero-Sutherland, Lieb-Liniger, and $XXZ$ models}},
\newblock Phys. Rev. B \textbf{85}, 155136 (2012),
\newblock \doi{10.1103/PhysRevB.85.155136}.

\bibitem{1742-5468-2011-12-P12010}
N.~Kitanine, K.~K. Kozlowski, J.~M. Maillet, N.~A. Slavnov and V.~Terras,
\newblock \emph{A form factor approach to the asymptotic behavior of
  correlation functions in critical models},
\newblock Journal of Statistical Mechanics: Theory and Experiment
  \textbf{2011}(12), P12010 (2011),
\newblock \doi{10.1088/1742-5468/2011/12/P12010}.

\bibitem{1742-5468-2012-09-P09001}
N.~Kitanine, K.~K. Kozlowski, J.~M. Maillet, N.~A. Slavnov and V.~Terras,
\newblock \emph{Form factor approach to dynamical correlation functions in
  critical models},
\newblock Journal of Statistical Mechanics: Theory and Experiment
  \textbf{2012}(09), P09001 (2012),
\newblock \doi{10.1088/1742-5468/2012/09/P09001}.

\bibitem{Kozlowski_2015}
K.~K. Kozlowski and J.~M. Maillet,
\newblock \emph{Microscopic approach to a class of 1d quantum critical models},
\newblock Journal of Physics A: Mathematical and Theoretical \textbf{48}(48),
  484004 (2015),
\newblock \doi{10.1088/1751-8113/48/48/484004}.

\bibitem{Imambekov_2009}
A.~Imambekov and L.~I. Glazman,
\newblock \emph{{Universal Theory of Nonlinear Luttinger Liquids}},
\newblock Science \textbf{323}(5911), 228 (2009),
\newblock \doi{10.1126/science.1165403}.

\bibitem{RevModPhys.84.1253}
A.~Imambekov, T.~L. Schmidt and L.~I. Glazman,
\newblock \emph{{One-dimensional quantum liquids: Beyond the Luttinger liquid
  paradigm}},
\newblock Rev. Mod. Phys. \textbf{84}, 1253 (2012),
\newblock \doi{10.1103/RevModPhys.84.1253}.

\bibitem{10.21468/SciPostPhys.7.4.047}
L.~Markhof, M.~Pletyukhov and V.~Meden,
\newblock \emph{{Investigating the roots of the nonlinear Luttinger liquid
  phenomenology}},
\newblock SciPost Phys. \textbf{7}, 47 (2019),
\newblock \doi{10.21468/SciPostPhys.7.4.047}.

\bibitem{Panfil2013}
M.~Panfil and J.-S. Caux,
\newblock \emph{{Finite-temperature correlations in the Lieb-Liniger
  one-dimensional Bose gas}},
\newblock Phys. Rev. A \textbf{89}, 033605 (2014),
\newblock \doi{10.1103/PhysRevA.89.033605}.

\bibitem{Caux_2016}
J.-S. Caux,
\newblock \emph{{The Quench Action}},
\newblock Journal of Statistical Mechanics: Theory and Experiment
  \textbf{2016}(6), 064006 (2016),
\newblock \doi{10.1088/1742-5468/2016/06/064006}.

\bibitem{TPrice}
T.~Price,
\newblock \emph{{Long time thermal asymptotics of nonlinear Luttinger liquid
  from inverse scattering}},
\newblock arXiv preprint arXiv:1708.02170  (2017).

\bibitem{Karrasch_2015}
C.~Karrasch, R.~G. Pereira and J.~Sirker,
\newblock \emph{{Low temperature dynamics of nonlinear Luttinger liquids}},
\newblock New Journal of Physics \textbf{17}(10), 103003 (2015),
\newblock \doi{10.1088/1367-2630/17/10/103003}.

\bibitem{10.21468/SciPostPhysLectNotes.16}
F.~Göhmann,
\newblock \emph{{Statistical mechanics of integrable quantum spin systems}},
\newblock SciPost Phys. Lect. Notes p.~16 (2020),
\newblock \doi{10.21468/SciPostPhysLectNotes.16}.

\bibitem{Dugave_2013}
M.~Dugave, F.~Göhmann and K.~K. Kozlowski,
\newblock \emph{{Thermal form factors of {theXXZchain} and the large-distance
  asymptotics of its temperature dependent correlation functions}},
\newblock Journal of Statistical Mechanics: Theory and Experiment
  \textbf{2013}(07), P07010 (2013),
\newblock \doi{10.1088/1742-5468/2013/07/p07010}.

\bibitem{Dugave_2014}
M.~Dugave, F.~Göhmann and K.~K. Kozlowski,
\newblock \emph{{Low-temperature large-distance asymptotics of the transversal
  two-point functions of the {XXZ} chain}},
\newblock Journal of Statistical Mechanics: Theory and Experiment
  \textbf{2014}(4), P04012 (2014),
\newblock \doi{10.1088/1742-5468/2014/04/p04012}.

\bibitem{Ghmann_2017}
F.~Göhmann, M.~Karbach, A.~Klümper, K.~K. Kozlowski and J.~Suzuki,
\newblock \emph{{Thermal form-factor approach to dynamical correlation
  functions of integrable lattice models}},
\newblock Journal of Statistical Mechanics: Theory and Experiment
  \textbf{2017}(11), 113106 (2017),
\newblock \doi{10.1088/1742-5468/aa9678}.

\bibitem{LECLAIR1999624}
A.~LeClair and G.~Mussardo,
\newblock \emph{{Finite temperature correlation functions in integrable QFT}},
\newblock Nuclear Physics B \textbf{552}(3), 624  (1999),
\newblock \doi{https://doi.org/10.1016/S0550-3213(99)00280-1}.

\bibitem{SALEUR2000602}
H.~Saleur,
\newblock \emph{{A comment on finite temperature correlations in integrable
  QFT}},
\newblock Nuclear Physics B \textbf{567}(3), 602  (2000),
\newblock \doi{https://doi.org/10.1016/S0550-3213(99)00665-3}.

\bibitem{Mussardo_2001}
G.~Mussardo,
\newblock \emph{{On the finite temperature formalism in integrable quantum
  field theories}},
\newblock Journal of Physics A: Mathematical and General \textbf{34}(36), 7399
  (2001),
\newblock \doi{10.1088/0305-4470/34/36/319}.

\bibitem{Castro_Alvaredo_2002}
O.~Castro-Alvaredo and A.~Fring,
\newblock \emph{{Finite temperature correlation functions from form factors}},
\newblock Nuclear Physics B \textbf{636}(3), 611 (2002),
\newblock \doi{10.1016/s0550-3213(02)00409-1}.

\bibitem{Doyon_2005}
B.~Doyon,
\newblock \emph{{Finite-temperature form factors in the free Majorana theory}},
\newblock Journal of Statistical Mechanics: Theory and Experiment
  \textbf{2005}(11), P11006 (2005),
\newblock \doi{10.1088/1742-5468/2005/11/p11006}.

\bibitem{Altshuler2006}
B.~Altshuler, R.~Konik and A.~Tsvelik,
\newblock \emph{{Low temperature correlation functions in integrable models:
  Derivation of the large distance and time asymptotics from the form factor
  expansion}},
\newblock Nuclear Physics B \textbf{739}(3), 311 (2006),
\newblock \doi{10.1016/j.nuclphysb.2006.01.022}.

\bibitem{Doyon2007}
B.~Doyon,
\newblock \emph{{Finite-Temperature Form Factors: a Review}},
\newblock Symmetry, Integrability and Geometry: Methods and Applications
  (2007),
\newblock \doi{10.3842/sigma.2007.011}.

\bibitem{Essler2009aa}
F.~H.~L. Essler and R.~M. Konik,
\newblock \emph{{Finite-temperature dynamical correlations in massive
  integrable quantum field theories}},
\newblock Journal of Statistical Mechanics: Theory and Experiment
  \textbf{2009}(09), P09018 (2009),
\newblock \doi{10.1088/1742-5468/2009/09/p09018}.

\bibitem{Pozsgay2010}
B.~Pozsgay and G.~Tak{\'{a}}cs,
\newblock \emph{{Form factor expansion for thermal correlators}},
\newblock Journal of Statistical Mechanics: Theory and Experiment
  \textbf{2010}(11), P11012 (2010),
\newblock \doi{10.1088/1742-5468/2010/11/p11012}.

\bibitem{DeNardis2015}
J.~D. Nardis and M.~Panfil,
\newblock \emph{{Density form factors of the 1D Bose gas for finite entropy
  states}},
\newblock Journal of Statistical Mechanics: Theory and Experiment
  \textbf{2015}(2), P02019 (2015),
\newblock \doi{10.1088/1742-5468/2015/02/p02019}.

\bibitem{SciPostPhys.1.2.015}
J.~D. Nardis and M.~Panfil,
\newblock \emph{{Exact correlations in the Lieb-Liniger model and detailed
  balance out-of-equilibrium}},
\newblock SciPost Phys. \textbf{1}, 015 (2016),
\newblock \doi{10.21468/SciPostPhys.1.2.015}.

\bibitem{DeNardis2018}
J.~D. Nardis and M.~Panfil,
\newblock \emph{{Particle-hole pairs and density{\textendash}density
  correlations in the Lieb{\textendash}Liniger model}},
\newblock Journal of Statistical Mechanics: Theory and Experiment
  \textbf{2018}(3), 033102 (2018),
\newblock \doi{10.1088/1742-5468/aab012}.

\bibitem{Panfil2020}
M.~Panfil,
\newblock \emph{{The two particle-hole pairs contribution to the dynamic
  correlation functions of quantum integrable models}},
\newblock arXiv preprint arXiv:2008.08872  (2020).

\bibitem{Ghmann2020}
F.~G\"{o}hmann, K.~K. Kozlowski and J.~Suzuki,
\newblock \emph{{High-temperature analysis of the transverse dynamical
  two-point correlation function of the {XX} quantum-spin chain}},
\newblock Journal of Mathematical Physics \textbf{61}(1), 013301 (2020),
\newblock \doi{10.1063/1.5111039}.

\bibitem{Ghmann2019a1}
F.~G\"{o}hmann, K.~K. Kozlowski, J.~Sirker and J.~Suzuki,
\newblock \emph{Equilibrium dynamics of the {XX} chain},
\newblock Physical Review B \textbf{100}(15) (2019),
\newblock \doi{10.1103/physrevb.100.155428}.

\bibitem{Ghmann2020l}
F.~G\"{o}hmann, K.~K. Kozlowski and J.~Suzuki,
\newblock \emph{{Long-time large-distance asymptotics of the transverse
  correlation functions of the {XX} chain in the spacelike regime}},
\newblock Letters in Mathematical Physics \textbf{110}(7), 1783 (2020),
\newblock \doi{10.1007/s11005-020-01276-y}.

\bibitem{10.21468/SciPostPhys.9.3.033}
E.~Granet, M.~Fagotti and F.~H.~L. Essler,
\newblock \emph{{Finite temperature and quench dynamics in the Transverse Field
  Ising Model from form factor expansions}},
\newblock SciPost Phys. \textbf{9}, 33 (2020),
\newblock \doi{10.21468/SciPostPhys.9.3.033}.

\bibitem{Etienne2}
E.~Granet and F.~H.~L. Essler,
\newblock \emph{{A systematic $1/c$-expansion of form factor sums for dynamical
  correlations in the Lieb-Liniger model}} .

\bibitem{CortsCubero2019}
A.~C. Cubero and M.~Panfil,
\newblock \emph{{Thermodynamic bootstrap program for integrable {QFT}'s: form
  factors and correlation functions at finite energy density}},
\newblock Journal of High Energy Physics \textbf{2019}(1) (2019),
\newblock \doi{10.1007/jhep01(2019)104}.

\bibitem{CortesCubero2020}
A.~C. Cubero and M.~Panfil,
\newblock \emph{{Generalized hydrodynamics regime from the thermodynamic
  bootstrap program}},
\newblock {SciPost} Physics \textbf{8}(1) (2020),
\newblock \doi{10.21468/scipostphys.8.1.004}.

\bibitem{Cubero1}
A.~C. Cubero,
\newblock \emph{{How generalized hydrodynamics time evolution arises from a
  form factor expansion}},
\newblock arXiv preprint arXiv:2001.03065  (2020).

\bibitem{PhysRevLett.78.2220}
S.~Sachdev and A.~P. Young,
\newblock \emph{{Low Temperature Relaxational Dynamics of the Ising Chain in a
  Transverse Field}},
\newblock Phys. Rev. Lett. \textbf{78}, 2220 (1997),
\newblock \doi{10.1103/PhysRevLett.78.2220}.

\bibitem{Colomo1992}
F.~Colomo, A.~Izergin, V.~Korepin and V.~Tognetti,
\newblock \emph{{Correlators in the Heisenberg {XXO} chain as Fredholm
  determinants}},
\newblock Physics Letters A \textbf{169}(4), 243 (1992),
\newblock \doi{10.1016/0375-9601(92)90452-r}.

\bibitem{Colomo1993}
F.~Colomo, A.~G. Izergin, V.~E. Korepin and V.~Tognetti,
\newblock \emph{{Temperature correlation functions in the {XX}0 Heisenberg
  chain. I}},
\newblock Theoretical and Mathematical Physics \textbf{94}(1), 11 (1993),
\newblock \doi{10.1007/bf01016992}.

\bibitem{Colomo1997}
F.~Colomo, A.~G. Izergin and V.~Tognetti,
\newblock \emph{{Correlation functions in the {XXO} Heisenberg chain and their
  relations with spectral shapes}},
\newblock Journal of Physics A: Mathematical and General \textbf{30}(2), 361
  (1997),
\newblock \doi{10.1088/0305-4470/30/2/004}.

\bibitem{Izergin_2000}
A.~G. Izergin, V.~S. Kapitonov and N.~A. Kitanin,
\newblock \emph{{Equal-time temperature correlators of the one-dimensional
  Heisenberg {XY} chain}},
\newblock Journal of Mathematical Sciences \textbf{100}(2), 2120 (2000),
\newblock \doi{10.1007/bf02675733}.

\bibitem{Bugrii2004}
A.~I. Bugrii and O.~O. Lisovyy,
\newblock \emph{{Correlation Function of the Two-Dimensional Ising Model on a
  Finite Lattice: {II}}},
\newblock Theoretical and Mathematical Physics \textbf{140}(1), 987 (2004),
\newblock \doi{10.1023/b:tamp.0000033035.90327.1f}.

\bibitem{Gehlen2008}
G.~von Gehlen, N.~Iorgov, S.~Pakuliak, V.~Shadura and Y.~Tykhyy,
\newblock \emph{{Form-factors in the
  Baxter{\textendash}Bazhanov{\textendash}Stroganov model {II}: Ising model on
  the finite lattice}},
\newblock Journal of Physics A: Mathematical and Theoretical \textbf{41}(9),
  095003 (2008),
\newblock \doi{10.1088/1751-8113/41/9/095003}.

\bibitem{Iorgov2011a}
N.~Iorgov and O.~Lisovyy,
\newblock \emph{Ising correlations and elliptic determinants},
\newblock Journal of Statistical Physics \textbf{143}(1), 33 (2011),
\newblock \doi{10.1007/s10955-011-0154-6}.

\bibitem{Iorgov2011b}
N.~Iorgov and O.~Lisovyy,
\newblock \emph{Finite-lattice form factors in free-fermion models},
\newblock Journal of Statistical Mechanics: Theory and Experiment
  \textbf{2011}(04), P04011 (2011),
\newblock \doi{10.1088/1742-5468/2011/04/p04011}.

\bibitem{Iorgov2011}
N.~Iorgov,
\newblock \emph{Form factors of the finite quantum {XY}-chain},
\newblock Journal of Physics A: Mathematical and Theoretical \textbf{44}(33),
  335005 (2011),
\newblock \doi{10.1088/1751-8113/44/33/335005}.

\bibitem{Korepin_1990aa}
V.~E. Korepin and N.~A. Slavnov,
\newblock \emph{{The time dependent correlation function of an Impenetrable
  Bose gas as a Fredholm minor.I}},
\newblock Communications in Mathematical Physics \textbf{129}(1), 103 (1990),
\newblock \doi{10.1007/bf02096781}.

\bibitem{Korepin}
V.~E. Korepin, N.~M. Bogoliubov and A.~G. Izergin,
\newblock \emph{The quantum inverse scattering method},
\newblock In \emph{Quantum Inverse Scattering Method and Correlation
  Functions}, pp. 115--136. Cambridge University Press,
\newblock \doi{10.1017/cbo9780511628832.009},
\newblock \urlprefix\url{https://doi.org/10.1017%2Fcbo9780511628832.009}.

\bibitem{Slavnov_2010}
N.~A. Slavnov,
\newblock \emph{Integral operators with the generalized sine kernel on the real
  axis},
\newblock Theoretical and Mathematical Physics \textbf{165}(1), 1262 (2010),
\newblock \doi{10.1007/s11232-010-0108-1}.

\bibitem{Korepin_1997}
V.~Korepin and N.~Slavnov,
\newblock \emph{Time and temperature dependent correlation functions of 1d
  models of quantum statistical mechanics},
\newblock Physics Letters A \textbf{236}(3), 201 (1997),
\newblock \doi{10.1016/s0375-9601(97)00800-1}.

\bibitem{McCoy2}
E.~Barouch and B.~M. McCoy,
\newblock \emph{{Statistical Mechanics of the $XY$ Model. II. Spin-Correlation
  Functions}},
\newblock Phys. Rev. A \textbf{3}, 786 (1971),
\newblock \doi{10.1103/PhysRevA.3.786}.

\bibitem{Hartwig_1969}
R.~E. Hartwig and M.~E. Fisher,
\newblock \emph{{Asymptotic behavior of Toeplitz matrices and determinants}},
\newblock Archive for Rational Mechanics and Analysis \textbf{32}(3), 190
  (1969),
\newblock \doi{10.1007/bf00247509}.

\bibitem{751142b191c84550b1c411df826db193}
P.~Deift,
\newblock \emph{{Integrable operators}},
\newblock American Mathematical Society Translations \textbf{189}(2), 69
  (1999).

\bibitem{Lieb_1961}
E.~Lieb, T.~Schultz and D.~Mattis,
\newblock \emph{{Two soluble models of an antiferromagnetic chain}},
\newblock Annals of Physics \textbf{16}(3), 407 (1961),
\newblock \doi{10.1016/0003-4916(61)90115-4}.

\bibitem{Gamayun_2016}
O.~Gamayun, A.~G. Pronko and M.~B. Zvonarev,
\newblock \emph{{Time and temperature-dependent correlation function of an
  impurity in one-dimensional Fermi and Tonks{\textendash}Girardeau gases as a
  Fredholm determinant}},
\newblock New Journal of Physics \textbf{18}(4), 045005 (2016),
\newblock \doi{10.1088/1367-2630/18/4/045005}.

\bibitem{PhysRevLett.18.1049}
P.~W. Anderson,
\newblock \emph{Infrared catastrophe in fermi gases with local scattering
  potentials},
\newblock Phys. Rev. Lett. \textbf{18}, 1049 (1967),
\newblock \doi{10.1103/PhysRevLett.18.1049}.

\bibitem{lenard_TG_64}
A.~Lenard,
\newblock \emph{{Momentum Distribution in the Ground State of the
  One-Dimensional System of Impenetrable Bosons}},
\newblock Journal of Mathematical Physics \textbf{5}(7), 930 (1964),
\newblock \doi{10.1063/1.1704196}.

\bibitem{lenard_TG_66}
A.~Lenard,
\newblock \emph{{One-Dimensional Impenetrable Bosons in Thermal Equilibrium}},
\newblock Journal of Mathematical Physics \textbf{7}(7), 1268 (1966),
\newblock \doi{10.1063/1.1705029}.

\bibitem{Forrester2003}
P.~Forrester, N.~Frankel, T.~Garoni and N.~Witte,
\newblock \emph{{Painlev{\'{e}} Transcendent Evaluations of Finite System
  Density Matrices for 1d Impenetrable Bosons}},
\newblock Communications in Mathematical Physics \textbf{238}(1), 257 (2003),
\newblock \doi{10.1007/s00220-003-0851-3}.

\bibitem{Bornemann_2009}
F.~Bornemann,
\newblock \emph{On the numerical evaluation of fredholm determinants},
\newblock Mathematics of Computation \textbf{79}(270), 871 (2009),
\newblock \doi{10.1090/s0025-5718-09-02280-7}.

\bibitem{Katsura_1962}
S.~Katsura,
\newblock \emph{{Statistical Mechanics of the Anisotropic Linear Heisenberg
  Model}},
\newblock Phys. Rev. \textbf{127}, 1508 (1962),
\newblock \doi{10.1103/PhysRev.127.1508}.

\bibitem{Sachdev1996}
S.~Sachdev,
\newblock \emph{Universal, finite-temperature, crossover functions of the
  quantum transition in the ising chain in a transverse field},
\newblock Nuclear Physics B \textbf{464}(3), 576 (1996),
\newblock \doi{10.1016/0550-3213(95)00657-5}.

\bibitem{Szeg1915}
G.~Szeg{\H o},
\newblock \emph{{Ein Grenzwertsatz uber die Toeplitzschen Determinanten einer
  reellen positiven Funktion}},
\newblock Mathematische Annalen \textbf{76}(4), 490 (1915),
\newblock \doi{10.1007/bf01458220}.

\bibitem{grenander2001toeplitz}
U.~Grenander and G.~Szeg{\H o},
\newblock \emph{{Toeplitz Forms and Their Applications}},
\newblock AMS Chelsea Publishing Series. American Mathematical Society,
\newblock ISBN 9780821828441,
\newblock \urlprefix\url{https://books.google.nl/books?id=CFhVdL78wGcC} (2001).

\bibitem{Deift_1997}
P.~A. Deift, A.~R. Its and X.~Zhou,
\newblock \emph{{A Riemann-Hilbert Approach to Asymptotic Problems Arising in
  the Theory of Random Matrix Models, and also in the Theory of Integrable
  Statistical Mechanics}},
\newblock The Annals of Mathematics \textbf{146}(1), 149 (1997),
\newblock \doi{10.2307/2951834}.

\bibitem{PhysRevLett.106.227203}
P.~Calabrese, F.~H.~L. Essler and M.~Fagotti,
\newblock \emph{{Quantum Quench in the Transverse-Field Ising Chain}},
\newblock Phys. Rev. Lett. \textbf{106}, 227203 (2011),
\newblock \doi{10.1103/PhysRevLett.106.227203}.

\bibitem{Calabrese2012}
P.~Calabrese, F.~H.~L. Essler and M.~Fagotti,
\newblock \emph{{Quantum quench in the transverse field Ising chain: I. Time
  evolution of order parameter correlators}},
\newblock Journal of Statistical Mechanics: Theory and Experiment
  \textbf{2012}(07), P07016 (2012),
\newblock \doi{10.1088/1742-5468/2012/07/p07016}.

\bibitem{Calabrese2012a}
P.~Calabrese, F.~H.~L. Essler and M.~Fagotti,
\newblock \emph{{Quantum quenches in the transverse field Ising chain: {II}.
  Stationary state properties}},
\newblock Journal of Statistical Mechanics: Theory and Experiment
  \textbf{2012}(07), P07022 (2012),
\newblock \doi{10.1088/1742-5468/2012/07/p07022}.

\bibitem{Caux_2013}
J.-S. Caux and F.~H.~L. Essler,
\newblock \emph{Time evolution of local observables after quenching to an
  integrable model},
\newblock Physical Review Letters \textbf{110}(25) (2013),
\newblock \doi{10.1103/physrevlett.110.257203}.

\bibitem{De_Nardis_2014}
J.~D. Nardis and J.-S. Caux,
\newblock \emph{Analytical expression for a post-quench time evolution of the
  one-body density matrix of one-dimensional hard-core bosons},
\newblock Journal of Statistical Mechanics: Theory and Experiment
  \textbf{2014}(12), P12012 (2014),
\newblock \doi{10.1088/1742-5468/2014/12/p12012}.

\bibitem{PhysRevA.96.023611}
L.~Piroli and P.~Calabrese,
\newblock \emph{Exact dynamics following an interaction quench in a
  one-dimensional anyonic gas},
\newblock Phys. Rev. A \textbf{96}, 023611 (2017),
\newblock \doi{10.1103/PhysRevA.96.023611}.

\bibitem{P_u_2007}
O.~I. P{\^{a}}{\c{t}}u, V.~E. Korepin and D.~V. Averin,
\newblock \emph{{Correlation functions of one-dimensional
  Lieb{\textendash}Liniger anyons}},
\newblock Journal of Physics A: Mathematical and Theoretical \textbf{40}(50),
  14963 (2007),
\newblock \doi{10.1088/1751-8113/40/50/004}.

\bibitem{P_u_2008}
O.~I. P{\^{a}}{\c{t}}u, V.~E. Korepin and D.~V. Averin,
\newblock \emph{{One-dimensional impenetrable anyons in thermal equilibrium: I.
  Anyonic generalization of Lenard's formula}},
\newblock Journal of Physics A: Mathematical and Theoretical \textbf{41}(14),
  145006 (2008),
\newblock \doi{10.1088/1751-8113/41/14/145006}.

\bibitem{P_u_2009}
O.~I. P{\^{a}}{\c{t}}u, V.~E. Korepin and D.~V. Averin,
\newblock \emph{{Large-distance asymptotic behavior of the correlation
  functions of 1D impenetrable anyons at finite temperatures}},
\newblock {EPL} (Europhysics Letters) \textbf{86}(4), 40001 (2009),
\newblock \doi{10.1209/0295-5075/86/40001}.

\bibitem{P_u_2009a}
O.~I. P{\^{a}}{\c{t}}u, V.~E. Korepin and D.~V. Averin,
\newblock \emph{{One-dimensional impenetrable anyons in thermal equilibrium:
  {III}. Large distance asymptotics of the space correlations}},
\newblock Journal of Physics A: Mathematical and Theoretical \textbf{42}(27),
  275207 (2009),
\newblock \doi{10.1088/1751-8113/42/27/275207}.

\bibitem{P_u_2015}
O.~I. P{\^{a}}{\c{t}}u,
\newblock \emph{{Correlation functions and momentum distribution of
  one-dimensional hard-core anyons in optical lattices}},
\newblock Journal of Statistical Mechanics: Theory and Experiment
  \textbf{2015}(1), P01004 (2015),
\newblock \doi{10.1088/1742-5468/2015/01/p01004}.

\bibitem{PhysRevA.89.013609}
M.~Kormos, M.~Collura and P.~Calabrese,
\newblock \emph{Analytic results for a quantum quench from free to hard-core
  one-dimensional bosons},
\newblock Phys. Rev. A \textbf{89}, 013609 (2014),
\newblock \doi{10.1103/PhysRevA.89.013609}.

\bibitem{gamayun2015impurity}
O.~Gamayun, A.~G. Pronko and M.~B. Zvonarev,
\newblock \emph{{Impurity {Green's} function of a one-dimensional Fermi gas}},
\newblock Nuclear Physics B \textbf{892}, 83 (2015),
\newblock \doi{10.1016/j.nuclphysb.2015.01.004}.

\bibitem{Gamayun_2018}
O.~Gamayun, O.~Lychkovskiy, E.~Burovski, M.~Malcomson, V.~V. Cheianov and M.~B.
  Zvonarev,
\newblock \emph{{Impact of the Injection Protocol on an Impurity's Stationary
  State}},
\newblock Phys. Rev. Lett. \textbf{120}, 220605 (2018),
\newblock \doi{10.1103/PhysRevLett.120.220605}.

\bibitem{Gohmann2006}
F.~Gohmann and A.~Seel,
\newblock \emph{{The {XX} and Ising Limits in Integral Formulas for
  Finite-Temperature Correlation Functions of the {XXZ} Chain}},
\newblock Theoretical and Mathematical Physics \textbf{146}(1), 119 (2006),
\newblock \doi{10.1007/s11232-006-0012-x}.

\bibitem{Kitanine_2009}
N.~Kitanine, K.~K. Kozlowski, J.~M. Maillet, N.~A. Slavnov and V.~Terras,
\newblock \emph{{Riemann{\textendash}Hilbert Approach to a Generalised Sine
  Kernel and Applications}},
\newblock Communications in Mathematical Physics \textbf{291}(3), 691 (2009),
\newblock \doi{10.1007/s00220-009-0878-1}.

\bibitem{PhysRevLett.70.2357}
A.~R. Its, A.~G. Izergin, V.~E. Korepin and N.~A. Slavnov,
\newblock \emph{{Temperature Correlations of Quantum Spins}},
\newblock Phys. Rev. Lett. \textbf{70}, 2357 (1993),
\newblock \doi{10.1103/PhysRevLett.70.2357}.

\end{thebibliography}
\nolinenumbers

\end{document}